\newcommand{\cA}{\mathcal{A}}
\newcommand{\ca}{\mathcal{A}}
\newcommand{\cb}{\mathcal{B}}
\newcommand{\cf}{\mathcal{F}}
\newcommand{\scclp}{\sigma}
\newcommand{\rcclp}{r}
\newcommand{\kpot}{\mathcal{K}}
\newcommand{\at}{\tilde{a}}
\newcommand{\bt}{\tilde{b}}
\newcommand{\ct}{\tilde{c}}
\newcommand{\ddt}{\tilde{\delta}}
\newtheorem{theorem}{Theorem}
\newtheorem{lemma}{Lemma}
\newtheorem{definition}{Definition}
\newtheorem{cor}{Corollary}
\newtheorem{prop}{Proposition}
\def\be{\begin{equation}}
\def\ee{\end{equation}}
\def\bea{\begin{eqnarray}}
\def\eea{\end{eqnarray}}
\newcommand{\td}{\text{d}}
\title{ \bf{On the uniqueness of supersymmetric AdS$_5$ black holes with toric symmetry}}
\author{James Lucietti\footnote{j.lucietti@ed.ac.uk}\;, Praxitelis Ntokos\footnote{Praxitelis.Ntokos@ed.ac.uk} and Sergei G. Ovchinnikov\footnote{s.g.ovchinnikov@sms.ed.ac.uk} 
\\ \\ \small \sl School of Mathematics and Maxwell Institute for Mathematical Sciences, \\ \small \sl    University of Edinburgh, King's Buildings, Edinburgh, EH9 3JZ, UK }
\date{}
\begin{document}

\maketitle

\begin{picture}(0,0)(0,0)
\put(350, 240){}
\put(350, 225){}
\end{picture}

\begin{abstract}  We consider the classification of supersymmetric AdS$_5$ black hole solutions to minimal gauged supergravity that admit a torus symmetry.  This problem reduces to finding a class of toric K\"ahler metrics on the base space, which in symplectic coordinates are determined by a symplectic potential. We derive the general form of the symplectic potential near any component of the horizon or axis of symmetry, which determines its singular part for any black hole solution in this class,  including  possible new solutions such as black lenses  and multi-black holes.  We find that  the most general known black hole solution in this context, found by Chong, Cvetic, L\"u and Pope (CCLP), is described by a remarkably simple symplectic potential.  We prove that any supersymmetric and toric solution  that is timelike outside a smooth horizon,  with a K\"ahler base metric of Calabi type, must be the CCLP black hole solution or its near-horizon geometry.

\end{abstract}
\vskip1cm

\newpage

\tableofcontents

\newpage

\section{Introduction}

The classification of black holes in anti de Sitter spacetime (AdS) is an open problem of central importance in holography. In this context black holes are dual to states in a CFT  that should  account for the Bekenstein-Hawking entropy~\cite{Witten:1998zw}.  Recent breakthroughs starting with~\cite{Hosseini:2017mds, Cabo-Bizet:2018ehj, Choi:2018hmj, Benini:2018ywd, Zaffaroni:2019dhb} have led to a holographic derivation of the entropy of the known supersymmetric black holes in AdS$_5$~\cite{Gutowski:2004ez, Gutowski:2004yv, Chong:2005hr, Kunduri:2006ek}. Supersymmetry is crucial for performing the required CFT calculations.  Of course, a full understanding of this problem ultimately requires the complete classification of black holes in this context. Surprisingly, even the four-dimensional black hole uniqueness theorem for the Kerr black hole has not been generalised to include a cosmological constant, furthermore, it is expected that it may even fail due to super-radiant instabilities present in AdS spacetimes~\cite{Kunduri:2006qa, Cardoso:2006wa, Dias:2015rxy}. Despite these complications, one may expect that the classification of {\it supersymmetric} AdS black holes is more tractable.   

 In this paper we consider the classification of supersymmetric black hole solutions to five-dimensional minimal gauged supergravity. The most general known asymptotically AdS$_5$ black hole solution in this theory was found by Chong, Cvetic, L\"u and Pope (CCLP) and is a 4-parameter family of topologically $S^3$ black holes specified by its mass $M$, electric charge $Q$ and two angular momenta $J_1, J_2$~\cite{Chong:2005hr}. This family contains a 2-parameter supersymmetric black hole solution which is specified by two independent angular momenta $J_1, J_2$ and charge $Q$ subject to a complicated non-linear constraint  (the mass is fixed by the BPS condition).   An obvious question is whether there are other asymptotically  AdS$_5$ supersymmetric black holes in this theory.  In fact, a complete classification of the possible near-horizon geometries of such black holes has been known for some time~\cite{Kunduri:2006uh, Grover:2013hja}. This reveals that, in fact, the most general smooth near-horizon geometry with compact cross-sections, is locally isometric to that of the supersymmetric CCLP black hole. In particular, this implies that regular black rings do not exist in this theory.\footnote{It is worth noting that supersymmetric near-horizon geometries with ring topology do exist in STU gauged supergravity so black rings can't be ruled out in this more general theory~\cite{Kunduri:2007qy}.}  
 
 However, as emphasised in~\cite{Lucietti:2021bbh}, this still leaves open the possibility of black holes with lens space topology (black lens), black hole spacetimes with nontrivial 2-cycles (bubbles) or even multi-black holes, which are all known in asymptotically flat space~\cite{Kunduri:2014iga, Kunduri:2014kja, Tomizawa:2016kjh, Horowitz:2017fyg, Breunholder:2018roc}. Indeed, the classification of asymptotically flat supersymmetric black hole solutions to minimal (ungauged) supergravity is much better understood.  A complete classification of such solutions has been obtained under the assumption of a biaxial $U(1)^2$-symmetry~\cite{Breunholder:2017ubu} and even for a single axial $U(1)$-symmetry~\cite{Katona:2022bjp} that commutes with supersymmetry.\footnote{Static supersymmetric solutions are given by the Majumdar-Papapetrou multi-black holes~\cite{Lucietti:2020ryg}.} This reveals a very rich moduli space of black hole spacetimes with $S^3, S^2\times S^1$ and lens space $L(p,1)$ horizon topology and an exterior region containing 2-cycles. Strikingly, some of these solutions can have the same conserved charges as the BMPV black hole and greater entropy, raising a puzzle for the original microscopic derivation of black hole entropy in string theory~\cite{Horowitz:2017fyg, Breunholder:2018roc}.   It is therefore important to consider if AdS counterparts of these solutions exist.  As emphasised above, although black rings are ruled out in minimal gauged supergravity, there are no known  obstructions to the other topologies.

Unfortunately, even the classification of supersymmetric backgrounds in gauged supergravity is complicated.  In five-dimensional minimal supergravity, the classification of timelike supersymmetric solutions reduces to finding a K\"ahler metric  that satisfies a complicated 4th order nonlinear PDE for its curvature~\cite{Gauntlett:2003fk, Cassani:2015upa}. This prevents a local classification of solutions. A natural strategy is to seek further simplifying symmetry assumptions that are compatible with the problem at hand.   Asymptotically globally AdS$_5$ spacetimes have an $SO(4)$ rotational symmetry at infinity and therefore one may consider solutions invariant under a subgroup of this symmetry.  In a previous paper, two of the authors considered supersymmetric solutions with $SU(2)$ symmetry and showed that the Gutowski-Reall black hole, or its near-horizon geometry, is the only solution with an analytic horizon~\cite{Lucietti:2021bbh}.  

In the present paper we will consider the classification of supersymmetric solutions to five-dimensional minimal gauged supergravity that are invariant under a toric $U(1)^2$-symmetry (the maximal abelian subgroup of $SO(4)$). This is a particularly notable class since it contains the  CCLP black hole and, in fact, all explicitly known AdS black hole solutions in five-dimensions. This is the analogue of the  aforementioned classification of supersymmetric black holes in ungauged supergravity which was performed under the assumption of such a torus symmetry~\cite{Breunholder:2017ubu} . Unfortunately,  we find that the classification of supersymmetric backgrounds with a torus symmetry in gauged supergravity is a much more complicated problem: it reduces to a problem in toric K\"ahler geometry which we are unable to solve. Nevertheless, we will show that there exists a geometrically defined coordinate system which can describe any black hole solution in this class in a unified manner.  This corresponds to using symplectic coordinates (or `action-angle' coordinates) for the toric K\"ahler base, which have proven to be convenient  in the context of toric K\"ahler geometry~\cite{Abreu}.  In particular, such geometries are described by a symplectic potential which fully encodes the K\"ahler metric (it is related to the K\"ahler potential).  

It is well known that compact toric K\"ahler manifolds are characterised by  Delzant polytopes. The polytope is the image of the moment map defined by the toric symmetry. It corresponds to the orbit space under the toric symmetry and its edges correspond to where the symmetry degenerates or has fixed points. Guillemin gave an explicit formula for the canonical K\"ahler metric associated to a Delzant polytope purely in terms of the combinatorial data that defines the polytope~\cite{Guillemin}. In symplectic coordinates, it corresponds  to the symplectic potential $g= \tfrac{1}{2}\sum_A \ell_A \log \ell_A$ where $\ell_A=0$ are lines  that define the edges of the polytope~\cite{Abreu}.  Curiously, we find that the CCLP black hole is described by a remarkably simple symplectic potential that takes this canonical form, although not every line corresponds to an edge of the orbit space in this case. 

A particularly convenient feature of symplectic coordinates is that they are naturally adapted to describing the axes of symmetry and provide a natural description of the orbit space. Indeed, in symplectic coordinates the components of the axes correspond to lines and horizons to points. They are therefore analogues of the Weyl-Papapetrou coordinates for asymptotically flat electro-vacuum solutions in four and five-dimensions.  Furthermore,  by imposing smoothness at the axes of symmetry and employing the classification of smooth  supersymmetric near-horizon geometries with torus symmetry~\cite{Kunduri:2006uh},  we are able to determine the general form of the symplectic potential near any component of the axes or any horizon (see Theorem \ref{thm:gensymp} for a precise statement of this result). It turns out this fixes the singular part of the symplectic potential and is analogous to Abreu's result for compact toric K\"ahler manifolds~\cite{Abreu}.  In particular, this allows us to write down the singular part of the symplectic potential for possible new solutions such as black lenses and multi-black holes. The existence of such topologically nontrivial black hole solutions thus reduces to determining the smooth part of the symplectic potential. Unfortunately, supersymmetry dictates that the symplectic potential must satisfy a complicated non-linear 8th order PDE so this existence problem is presently out of reach.

Due to the complexity of the supersymmetry constraints it is natural to seek for extra symmetry structures that render the classification problem tractable. Curiously, it turns out that  the supersymmetric CCLP black hole has a K\"ahler base of Calabi type~\cite{Cassani:2015upa}.  K\"ahler surfaces of Calabi type naturally appear in the classification of K\"ahler surfaces that admit a hamiltonian 2-form~\cite{Apostolov2001TheGO}. We will give a self-contained definition of Calabi type in the context of toric K\"ahler surfaces in terms of a certain orthogonality property of the associated moment maps (analogous to orthotoric K\"ahler, see Definition \ref{def}). For K\"ahler base metrics of this type we are in fact able to obtain a complete classification, which is the main result of this paper.

\begin{theorem}\label{main-theorem}
Any supersymmetric toric solution to five-dimensional minimal gauged supergravity that is timelike outside a smooth horizon with compact cross-sections, with a K\"ahler base of Calabi type, is locally isometric to the CCLP black hole or its near-horizon geometry.
\end{theorem}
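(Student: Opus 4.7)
My strategy is to exploit the Calabi type assumption to reduce the supersymmetry constraint on the symplectic potential to an explicitly solvable system of ordinary differential equations, and then to pin down the solution using the universal asymptotic data at the horizon and axes provided by Theorem \ref{thm:gensymp}.

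First, I would translate the Calabi type condition of Definition \ref{def} into a concrete structural constraint on the symplectic potential. The orthogonality of the two moment maps characterising Calabi type should permit a change of variables, from the standard action coordinates on the orbit space to new Calabi coordinates $(\eta,\xi)$ in which the Hessian of the symplectic potential diagonalises and its inverse decomposes into two one-dimensional blocks, each depending on a single Calabi coordinate, multiplied by characteristic conformal factors. Under such an ansatz the K\"ahler base takes the standard local Calabi form, with a pair of structure functions $F(\eta)$ and $G(\xi)$ playing the role of the undetermined data, and any local curvature-based condition reduces to a problem in single variables.

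Next, substituting this reduced ansatz into the supersymmetry constraint, namely the 4th order nonlinear PDE for the K\"ahler base of \cite{Gauntlett:2003fk, Cassani:2015upa}, I would expect separation of variables to yield a coupled system of ODEs for $F$ and $G$ depending on a finite list of integration constants. To fix these constants I would then impose the boundary data dictated by Theorem \ref{thm:gensymp}: the axis conditions force polynomial behaviour of $F$ and $G$ at the roots corresponding to axis components, while the horizon condition fixes the near-vertex singular coefficients via the near-horizon classification of \cite{Kunduri:2006uh}. Matching the resulting candidate symplectic potentials against the explicit CCLP symplectic potential identified earlier in the paper (and its near-horizon limit) should force the two to agree.

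The main obstacle I anticipate is twofold. Technically, verifying that the Calabi ansatz genuinely separates the 8th order equation for the symplectic potential is delicate: the supersymmetry PDE is highly nonlinear and, even with the rigid Calabi structure, cross-terms arising from the conformal factors in the metric may obstruct a clean separation, so a careful choice of Calabi variables and of gauge within the $U(1)^2$-action will likely be essential. Conceptually, since the conclusion asserts only local isometry, one must check that every Delzant-type closing of the orbit space compatible with the ODE solutions and the near-horizon data collapses to the CCLP polytope or to the half-plane of its near-horizon geometry; in particular, one must rule out, within the Calabi class, any admissible polytope corresponding to a black lens or a multi-black hole configuration.
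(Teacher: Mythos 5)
Your plan diverges from what actually makes the proof work, and the divergence is a genuine gap. The central unproven step is your expectation that substituting the Calabi ansatz into the supersymmetry constraint \eqref{PDE} "separates" into a coupled system of ODEs for the two structure functions with a finite list of integration constants. For arbitrary $F$ and $G$ this is not available — even within the Calabi class the constraint remains an intractable functional equation in two variables, which is precisely why the paper does \emph{not} attempt to solve it locally. What the paper does instead is use the explicit classification of near-horizon geometries \cite{Kunduri:2006uh} in a much stronger way than your "near-vertex singular coefficients": because the horizon cross-section collapses to a point only in symplectic coordinates, while the Calabi coordinate $\eta$ remains a good coordinate along the cross-section at $\rho=0$, matching the Calabi form of the Gram matrix to the near-horizon data fixes the \emph{entire function} $G(\eta)=(1-\eta^2)\Delta_1(\eta)$ (and forces the moment-map constants $c_i=0$, Lemma \ref{lem:NHcalabi}). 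Only after $G$ is pinned down does \eqref{PDE} become an ODE problem for $F(\rho)$ alone; for $\ca^2\neq\cb^2$ the vanishing of the $\eta^2$ coefficient gives an Euler equation forcing $F=\rho^2+F_3\rho^3$, while for $\ca^2=\cb^2$ one is left with a 5th order nonlinear ODE whose uniqueness is only established under the additional assumption of an \emph{analytic} horizon, by induction on Taylor coefficients — a subtlety your proposal never addresses, and which is why the theorem carries that caveat in the enhanced-symmetry case.

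A second gap: you stop at the K\"ahler base. Matching symplectic potentials to the CCLP potential does not yet give local isometry of the five-dimensional solutions, because the 1-form $\omega$ is only determined by \eqref{dom} up to integration constants (the $\lambda_{3,0}$ and $\omega_{i,0}$ of Lemma \ref{lem:omA}), which generically introduce logarithmic terms; these must be eliminated by smoothness at the horizon (comparing with \eqref{om-NH}) and vanishing of $\omega_i$ on the corresponding axis components before one can identify the spacetime with CCLP or its near-horizon geometry. Conversely, your concluding worry about ruling out other Delzant-type closings (lenses, multi-black holes) is not needed for this local statement: once the base is of Calabi type, Lemma \ref{lem:NHcalabi} already places every horizon component at $\rho=0$ (so the horizon is connected) and Corollary \ref{cor:axis} forces the axis structure $\eta=\pm1$, so no separate polytope analysis arises.
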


We emphasise that no global assumptions are required for this result, so in particular it rules out asymptotically locally AdS$_5$ supersymmetric black holes (other than quotients of CCLP).  Furthermore, it also does not assume the horizon is connected, and therefore rules out multi-black holes in this symmetry class. In this sense our result is analogous to the uniqueness theorem obtained for $SU(2)$-invariant supersymmetric solutions~\cite{Lucietti:2021bbh}. 

The proof of Theorem \ref{main-theorem} is roughly as follows.  A toric K\"ahler metric of Calabi type is determined by two functions of single variables and in this case the supersymmetry constraint simplifies, although a general local solution is still unavailable.  However, by comparing to the general form of the near-horizon geometry~\cite{Kunduri:2006uh} one can in fact completely fix one of these functions. The supersymmetry constraint now reduces to an ODE problem which can be completely solved in the generic case.  There is a special case (which corresponds to the solution possessing  an $SU(2)$ symmetry) which leads to a 5th order ODE which was previously encountered and can be solved under the additional assumption of an analytic horizon~\cite{Lucietti:2021bbh}.\footnote{Therefore, strictly speaking, for these non-generic near-horizon geometries with enhanced symmetry, we have proven Theorem \ref{main-theorem} under the stronger assumption that the horizon is analytic.} It then turns out that the general solution is the CCLP black hole or its near-horizon geometry.

The organisation of this paper is as follows.  In Section \ref{sec:toric} we consider supersymmetric solutions to gauged supergravity with toric symmetry, introduce symplectic coordinates and the symplectic potential, derive the general form for the symplectic potential near any horizon or axis of symmetry (see Theorem \ref{thm:gensymp}), present the CCLP black hole in this formalism, and then give the singular part of the potential for the simplest possible topologically nontrivial black hole solutions (lenses and double-black holes).  In Section \ref{sec:Calabi} we restrict to supersymmetric solutions with K\"ahler bases of Calabi type and prove Theorem \ref{main-theorem}. A number of details are relegated to the appendices.

\section{Supersymmetric solutions with toric symmetry}
\label{sec:toric}

\subsection{General local classification}
The bosonic field content of five-dimensional minimal gauged supergravity is a metric $\mathbf{g}$ and a Maxwell field $F$ defined on a five-dimensional spacetime manifold $M$.
In any open region $U\subset M$ where the supersymmetric Killing vector $V$ (a spinor bilinear) is timelike, the general {\it local} form for a supersymmetric solution is~\cite{Gauntlett:2003fk} (we work in the conventions of~\cite{Gutowski:2004ez}), 
\be\label{metricform}
\mathbf{g} = -f^2( \td t+\omega)^2+ f^{-1} h  \; ,
\ee
where $V=\partial_t$,  $h$ is a K\"ahler metric on the base $B$ orthogonal to the orbits of $V$, and $f$ and $\omega$ are a function and 1-form on $B$. The Maxwell field takes the form
\be
F = \frac{\sqrt{3}}{2} \td (f( \td  t+\omega)) - \frac{1}{\sqrt{3}} G^+ - \frac{\sqrt{3}}{\ell f} X^{(1)}  \; ,  \label{maxwell}
\ee
where $G^\pm = \tfrac{1}{2}f ( \td\omega \pm \star_4 \td\omega )$, $\star_4$ is the Hodge star operator with respect to the base metric $h$, $X^{(1)}$ is the K\"ahler form and the orientation on $B$ is such that $X^{(1)}$ is anti-self dual on $B$.  Given a K\"ahler base, $f, G^+$ are completely fixed in terms of its curvatures 
\be
f^{-1} = - \frac{\ell^2}{24} R, \qquad G^+ = -\frac{\ell}{2} (\mathcal{R}- \tfrac{1}{4} X^{(1)} R)  \; ,    \label{fGp}
\ee
where $\mathcal{R}_{ab}= \tfrac{1}{2} R_{ab}^{~~cd} X^{(1)}_{cd}$ is the Ricci form.  It is worth noting that combining (\ref{maxwell}) and (\ref{fGp}) allows us to deduce the gauge field (up to a gauge transformation),
\be
 A=  \frac{\sqrt{3}}{2} f( \td  t+\omega)) + \frac{\ell}{2\sqrt{3}}P  \; ,   \label{gf}
\ee
 where $\mathcal{R}= \td P$, i.e., $P$ is the potential for the Ricci form.

Given such a K\"ahler surface there exists a basis of ASD 2-forms  $X^{(i)}$, $i=1,2,3$, where say $X^{(1)}$ is the K\"ahler form, that satisfy the quaternion algebra
\be
X^{(i)}_{ac} X^{{(j)} c}_{\phantom{(j)}~~b}= -\delta^{ij} h_{ab}+ \epsilon^{ijk} X^{(k)}_{ab}   \label{quat}
\ee
and
\be
\nabla_a X^{(2)}_{bc}= P_a X^{(3)}_{bc}, \qquad \nabla_a X^{(3)}_{bc}= -P_a X^{(2)}_{bc}  \;.  \label{covX23}
\ee
  Using this basis we may expand
\be
G^- = \frac{\ell}{2 R} \lambda_i X^{(i)}  \; ,  \label{Gm}
\ee
where $\lambda_1= \tfrac{1}{2} \nabla^2 R+\tfrac{2}{3} R_{ab}R^{ab} -\tfrac{1}{3}R^2$. The other functions $\lambda_{2,3}$ are determined by the integrability condition for
\be
\td \omega= f^{-1}( G^+ +G^-) \; ,   \label{dom}
\ee
i.e. requiring that the r.h.s is closed. This integrability condition also implies that the K\"ahler base is not free to choose: it must satisfy the following complicated 4th order PDE for its curvature~\cite{Cassani:2015upa}
\be
\nabla^2 \left(\frac{1}{2} \nabla^2 R+\frac{2}{3} R_{ab}R^{ab}- \frac{1}{3} R^2 \right)+ \nabla^a(R_{ab}\nabla^b R)=0 \; .  \label{PDE}
\ee
Conversely, given a K\"ahler metric which satisfies this equation, one can solve for $\omega$ and hence the full spacetime metric and gauge field can be reconstructed as indicated above. Thus the classification of timelike supersymmetric solutions in this theory reduces to that of K\"ahler metrics that obey this PDE.

It is also worth noting that $V$ is only defined up to constant rescalings (since the Killing spinor is).  Those act on the time coordinate and the supersymmetric data as
\begin{equation}\label{time-resc}
t\to Kt\,,\qquad\omega\to K\omega\,,\qquad h\to K^{-1}h\,,\qquad f\to K^{-1}f\,,
\end{equation}
for constant $K\neq 0$. Of course, the five-dimensional metric $\mathbf{g}$ and Maxwell field $F$ are invariant under such rescalings.

\subsection{Supersymmetric solutions with toric symmetry}
We wish to classify supersymmetric solutions $(M, \mathbf{g}, F)$ to five-dimensional minimal gauged supergravity possessing two commuting axial Killing fields. In particular we will say that a supersymmetric solution admits toric symmetry if: 
\begin{enumerate}
\item there is a torus $T\cong U(1)^2$ isometry generated by spacelike Killing fields $m_i$, $i=1,2$, both normalised to have $2\pi$ periodic orbits;  these are defined up to $m_i \to A_{i}^{~j} m_j$ where $A\in GL(2, \mathbb{Z})$;
\item  the supersymmetric Killing $V$ is complete and commutes with the $T$-symmetry, that is $[V, m_i]=0$, so the spacetime isometry group is $\mathbb{R}\times U(1)^2$;
\item 
the Maxwell field is $T$-invariant $\mathcal{L}_{m_i} F=0$; 
\item the axis defined by  $\{p\in M | \det \mathbf{g}(m_i, m_j)|_p=0 \}$
is nonempty.
\end{enumerate}

We will now deduce the constraints imposed by such a toric symmetry for timelike supersymmetric solutions, that is, in any open region $U \subset M$ where $V$ is strictly timelike. For simplicity we will also assume that $U$ is simply connected.

Under these assumptions it follows that the data $(f, h, X^{(1)})$ on the base space $B$ are all invariant under the toric symmetry, and that we can choose a gauge for $\omega$ such that it is also invariant, see~\cite[Lemma 1]{Lucietti:2021bbh}. Recall that these gauge transformations act as $\omega \to \omega +\td\lambda, \; t \to t-\lambda$ where $\lambda$ is a function on $B$, so this amounts to choosing a gauge where $\mathcal{L}_{m_i} t=0$. In particular, the $T$-symmetry is holomorphic,  that is $\mathcal{L}_{m_i} X^{(1)}=0$, which is equivalent to  $\td \iota_{m_i} X^{(1)}=0$ since $X^{(1)}$ is closed.  Now, $X^{(1)}$ is a globally defined 2-form on spacetime (defined as a spinor bilinear) and therefore $\iota_{m_i} X^{(1)}$ is a globally defined 1-form on $U$.  Therefore, we deduce the existence of functions $x_{i}$ (moment maps) on $U$ such that 
\be
\iota_{m_i} X^{(1)} = -\td x_{i}  \; .\label{moment}
\ee
This shows that the $T$-symmetry is Hamiltonian and therefore $(h, X^{(1)}, B)$ is a toric K\"ahler structure.   The moment maps $x_{i}$ define a canonical coordinate system $(x_{i},  \phi^i)$  for any toric K\"ahler structure that is adapted to the toric symmetry such that $m_i= \partial_{\phi^i}$. In terms of these coordinates~\cite{Abreu},
\bea
&&h = G^{ij}(x) \td x_{i} \td x_{j}+ G_{ij}(x) \td\phi^i \td\phi^j, \label{h_toric}  \\
&&G^{ij} = \partial^{i}\partial^{j} g \label{hessiang} \\
&& X^{(1)}= \td x_{i} \wedge \td\phi^i , \label{KahlerForm-gen}
\eea
where $g=g(x)$ is the symplectic potential, $G_{ij}$ is the matrix inverse of the Hessian $G^{ij}$ and we  have introduced the notation $\partial^{i}:=\partial/\partial{x_i}$. Observe that the symplectic potential is only defined up to a linear function of $x_i$. The coordinates $(x_{i}, \phi^i)$ are called symplectic (or Darboux) coordinates because they are adapted to the K\"ahler form (which is also a symplectic form). We give a self-contained derivation of this coordinate system in  Appendix \ref{app:symplcoords}.  These give a coordinate system $(x_i, \phi^i)$ on $B$ and hence on $U$ away from the axis.

We will now consider computing the remaining data $(f, \omega)$. For this we introduce a basis of ASD 2-forms $X^{(i)}$ where $X^{(1)}$ is the K\"ahler form given above. We find that 
\bea
&&X^{(2)}= \sqrt{\det G}\td \phi^1 \wedge \td \phi^2 - \frac{1}{ \sqrt{\det G}} \td x_{1} \wedge \td x_{2} \; , \label{eq:X2} \\
&&X^{(3)}= 2\sqrt{\det G}\,G^{i[1} \td x_{i} \wedge \td \phi^{2]}\,,  \label{eq:X3}
\eea
where $\det G := \det G_{ij}$, 
are ASD and together with $X^{(1)}$ satisfy the quaternion algebra (\ref{quat}).  Using these one can compute the potential $P$ for the Ricci form from (\ref{covX23}) which is
\be
P= P_i \td \phi^i \; , \qquad P_i=- \tfrac{1}{2} G_{ij} \partial^{j} \log \det G = - \tfrac{1}{2} \partial^{j}G_{ij} \; .
\ee
The scalar curvature is
\be
R= - \partial^{i} \partial^{j} G_{ij}  \; ,
\ee
which from (\ref{fGp}) immediately gives
\be
f^{-1}= \frac{\ell^2}{24} \partial^{i} \partial^{j} G_{ij}\; .   \label{ftoric}
\ee
It remains to solve for the 1-form $\omega$.

To this end, observe that given any closed 2-form $\Omega$ on $M$ invariant under the toric symmetry, $\iota_{m_1} \iota_{m_2} \Omega$ must be a constant. Therefore, since by assumption we have a nonempty axis,  we deduce that $\iota_{m_1} \iota_{m_2} \Omega=0$ for any closed 2-form invariant under the toric symmetry.  Thus,  contracting (\ref{maxwell}) we deduce $\iota_{m_1} \iota_ {m_2} G^+=0$ which simplifies to 
\be
\iota_{m_1} \iota_ {m_2} \star_4 \td\omega=0   \iff  m^\flat_1 \wedge m^\flat_2 \wedge \td\omega=0  \; ,
\ee
where $m_i^\flat$ is the dual of $m_i$ with respect to the base metric $h$. Thus, since $m^\flat_i = G_{ij} \td\phi^j$  this  becomes $\td\phi^1 \wedge \td\phi^2 \wedge \td\omega=0$. Writing $\omega = \alpha^i \td x_i+ \omega_i \td\phi^i$ this implies that $\alpha^i =\partial^{i} \alpha$ for some function $\alpha(x)$ on $B$. Therefore, by a gauge transformation $\omega \to \omega+ \td\lambda$ which preserves our gauge $\mathcal{L}_{m_i} \omega=0$ we may set $\alpha=0$. We have thus shown that we can always choose a gauge such that
\be
\omega = \omega_i \td\phi^i  \; ,   \label{om}
\ee
where the components $\omega_i$ are invariant under the toric symmetry and hence are functions $\omega_i=\omega_i(x)$ (observe that we can write $\omega_i = \iota_{{m_i}}\omega$ as invariants).

Now consider the equation for $\omega$ which is given by  (\ref{dom}).  It is easy to see that from the form of $\omega$ derived above (\ref{om}), i.e. the absence of $x_{i}$ components,   $\td \omega$ will not have any $x_1x_2$ components. Therefore the coefficient of $X^{(2)}$ in the expansion (\ref{Gm})  for $G^-$ must vanish,  $\lambda_2=0$.   The coefficient $\lambda_3$ must be invariant under the toric symmetry (since $X^{(3)}$ is invariant) and hence is a function $\lambda_3=\lambda_3(x)$.
The integrability condition for (\ref{dom}) now gives a linear first order PDE for $\lambda_3(x)$.  The integrability condition for this PDE is equivalent to (\ref{PDE}) in the toric class we are considering. Unfortunately, we have not found a useful way to write this integrability condition in the toric class nor a general solution for $\omega_i$.  This prevents us from giving an explicit general local solution for the toric class. 

To summarise, so far we have shown that a timelike supersymmetric toric solution in symplectic coordinates on the base takes the form
\be
\mathbf{g}= -f^2(\td t+\omega_i \td\phi^i)^2 +f^{-1} G_{ij} \td\phi^i \td\phi^j +   f^{-1} G^{ij} \td x_i \td x_j  \; ,  \label{susytoric}
\ee 
where $G^{ij}$ is given by (\ref{hessiang}) and $G_{ij}$ is its matrix inverse,  $f$ is given  by (\ref{ftoric}),  $\omega_i$ is determined by (\ref{dom}) and the symplectic potential $g$ must satisfy (\ref{PDE}). We emphasise that any such solution is determined by a single real function, the symplectic potential $g$, which must satisfy a complicated PDE determined by supersymmetry.  It is useful to record the following spacetime invariants
\be
\mathbf{g}(V,  V)= - f^2, \qquad \mathbf{g}(V, m_i )= - f^2 \omega_i, \qquad \mathbf{g}(m_i , m_j) = f^{-1} G_{ij}- f^2 \omega_i \omega_j  \; . \label{Killinginv}
 \ee
 From these we can  define the Gram matrix of Killings fields  $K_{\alpha\beta}:= \mathbf{g}(K_\alpha , K_\beta)$, where  $\alpha=0, i$, $K_0=V, K_i=m_i$, which gives 
 \be
 \det K_{\alpha\beta}= - \det G_{ij}  \; .   \label{gramK}
 \ee
 Observe that all these spacetime functions are invariant under the Killing fields.
 
 We will be interested in solutions that possess a supersymmetric  horizon, that is, a horizon that is invariant under the supersymmetric Killing field $V$, the most notable example being  an event horizon.  In what follows we will show that  the horizon and the axis have a simple description in symplectic coordinates and the singular behaviour of the symplectic potential near a horizon or the axis can be completely fixed.

 \subsection{Horizons, axis and orbit space}\label{ax-hor-orb}
 
 We are interested in spacetimes containing a black hole region. In this context $M$ will denote the domain of outer communication and the event horizon is its inner boundary. It has been shown  that under certain reasonable global assumptions the orbit space $\hat{M} := M/ (\mathbb{R}\times U(1)^2)$, where the $\mathbb{R}\times U(1)^2$ action is given by the flow of $V, m_i$,  is a 2-dimensional manifold with boundaries and corners~\cite{Hollands:2012xy, Hollands:2008fm}. The global assumptions that are made to obtain this result are that $M = \mathbb{R}\times \Sigma$, where $V$ is tangent to $\mathbb{R}$ and $\Sigma$ is a simply connected manifold \footnote{It is also assumed that the torus action is effective and has no discrete isotropy groups. The latter assumption has been removed for asymptotically flat spacetimes~\cite{Hollands:2008fm}.}.   The boundary segments of $\hat{M}$ correspond to either horizon components, or components of the axis where an integer linear combination of  the axial Killing fields $v:=v^im_i$, where $v^i\in \mathbb{Z}^2$ are coprime, vanishes. The corners correspond to fixed points of the toric symmetry (i.e. where $m_1=m_2=0$) or where the axis meets a horizon.  In order to avoid an orbifold singularity at the fixed points of the torus symmetry one must satisfy~\cite{Hollands:2008fm}
\be
\det ( \mathbf{v}, \mathbf{w})=\pm 1  \; , \label{orbifold}
\ee
where $\mathbf{v}=(v^1, v^2) \in \mathbb{Z}^2$ etc are the components of the axial vectors $v=v^i m_i$ and $w=w^i m_i$ that vanish on the adjacent axis components.

Now, as noted above the spacetime invariants (\ref{Killinginv}) are preserved by the Killing fields and hence descend to functions on the orbit space $\hat{M}$.  Furthermore, the functions $x_i$ defined by (\ref{moment}) are also preserved by the Killing fields and hence can be used as local coordinates on  the orbit space.  In fact, the orbit space inherits a metric wherever $V$ is timelike, defined by $q_{\mu\nu}:= \mathbf{g}_{\mu\nu} - K^{\alpha\beta}\mathbf{g}_{\alpha\mu}\mathbf{g}_{\beta \nu}$ where $K^{\alpha\beta}$ is the inverse matrix of $K_{\alpha\beta}$, which using (\ref{susytoric})
gives
\be
q =   f^{-1} G^{ij} \td x_i \td x_j  \; .
\ee
This gives a Riemannian metric on the orbit space if and only if $\det G_{ij}>0$ which from (\ref{gramK}) can be seen to correspond to the region away from the axis.
 
We will now describe the orbit space in terms of symplectic coordinates. We expect that it is possible to show that symplectic coordinates give a global chart on $\hat{M}$ (at least if $V$ is strictly timelike in the exterior region), although this would presumably require making certain global assumptions and in particular that the spacetime is asymptotically (locally) AdS \footnote{This would be analogous to showing that the Weyl-Papapetrou coordinates give a global chart on the interior of the orbit space for asymptotically flat stationary and (bi)axisymmetric spacetimes, see e.g.~\cite{Hollands:2012xy}.}.  We will not make any such global assumptions in this paper and hence not pursue this question here. Instead, in this subsection, we  will simply assume that a single symplectic chart covers all components of the axis, as is  indeed the case for the known solutions discussed below in section \ref{sec:ex}.
 
 We will first show that the axis has a simple description in terms of symplectic coordinates and one can fully determine the singular behaviour of the symplectic potential near the axis.

\begin{lemma} \label{lem:axis}
Consider a supersymmetric and toric solution that is timelike on a neighbourhood of a component of the axis defined by the vanishing of $v:= v^i m_i$, where $(v^i )\in \mathbb{Z}^2$ are coprime integers,  and let
\be
\ell_v(x):= v^i x_i +c_v  \; , \label{axisline}
\ee
where $c_v$ is a constant. Then:
\begin{enumerate}
\item  The axis component corresponds to a straight line $\ell_v(x)=0$ in symplectic coordinates and away from the axis
\be
\ell_v(x)>  0
\ee
\item The symplectic potential can be written as
\be
g= \frac{1}{2}\ell_v(x) \log \ell_v(x) + \tilde{g}  \; ,
\ee
where $\tilde{g}$ is smooth at $\ell_v(x)=0$.
\end{enumerate}
\end{lemma}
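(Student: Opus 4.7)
My plan is to treat the two parts separately, both building on the symplectic coordinate formalism just derived and the standard local description of toric K\"ahler geometry near a point where $v$ vanishes but the orthogonal circle action is free.

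For part 1, I would contract the moment-map identity $\iota_{m_i}X^{(1)}=-\td x_i$ against $v^i$ to obtain $\iota_v X^{(1)}=-\td\ell_v$. Since $v$ vanishes identically along the axis component, $\td\ell_v=0$ there, so $\ell_v$ is locally constant on it and $c_v$ can be chosen to make $\ell_v=0$. The line equation $\ell_v(x)=0$ in symplectic coordinates is then immediate from the definition of $\ell_v$ as a linear function of the $x_i$. For the strict inequality $\ell_v>0$ off the axis I would exploit the $GL(2,\mathbb{Z})$ freedom on the $m_i$ to reduce to the canonical case $v^i=\delta^i_1$, so that $\ell_v=x_1+c_v$ and the axis under consideration is the zero locus of $m_1=\partial_{\phi^1}$. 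Smoothness of the base metric $h$ near a regular point of this axis forces a local polar structure: there exist coordinates $(X,Y)$ on the transverse $\mathbb{R}^2$ factor with $X+iY=re^{i\phi^1}$, $r\ge 0$, in which $\partial_{\phi^1}$ acts as an ordinary rotation. The standard polar-to-symplectic calculation on $\mathbb{R}^2$ then forces $x_1=\tfrac{1}{2}r^2+\text{const}$, so $\ell_v=\tfrac{1}{2}r^2\ge 0$ with equality only at the axis.

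For part 2 I would continue with $v^i=\delta^i_1$ and compute
\begin{equation*}
\partial^{i}\partial^{j}\!\left(\tfrac{1}{2}\ell_v\log\ell_v\right)=\frac{v^iv^j}{2\ell_v}=\frac{\delta^i_1\delta^j_1}{2\ell_v}\,,
\end{equation*}
reducing the claim to showing that $G^{ij}-\tfrac{1}{2}v^iv^j/\ell_v$ extends smoothly across $\ell_v=0$. By matrix inversion this is equivalent to the statement that $G_{11}\sim 2\ell_v$ to leading order while $G_{12}$ and $G_{22}$ extend smoothly with $G_{22}$ strictly positive on the axis. To verify this I would write $h$ explicitly in the smooth chart near the axis described above, and change variables back to $(x_i,\phi^i)$ via $x_1=\tfrac{1}{2}(X^2+Y^2)+\text{const}$ together with a smooth choice of $x_2$ in the transverse directions. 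The transverse block of $h$ is smooth by construction, while the $\mathbb{R}^2$ block reproduces precisely the $1/(2\ell_v)$ singularity of $G^{11}$ and nothing else. Setting $\tilde g:=g-\tfrac{1}{2}\ell_v\log\ell_v$ and integrating the now-smooth Hessian twice, with integration constants absorbed into the linear-in-$x$ gauge freedom of the symplectic potential, then delivers the claimed decomposition.

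I expect the main obstacle to be the second step of part 2: one must justify rigorously that only $G^{11}$ carries the logarithmic singularity at $\ell_v=0$, and not, say, all entries of the Hessian. This relies on the product-like structure $\mathbb{R}^2\times(\text{transverse})$ of a toric K\"ahler manifold at a regular point of a codimension-two axis and is the direct analogue, in the present non-compact rank-one setting, of Abreu's analysis of the symplectic potential near an edge of a Delzant polytope~\cite{Abreu}. Here the argument is local around a single axis component and does not require any global polytope structure, which is essential since the solutions of interest are non-compact and need not have compact orbit space.
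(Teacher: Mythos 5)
Your overall route coincides with the paper's: contract (\ref{moment}) with $v^i$ to get $\iota_v X^{(1)}=-\td\ell_v$, fix $c_v$, use the $GL(2,\mathbb{Z})$ freedom to set $v=\partial_{\phi^1}$ and $\ell_v=x_1$, and then let smoothness of the K\"ahler base at the axis produce the $\tfrac12 x_1\log x_1$ singularity of $g$. Two small imprecisions in part 1: in a chart where $\partial_{\phi^1}$ acts as a rotation you only get $x_1=\tfrac12\,\Omega_{XY}\big|_{\mathrm{axis}}\,r^2+O(r^3)$, so the statement $x_1=\tfrac12 r^2+\mathrm{const}$ is a leading-order (or equivariant-Darboux) statement, with the coefficient fixed by K\"ahler compatibility and $2\pi$-periodicity; and the positivity of $\ell_v$ requires choosing the sign of $v$ (both $\pm v$ vanish on the same axis component), which the paper does explicitly. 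These are cosmetic and easily repaired.

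The genuine gap is the step you yourself flag at the end, and it sits at the heart of part 2. Your ``equivalence by matrix inversion'' is stated with insufficient hypotheses: if $G_{12}$ merely extends smoothly but does not vanish at $\ell_v=0$, then $\det G_{ij}\to -G_{12}^2\neq 0$ there, $G^{11}$ has no pole at all, and the claimed decomposition fails; what is actually needed is $G_{11}=\ell_v\,a$, $G_{12}=\ell_v\,b$ with $a,b$ smooth and $a|_{\ell_v=0}=2$, together with $G_{22}=c>0$ smooth, and only then does inversion give $G^{11}=\tfrac{1}{2\ell_v}+\mathrm{smooth}$ with $G^{12},G^{22}$ smooth. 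The appeal to a ``product-like structure $\mathbb{R}^2\times(\text{transverse})$'' does not supply this, because away from the axis the metric has no reason to block-diagonalize into the $(X,Y)$ plane and the transverse directions, so asserting that ``the $\mathbb{R}^2$ block reproduces precisely the $1/(2\ell_v)$ singularity and nothing else'' is exactly what remains to be proven. The paper closes this hole without any product assumption: the spacetime invariants (\ref{Killinginv}) are smooth and satisfy $\mathbf{g}(V,m_1)=0$ and $\mathbf{g}(m_1,m_i)=0$ on the axis (since $m_1=0$ there), which forces $\omega_1=O(x_1)$ and $G_{1i}=O(x_1)$ with smooth quotients; writing $x_1=r^2$ and requiring the metric to extend smoothly at $r=0$ with $2\pi$-periodic $\phi^1$ then fixes $a|_{x_1=0}=2$, after which the explicit $2\times2$ inversion and a double integration give $g=\tfrac12\ell_v\log\ell_v+\tilde g$ with $\tilde g$ smooth. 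To complete your argument, replace the product-structure heuristic by this invariants argument (or some equivalent proof that $G_{11}/\ell_v$ and $G_{12}/\ell_v$ are smooth with $G_{11}/\ell_v\to2$ on the axis).
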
 

 \begin{proof}
From  (\ref{moment}) it follows that $\iota_v X^{(1)} = - \td \ell_v$. Since $X^{(1)}$ is a non-degenerate 2-form on the base $B$,  we see that $v = v^i m_i=0$ if and only if $\td \ell_v=0$.  Thus $\ell_v$ is a constant on the axis component defined by $v=0$, and we may choose $c_v$ so that $\ell_v(x)=0$ on the axis component.  We must have $\ell_v(x) \neq 0$ away from the axis component and since this corresponds to a boundary component of the orbit space we can choose $v^i$ such that $\ell_v(x)>0$ corresponds to the interior of $\hat{M}$. 

To prove the second part we will derive the geometry near a component of the axis (\ref{axisline}). By a suitable $GL(2,\mathbb{Z})$ transformation we can always arrange $v=m_1=\partial_{\phi^1}$ and define new symplectic coordinates so that  $\ell_v(x)= x_1$. Thus using (\ref{Killinginv}) and that the inner products  $V\cdot m_1=0$ and $m_1\cdot m_i=0$ on this axis, we deduce that $\omega_1=0$ and $G_{1 i}=0$ at $x=0$ (we set $x=x_1, y=x_2$ for clarity). Since these invariants are smooth functions it follows that $G_{1i}= O(x)$ and $\omega_1= O(x)$ so in particular we can write
\be
G_{ij} \td \phi^i \td \phi^j = x a(x,y) (\td \phi^1)^2 + 2 x b(x,y) \td \phi^1 \td \phi^2 + c(x,y) (\td \phi^2)^2  \; ,
\ee
where $a, b, c$ are smooth at $x=0$ and $c>0$ at $x=0$ to ensure we are not at a fixed point of the toric symmetry (this is because $G_{22}=0$ and (\ref{Killinginv}) implies $m_2 \cdot m_2=0$ and hence $m_2=0$). It follows that  $\det G_{ij} = x \delta$ where $\delta= ac - x b^2$ and
\be
G^{ij} \td x_i \td x_j = \frac{1}{\delta} \left( \frac{c}{x} \td x^2 - 2 b \td x \td y+ a \td y^2 \right)  \; .
\ee 
Now consider  smoothness of the K\"ahler base metric (\ref{h_toric}) near $x=0$. To perform the analysis it is useful to set $x=r^2$, in terms of which the full metric reads,
\be
h =  \frac{4 c}{\delta} \td r^2+ r^2 a (\td \phi^1)^2+  \frac{1}{\delta} \left(- 4  b r \td r \td y+ a \td y^2 \right)  + 2 r^2 b \td \phi^1 \td \phi^2 + c (\td \phi^2)^2 \; .
\ee
It is then standard to verify that this metric extends to a smooth metric at $r=0$ if and only if 
\be
a|_{x=0}=2  \; ,
\ee
where recall we assume the period of $\phi^i$ is $2\pi$ (this can be seen by changing to cartesian coordinates in the $(r, \phi^1)$ plane).  It follows that as $ x\to 0$ 
\be
G^{ij} \td x_i \td x_j = \left( \frac{1}{2x}+O(1) \right)   \td x^2 +O(1) \td x \td y+ (c^{-1} + O(x) )\td y^2  \; .
\ee
Integrating for the symplectic potential gives
\be
g(x,y) = \frac{1}{2} x \log x + \tilde{g}(x,y) \; ,
\ee
where $\tilde{g}$ is smooth at $x=0$ with $\tilde{g}_{yy}= c^{-1}+O(x)$. Therefore, we have fully determined the singular behaviour of the symplectic potential near an axis of symmetry. Returning to our  original basis we  deduce that the symplectic potential takes the claimed form. 
\end{proof}

 Recall that any fixed point must occur at the intersection of two axis components, so Lemma \ref{lem:axis} immediately implies that fixed points of the toric symmetry are single points in symplectic coordinates that occur at the intersection of the two corresponding axis line components.   In fact, the singular behaviour of the symplectic potential at a fixed point arises purely from that of the two components of the axis which define it, as our next result shows.
 
\begin{lemma} \label{lem:fixed-point}
Consider a fixed point defined by the vanishing of  $v:= v^i m_i$ and  $w:= w^i m_i$, where $(v^i )\in \mathbb{Z}^2$ and $(w^i )\in \mathbb{Z}^2$ are pairs of coprime integers that satisfy \eqref{orbifold}. Let $\ell_v(x)=0$ and $\ell_w(x)=0$ denote the lines that correspond to the axis components defined by the vectors $v, w$ as  in (\ref{axisline}).
\begin{enumerate}
\item The fixed point in symplectic coordinates is given by the point defined by $\ell_v(x)=\ell_w(x)=0$.
\item The symplectic potential takes the form
\begin{equation}
g=\frac{1}{2}\ell_{v}(x)\log\ell_{v}(x)+\frac{1}{2}\ell_{w}(x)\log\ell_{w}(x)+\tilde{g}\,,
\end{equation}
where $\tilde{g}$ is smooth at both axis components and at the fixed point.  Furthermore, a neighbourhood of the fixed point in the base is diffeomorphic to $\mathbb{R}^4$.
\end{enumerate}
\end{lemma}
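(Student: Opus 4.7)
The plan is to build directly on Lemma \ref{lem:axis}. For part 1, the fixed point lies on both axis components by definition, and Lemma \ref{lem:axis} identifies each such component with the line $\ell_v(x)=0$ or $\ell_w(x)=0$ in symplectic coordinates. Since $\det(\mathbf{v},\mathbf{w})=\pm 1$ the two lines are linearly independent and meet at exactly one point, which is then the image of the fixed point under the moment map.

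For part 2, I would first exploit the $GL(2,\mathbb{Z})$ freedom in the choice of basis $\{m_i\}$, which is available precisely because $\det(\mathbf{v},\mathbf{w})=\pm 1$, to arrange $v=m_1=\partial_{\phi^1}$ and $w=m_2=\partial_{\phi^2}$ with the $2\pi$-periodic orbits preserved. After a constant translation of the moment maps I may also assume the fixed point sits at $x_1=x_2=0$, so $\ell_v=x_1$ and $\ell_w=x_2$. The task then reduces to showing that in these coordinates $g-\tfrac{1}{2}x_1\log x_1-\tfrac{1}{2}x_2\log x_2$ is smooth in a neighbourhood of the origin and that the K\"ahler base extends smoothly to $\mathbb{R}^4$ there.

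Next I would redo the smoothness analysis of the proof of Lemma \ref{lem:axis}, now at the corner. Evaluating the spacetime invariants (\ref{Killinginv}) on each axis forces $\omega_1,G_{1i}$ to vanish at $\{x_1=0\}$ and $\omega_2,G_{i2}$ at $\{x_2=0\}$; in particular $G_{12}$ vanishes on both axes, so applying the smooth division lemma twice one can write
\begin{equation}
G_{11}=x_1\, a(x_1,x_2),\qquad G_{22}=x_2\, c(x_1,x_2),\qquad G_{12}=x_1 x_2\, b(x_1,x_2),
\end{equation}
with $a,b,c$ smooth and $a(0,0),c(0,0)>0$ (strict positivity follows from the base metric being Riemannian near the fixed point). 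Introducing polar-squared coordinates $x_i=r_i^2/2$ and the associated cartesian coordinates in each $(r_i,\phi^i)$ plane, the base metric (\ref{h_toric}) extends to a smooth Riemannian metric on a neighbourhood of the origin in $\mathbb{R}^2\times\mathbb{R}^2\cong\mathbb{R}^4$ if and only if $a|_{x_1=0}=c|_{x_2=0}=2$, by the same one-dimensional argument used in Lemma \ref{lem:axis} applied in each factor; the orbifold-free condition is precisely what ensures $2\pi$ is the correct period in both planes simultaneously. Inverting $G_{ij}$ to leading order yields $G^{11}=(2x_1)^{-1}+O(1)$, $G^{22}=(2x_2)^{-1}+O(1)$, $G^{12}=O(1)$, with smooth remainders, and integrating $G^{ij}=\partial^i\partial^j g$ produces the claimed decomposition, with the linear part of $g$ (the integration constant) absorbed into $\tilde g$.

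The main obstacle is not the leading singular behaviour, which simply combines the one-variable analysis of Lemma \ref{lem:axis} in each factor, but rather verifying that the remainder $\tilde g$ is jointly smooth at the corner and not merely separately on each axis. This reduces to the joint smoothness of the entries of $G^{ij}$ near the corner and to the compatibility of mixed partials when integrating for $g$; both are consequences of the $\mathbb{R}^4$ smoothness of the base enforced by the orbifold condition $\det(\mathbf{v},\mathbf{w})=\pm 1$, which is what makes the off-diagonal behaviour $G_{12}=O(x_1 x_2)$, stronger than the naive $O(x_1)+O(x_2)$, available in the first place.
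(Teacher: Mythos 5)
Your proposal is correct and follows essentially the same route as the paper's proof: a $GL(2,\mathbb{Z})$ change of basis to set $v=m_1$, $w=m_2$ with $\ell_v=x_1$, $\ell_w=x_2$, the vanishing of the invariants \eqref{Killinginv} on each axis to get $G_{11}=x_1 a$, $G_{22}=x_2 c$, $G_{12}=x_1x_2 b$ with smooth coefficients, the polar-to-cartesian smoothness argument in each $(r_i,\phi^i)$ plane forcing $a=c=2$ on the axes (hence the $\mathbb{R}^4$ statement), and then inversion and integration of $G^{ij}$ to extract the two $\tfrac12 \ell\log\ell$ terms with a jointly smooth remainder. The only differences are cosmetic (the rescaled radial coordinate $x_i=r_i^2/2$, which leads to the same condition $a=c=2$, and the explicit appeal to the Hadamard/smooth-division lemma that the paper leaves implicit).
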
 

\begin{proof}
The proof is along the lines of the proof of the second part of Lemma  \ref{lem:axis} applied for each axis component, which also gives us the behaviour of the K\"ahler metric at the intersection point.  Since $v$ and $w$ satisfy \eqref{orbifold}, a $GL(2,\mathbb{Z})$ transformation can be used to set $v=m_1=\partial_{\phi^1}$, $w=m_2=\partial_{\phi^2}$ and define new symplectic coordinates so that $\ell_v(x)= x_1$, $\ell_w(x)= x_2$. Using again the invariants (\ref{Killinginv}) (we set again $x=x_1, y=x_2$ ) together with their smoothness at the axes,  we can now write
\begin{equation}
G_{ij}\td\phi^{i}\td\phi^{j}=x\at(x,y)(\td\phi^{1})^{2}+2xy\bt(x,y)\td\phi^{1}\td\phi^{2}+y\ct(x,y)(\td\phi^{2})^{2}\,,
\end{equation}
where $\at, \bt, \ct$ are smooth on both the lines $x=0$ and $y=0$ and the point $x=y=0$. The K\"ahler metric reads
\begin{equation}\label{h-fixed-point}
h=\frac{4\ct}{\ddt}\td r_{1}^{2}+\at r_{1}^{2}(\td\phi^{1})^{2}+\frac{4\at}{\ddt}\td r_{2}^{2}+\ct r_{2}^{2}(\td\phi^{2})^{2}-\frac{8\bt}{\ddt}r_{1}r_{2} \td r_{1}\td r_{2}+2r_{1}^{2}r_{2}^{2}\bt\td\phi^{1}\td\phi^{2}\,,
\end{equation}
where $\ddt=\at\ct-xy\bt^{2}$ and we have set $x=r_{1}^{2}$, $y=r_{2}^{2}$. The metric is free of conical singularities and smooth at $r_{1}=0$ and $r_{2}=0$ if and only if
\begin{equation}
\at|_{x=0}=\ct|_{y=0}=2\,, 
\end{equation}
and by continuity this also holds at the fixed point. Then, the K\"ahler metric \eqref{h-fixed-point} approaches the flat metric $\sum_{i=1}^2 \td r_i^2+r_i^2 (\td \phi^i)^2$  on $\mathbb{R}^{4}$ as $r_1, r_2\to 0$ with the error terms smooth at $r_i=0$ (this can be seen by converting to cartesian coordinates in the $(r_i, \phi^i)$ planes). This shows that the K\"ahler base near the fixed point is diffeomorphic to $\mathbb{R}^4$.   Furthermore, we have
\begin{equation}
G^{ij}\td x_{i}\td x_{j}=\Big(\frac{1}{2x}+O(1)\Big)\td x^{2}+\Big(\frac{1}{2y}+O(1)\Big)\td y^{2}+O(1)\td x\td y\,,
\end{equation}
where the $O(1)$ terms are smooth at $x=0$, $y=0$ and $x=y=0$. Integrating we find 
\begin{equation}
g(x,y)=\frac{1}{2}x\log x+\frac{1}{2}y\log y+\tilde{g}(x,y)\,,
\end{equation}
where $\tilde{g}$ is smooth at $x=0$, $y=0$ and $x=y=0$, 
thus completing  the proof.
\end{proof}
  
 We now show that a horizon also has a simple description in symplectic coordinates.  The event horizon of a black hole spacetime must be invariant under any Killing field and hence in particular under the supersymmetric Killing field $V$. We will therefore consider supersymmetric horizons, that is, horizons that are invariant under the supersymmetric Killing field $V$. We will assume that each connected component of the horizon possesses a cross-section $S$, that is, a spacelike submanifold everywhere transverse to $V$.  The general form of the metric near a connected component of a supersymmetric horizon can be written in Gaussian null coordinates (GNC) $(v, \lambda, y^a)$ and takes the form~\cite{Gutowski:2004ez} 
\be\label{NH-metric}
\mathbf{g} = -\lambda^2 \Delta^2 \td v^2+ 2 \td v \td \lambda+ 2\lambda h_a \td v \td y^a+ \gamma_{ab}\td y^a \td y^b  \; ,
\ee
where $V= \partial_v$ is the supersymmetric Killing field, $\lambda$ is an affine parameter for null transverse geodesics synchronised so $\lambda=0$ at the horizon, and $y^a$ are coordinates on a cross-section $S$.  Here the data $\Delta, h_a, \gamma_{ab}$ depend on $(\lambda, y^a)$ and is smooth at $\lambda=0$.  The near-horizon limit is defined by rescaling $(v, \lambda, y^a)\to (v/\epsilon, \epsilon \lambda, y^a)$ and then taking the limit $\epsilon\to 0$, resulting in a metric of the same form with $\Delta, h_a, \gamma_{ab}$  replaced by their values at $\lambda=0$,  denoted by $\Delta^{(0)}, h^{(0)}_a, \gamma^{(0)}_{ab}$, which are respectively a function, 1-form and Riemannian metric on $S$. The following result does not require the detailed form of the near-horizon geometry and is typical of extremal horizons (see e.g.~\cite{Breunholder:2017ubu} in the asymptotically flat case).

\begin{lemma}\label{lem:horizon-point}
A connected component of the horizon corresponds to a point in symplectic coordinates.
\end{lemma}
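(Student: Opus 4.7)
The plan is to show that the moment maps $x_1,x_2$ extend smoothly to the horizon and take a single value on each connected component, which is equivalent to $\td x_1,\td x_2$ pulling back to zero on $H$. Since $\td x_i=-\iota_{m_i}X^{(1)}$ and both $m_i$ and $X^{(1)}$ (a spinor bilinear) are smooth on spacetime, the 1-forms $\td x_i$ extend smoothly to $H$ and in a simply connected neighbourhood of $H$ one can integrate to obtain smooth $x_i$.

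First I would compute the spacetime inner product of the $\td x_i$ directly from (\ref{susytoric}). In the coordinates $(t,x_i,\phi^j)$ the metric is block diagonal between the $x$-block and the $(t,\phi)$-block, so inverting the $x$-block of $f^{-1}G^{ij}$ gives $\mathbf{g}^{x_ix_j}=fG_{ij}$, and hence in the timelike region
\be
\mathbf{g}(\td x_i,\td x_j)=fG_{ij}\,.
\ee
Next I would establish the horizon asymptotics in Gaussian null coordinates (\ref{NH-metric}). Since $V$ is null on $H=\{\lambda=0\}$ one has $f=O(\lambda)$, and because $m_i$ is Killing and tangent to $H$ its transverse component satisfies $m_i^\lambda=O(\lambda)$, so $\mathbf{g}(V,m_i)=O(\lambda)$. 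Eliminating $\omega_i$ between the relations (\ref{Killinginv}) one obtains
\be
G_{ij}=f\,\mathbf{g}(m_i,m_j)+\frac{\mathbf{g}(V,m_i)\mathbf{g}(V,m_j)}{f}\,,
\ee
whose two terms are both $O(\lambda)$. Hence $\mathbf{g}(\td x_i,\td x_j)=fG_{ij}=O(\lambda^2)$, so at each point of $H$ the covectors $\td x_1,\td x_2$ span a totally null subspace of the cotangent space.

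In Lorentzian signature any totally null subspace has dimension at most one, hence $\td x_1$ and $\td x_2$ are parallel and both null on $H$. Moreover $x_i$ are invariant under $V$ and $m_j$ (which follows from $\iota_VX^{(1)}=0$ and $X^{(1)}(m_i,m_j)=0$ in symplectic coordinates), so $\mathbf{g}(\td x_i,V^\flat)=\mathbf{g}(\td x_i,m_j^\flat)=0$. On $H$ the Gram matrix of $\{V^\flat,m_1^\flat,m_2^\flat\}$ has vanishing $V^\flat$-row and column, so the two-dimensional orthogonal complement of their span is spanned by $V^\flat=\td\lambda$ itself together with a spacelike covector $\xi^\flat$ dual to the unique direction in the cross-section $S$ perpendicular to both $m_1$ and $m_2$. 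Writing $\td x_i=a_iV^\flat+b_i\xi^\flat$ on $H$, the null condition together with $\mathbf{g}(\xi,\xi)>0$ forces $b_i=0$, so $\td x_i\propto\td\lambda$, which pulls back to zero on $H$. Therefore $x_1,x_2$ are constant on each connected component of the horizon (extending by continuity across the lower-dimensional loci where $m_1,m_2$ become dependent).

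The main technical obstacle is the scaling analysis of $G_{ij}$, which involves the singular ratio $f^{-1}$ multiplying invariants that individually vanish at the horizon; showing both contributions are $O(\lambda)$ uses crucially that $m_i$ is Killing and tangent to $H$. The final Lorentzian argument also relies essentially on the cross-section being spacelike.
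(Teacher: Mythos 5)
Your proof is correct, but it takes a genuinely different route from the paper. The paper works directly with the Gutowski--Reall form of the K\"ahler form in Gaussian null coordinates, $X^{(1)}=\td\lambda\wedge Z+\lambda(h\wedge Z-\Delta\star_3 Z)$, and uses closure plus $T$-invariance to integrate $\td x_i$ and obtain the expansion $x_i=\lambda Z_i+O(\lambda^2)$; this not only proves the lemma but supplies the explicit expansion \eqref{xi-NH-gen} that is used later (to derive \eqref{xi-NH-expl} and hence the near-horizon symplectic potential). You instead never touch the bilinear identity \eqref{eq:X1GNC}: you show $\mathbf{g}^{-1}(\td x_i,\td x_j)=fG_{ij}\to 0$ on $H$ and then run a Lorentzian null-geometry argument to conclude that $\td x_i$ pulls back to zero on $H$, so $x_i$ is constant on each connected component. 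This is more elementary and buys robustness (it only needs smoothness of $x_i$ up to $H$, $V(x_i)=0$, and vanishing of the Gram matrix of the $\td x_i$), but it yields less: it does not produce the $x_i=\lambda Z_i+O(\lambda^2)$ expansion needed in section \ref{sec:NH}. Two small refinements to your argument: (i) your intermediate claim that $\mathbf{g}(V,m_i)\mathbf{g}(V,m_j)/f=O(\lambda)$ tacitly assumes $f\gtrsim\lambda$, i.e.\ $\Delta$ bounded away from zero on the horizon; this can be sidestepped entirely by regrouping, since $f G_{ij}=f^2\,\mathbf{g}(m_i,m_j)+\mathbf{g}(V,m_i)\mathbf{g}(V,m_j)$ is a sum of smooth invariants each vanishing on $H$, giving $\mathbf{g}^{-1}(\td x_i,\td x_j)\to 0$ with no lower bound on $\Delta$; (ii) the final step can be simplified and the density/continuity caveat about points where $m_1,m_2$ degenerate dropped: at $\lambda=0$ the inverse GNC metric has $g^{\lambda\lambda}=g^{\lambda a}=0$ and $g^{ab}=\gamma^{ab}$ positive definite, so any covector with vanishing $\td v$-component (which $\td x_i$ has, since $V(x_i)=0$) that is null must be proportional to $\td\lambda$; hence $\td x_i\propto\td\lambda$ at \emph{every} point of $H$ without invoking orthogonality to $m_j^\flat$ or the auxiliary spacelike direction $\xi$.
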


\begin{proof}
For this we need to determine the change of coordinates between symplectic coordinates and GNC.    First note that the Killing fields $m_i$ must be tangent to the horizon and furthermore since they have closed orbits we may choose them to be tangent to a  cross-section $S$.  Then, one can always choose GNC that are also adapted to the toric Killing fields $m_i$. In particular, this implies $[m_i , \partial_\lambda ]=0$.

The most direct way to find the coordinate change is from the K\"ahler form which in GNC can be written as~\cite{Gutowski:2004yv}
\be
X^{(1)}= \td \lambda \wedge Z+ \lambda ( h \wedge Z- \Delta \star_3 Z)  \; ,  \label{eq:X1GNC}
\ee
where $Z= Z_a \td y^a$ is a unit norm 1-form and $\star_3$ is the Hodge star operator with respect to the metric $\gamma_{ab}$.  Closure of $X^{(1)}$ is equivalent to
\be
\hat{\td} Z= h \wedge Z- \Delta \star_3 Z+ \lambda\partial_\lambda(h \wedge Z- \Delta \star_3 Z)  \; ,
\ee
where $\hat{\td}$ is the exterior derivative tangent to constant $(v, \lambda)$ surfaces (on the horizon it is the exterior derivative on $S$).  Now, since $X^{(1)}$ is invariant under the toric symmetry, it follows that $Z$ is a $U(1)^2$-invariant 1-form; this can be seen explicitly by noting that
\be
Z= \iota_{\partial_\lambda} X^{(1)}
\ee
and since $[m_i, \partial_\lambda]=0$ the result immediately follows. Next, by definition of the symplectic coordinates (\ref{moment}) we find
\be
\td ( x_i - \lambda Z_i)= -\lambda \partial_\lambda Z_i \td \lambda +\lambda^2\partial_\lambda( h_i Z-h Z_i-\Delta \iota_{m_i}\star_3 Z)  \; ,
\ee
where we have used $\td Z= \hat{\td} Z+ \td \lambda \wedge \partial_\lambda Z$. Taking the $\lambda$ and $y^a$ components of this expression and integrating we find
\be\label{xi-NH-gen}
x_i= \lambda Z_i+ O(\lambda^2)  \; ,
\ee
where the higher order terms are smooth at $\lambda=0$ and we have set the integration constants to zero. This shows that a horizon corresponds to an isolated point in symplectic coordinates, which we take to be the origin.
\end{proof}
 
We are now ready to deduce the structure of the orbit space in symplectic coordinates. 
\begin{lemma} 
\label{thm:OS} Consider a supersymmetric solution with toric symmetry that is timelike on a neighbourhood outside the horizon (we allow the horizon to have multiple components).
If we have $N$ axis components defined by the lines $\ell_A(x) :=v^i_A x_i+c_A$, $A=1, \dots, N$, where $v_A:=v^i_A m_i=0$ define each axis component:
\begin{enumerate}
    \item The orbit space can be identified with the region (see Figure \ref{fig:orbit}) 
\be
\{ \ell_A(x) \geq 0 \, : \, A=1, \dots, N \} \subset \mathbb{R}^2 \; ,   \label{eq:orbit_sc}
\ee
where the boundary components $\ell_A(x)=0$ are axis components, and the fixed points and horizon components correspond to points of intersection of these lines. 

\item The symplectic potential takes the form
\be
g= \sum_{A=1}^N \frac{1}{2}\ell_A (x) \log \ell_A(x) + \tilde{g}
\ee
where $\tilde{g}$ is smooth on (\ref{eq:orbit_sc}) including at the lines $\ell_A(x)=0$ and the points of intersection corresponding to fixed points,  except possibly at the points of intersection corresponding to horizons.
\end{enumerate}
\end{lemma}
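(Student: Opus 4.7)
The plan is to combine the three preceding lemmas---on axes (Lemma \ref{lem:axis}), fixed points (Lemma \ref{lem:fixed-point}) and horizons (Lemma \ref{lem:horizon-point})---into a global statement about the orbit space in symplectic coordinates, using the standing assumption that a single symplectic chart covers $\hat{M}$.

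For part 1, I would argue as follows. The general results recalled at the beginning of Section \ref{ax-hor-orb} say that $\hat{M}$ is a two-dimensional manifold with boundaries and corners, whose boundary components are either axis or horizon components, and whose corners are either fixed points of the torus action or intersections of the axis with a horizon. By Lemma \ref{lem:axis}, each axis component is cut out by a linear equation $\ell_A(x)=0$ with the interior of $\hat{M}$ lying in $\ell_A(x)>0$. Intersecting over $A$ identifies $\hat{M}$ with the polyhedral region (\ref{eq:orbit_sc}). Lemma \ref{lem:fixed-point} identifies the corners where two axis lines meet with the fixed points of the toric action, while Lemma \ref{lem:horizon-point} shows each horizon component is a single point in symplectic coordinates; since it lies on the boundary of $\hat{M}$ and is adjacent to axis components, it must occur at a point where two axis lines intersect.

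For part 2, I would peel off the known singular contributions one at a time. Setting
\[
\tilde{g}(x) := g(x) - \sum_{A=1}^{N} \tfrac{1}{2}\,\ell_A(x)\log \ell_A(x),
\]
at any point on a single axis line $\ell_{A_0}(x)=0$ away from all other axis lines, Lemma \ref{lem:axis} says that the subtraction of the $A_0$ term leaves a smooth remainder, while the other terms $\tfrac{1}{2}\ell_A \log\ell_A$ with $A\neq A_0$ are smooth there since $\ell_A>0$. Hence $\tilde{g}$ is smooth across every axis component away from corners. At a corner corresponding to a fixed point defined by the simultaneous vanishing of $\ell_{A_0}$ and $\ell_{A_1}$, Lemma \ref{lem:fixed-point} asserts that $g - \tfrac{1}{2}(\ell_{A_0}\log\ell_{A_0}+\ell_{A_1}\log\ell_{A_1})$ is smooth, and the remaining $\ell_A \log\ell_A$ with $A\notin\{A_0,A_1\}$ are again smooth at the corner, so $\tilde{g}$ extends smoothly there. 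No such regularity statement is available at corners corresponding to horizons, which is precisely why the lemma permits singular behaviour there.

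The main obstacle is essentially packaged into the hypotheses rather than internal to the proof. The sign ambiguity in the primitive vector $v_A$ defining each axis is fixed, as in Lemma \ref{lem:axis}, by demanding $\ell_A(x)>0$ on the interior, and the restriction to a single symplectic chart is explicitly adopted in the preceding discussion rather than derived. A fully global version would require showing that the moment maps separate points of $\hat{M}$, presumably analogous to the Weyl--Papapetrou result in the asymptotically flat setting; without that, the argument must rely on the single-chart assumption to glue the local statements of Lemmas \ref{lem:axis}--\ref{lem:horizon-point} into the polyhedral description.
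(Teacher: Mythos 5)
Your part 2 argument matches the paper's (the paper simply says the remaining claims follow from Lemmas \ref{lem:axis} and \ref{lem:fixed-point}, and your subtraction argument is the obvious way to spell that out). However, in part 1 there is a genuine gap at the crucial step: you assert that each horizon component, being a boundary point of $\hat{M}$ adjacent to axis components, ``must occur at a point where two axis lines intersect.'' Nothing in Lemmas \ref{lem:axis}--\ref{lem:horizon-point} or in the general orbit-space structure results gives you this. A priori a horizon component could be adjacent to axis components on which the \emph{same} (or proportional) integer combination $v^i m_i$ vanishes --- the ring-like situation $S^1\times S^2$ --- in which case the two adjacent boundary segments would lie on a single line or on two parallel lines, and the horizon point would not be an intersection point of two distinct lines; the polyhedral picture \eqref{eq:orbit_sc} with horizons at corners would then fail. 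Lemma \ref{lem:horizon-point} only tells you the horizon is a point in symplectic coordinates; it does not tell you which, or how many, axis lines pass through it.

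The paper closes exactly this gap by invoking the classification of supersymmetric near-horizon geometries with toric symmetry \cite{Kunduri:2006uh}: cross-sections of any horizon component are topologically $S^3$ (or quotients thereof), locally isometric to the CCLP near-horizon geometry, and the toric symmetry degenerates at two poles where two \emph{distinct} (linearly independent) combinations of the axial Killing fields vanish. Hence every horizon component is intersected by two distinct axis components, and combined with Lemma \ref{lem:horizon-point} it must sit precisely at the intersection point of the two corresponding (non-parallel) axis lines. Your proof needs this near-horizon input (or some substitute ruling out parallel/identical adjacent axis directions at a horizon); as written, the identification of horizon components with corner points of the region \eqref{eq:orbit_sc} is not established. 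The rest of your argument --- fixing signs so $\ell_A>0$ on the interior, the single-chart assumption, and the smoothness bookkeeping for $\tilde{g}$ away from horizon corners --- is in line with the paper.
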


\begin{proof}
The general form of a supersymmetric near-horizon geometry with toric symmetry has been determined~\cite{Kunduri:2006uh}. It was found that  cross-sections of any component of the horizon must be topologically $S^3$ (or quotients) and the geometry is locally isometric to that of the CCLP black hole.  In particular, the toric symmetry degenerates at two points on the sphere (the poles) where different linear combinations of the axial Killing fields vanish.  Hence, any component of the horizon is intersected by two distinct axis components, and so by Lemma \ref{lem:horizon-point} must occur precisely at the point of intersection of the corresponding axis lines.  The remaining claims follow  from Lemma \ref{lem:axis} and  \ref{lem:fixed-point}.
\end{proof}

\begin{figure}[h!]
\centering
   \includegraphics[width=0.30\textwidth]{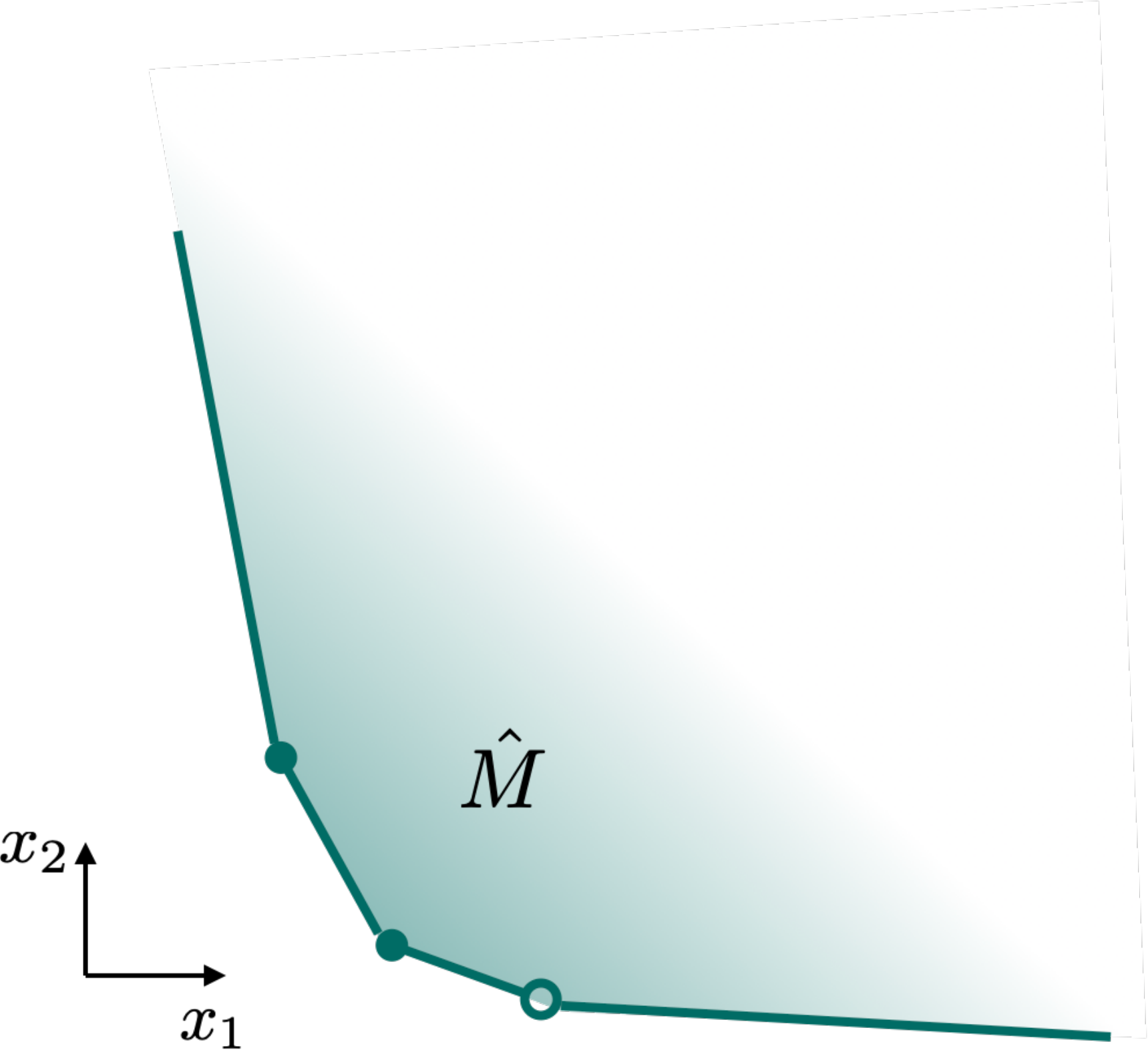}
\caption{The orbit space for a  generic toric K\"ahler base in the $x_{1}x_{2}$-plane where $x_{i}$ are moment maps associated with the axial Killing fields $m_{i}$ with $2\pi$-periodic orbits. This includes both the boundary segments and the corners which can be either horizon components (white dots) or fixed points of the torus symmetry (black dots).}
\label{fig:orbit}
\end{figure}

The above result determines the structure of the orbit space in symplectic coordinates. A schematic diagram for such orbit spaces is depicted in Figure \ref{fig:orbit}.
Furthemore, we have fully determined the behaviour of the symplectic potential near any component of the axis.  In fact we have  recovered the decomposition theorem for the symplectic potential of Abreu~\cite{Abreu} in terms of a canonical potential and a  remainder that is smooth on the axes of symmetry (note we do not assume compactness). In the next section we will use the detailed form of the general near-horizon geometry to show that the symplectic potential is similarly singular at the horizon. This will give a refinement of the second part of Lemma \ref{thm:OS}.
 
\subsection{Near-horizon geometry and general form of symplectic potential}\label{sec:NH}

In this section we will determine the general form of the symplectic potential near any connected component of a supersymmetric horizon.
The strategy is to start with the near-horizon geometry \eqref{NH-metric}, find the coordinate change between GNC and symplectic coordinates and then match to the general form for a supersymmetric toric solution (\ref{susytoric}). 

 To  this end we first note that comparing the invariants \eqref{Killinginv} yields
\be\label{omega-G-generic}
f= \lambda\Delta, \qquad \omega_i = -\frac{h_i}{\lambda \Delta^2}, \qquad G_{ij} = \lambda \Delta \left( \gamma_{ij}+ \frac{h_i h_j}{\Delta^2} \right) \; ,
\ee
where we have chosen a sign so $\Delta>0$, and inverting the matrix $G_{ij}$ we find
\be
G^{ij}= \frac{1}{\lambda \Delta} \left(\gamma^{ij}- \frac{h^i h^j}{\Delta^2+ h^k h_k} \right)  \; ,\label{Gij}
\ee
where $h^i:= \gamma^{ij} h_j$ and $\gamma^{ij}$ is the inverse matrix of $\gamma_{ij}$.

We now turn to the explicit form of the near-horizon geometry. 
The general near-horizon geometry admitting $U(1)^2$ rotational isometry and a smooth compact $S$ was determined in~\cite{Kunduri:2006uh}. We present it here in a coordinate system that also describes the special case with $SU(2)\times U(1)$ symmetry, see Appendix \ref{NH-comparison} for details.\footnote{In the notation of \cite{Kunduri:2006uh} this limit corresponds to the function $\Gamma$ being constant and therefore not being a valid coordinate.} This `unified' form of the near-horizon geometry also makes the proof of Theorem \ref{main-theorem} more transparent. The near-horizon geometry depends \emph{only} on two parameters $0< \ca^{2},\cb^{2}<1$~\footnote{Note that \eqref{kappa2} actually implies $\ca^{2},\cb^{2}<1$.} subject to 
\begin{equation}\label{kappa2}
\kappa^{2}(\ca^{2},\cb^{2})>0\,,
\end{equation}
where
\begin{equation}\label{kappa-constraint}
\kappa^{2}(\ca^{2},\cb^{2}):=-9\mathcal{A}^{4}\mathcal{B}^{4}+6\mathcal{A}^{2}\mathcal{B}^{2}(\mathcal{A}^{2}+\mathcal{B}^{2}+1)^{2}-(\mathcal{A}^{2}+\mathcal{B}^{2}+1)^{3}\Big(\mathcal{A}^{2}+\mathcal{B}^{2}-\frac{1}{3}\Big)\,.
\end{equation}
The near-horizon data explicitly reads
\begin{align}
&\Delta^{(0)}=\frac{3\kappa}{\ell\Delta_{2}(\hat{\eta})^{2}}\,, \label{NH-expl1} \\
&h^{(0)}=\frac{3\kappa\Delta_{3}(\hat{\eta})}{4\Delta_{2}(\hat{\eta})^{3}}\,\hat{\sigma}+\frac{3(\ca^{2}-\cb^{2})}{2\Delta_{2}(\hat{\eta})}\Big(\td\hat{\eta}-\frac{3\kappa}{2\Delta_{2}(\hat{\eta})^{2}}\,\hat{\tau}\Big)\,, \label{NH-expl2} \\
&\gamma^{(0)}= \frac{\ell^{2}}{12(1-\hat{\eta}^{2})\Delta_{1}(\hat{\eta})}\Big(\Delta_{2}(\hat{\eta})\,\td\hat{\eta}^{2}+\frac{3}{4}\frac{\Delta_{3}(\hat{\eta})^{2}+\kappa^{2}}{\Delta_{2}(\hat{\eta})^{2}}\,\hat{\tau}^{2}\Big) \nonumber \\
&\quad \,\, \,
+ \frac{\ell^{2}\big(4\Delta_{2}(\hat{\eta})^{3}-3\Delta_{3}(\hat{\eta})^{2}\big)}{48\Delta_{2}(\hat{\eta})^{2}}\,\hat{\sigma}^{2}+\frac{3\ell^{2}\Delta_{3}(\hat{\eta})(\ca^{2}-\cb^{2})}{8\Delta_{2}(\hat{\eta})^{2}}\,\hat{\sigma}\hat{\tau}\,, \label{NH-expl3}
\end{align}
where we have defined the 1-forms
\begin{equation}
\hat{\sigma}=\frac{1-\hat{\eta}}{\ca^{2}}\td\hat{\phi}^{1}+\frac{1+\hat{\eta}}{\cb^{2}}\td\hat{\phi}^{2}\,,\qquad\hat{\tau}=(1-\hat{\eta}^{2})\Delta_{1}(\hat{\eta})\Big(\frac{\td\hat{\phi}^{1}}{\ca^{2}}-\frac{\td\hat{\phi}^{2}}{\cb^{2}}\Big)\,,
\end{equation}
and three linear functions of $\hat{\eta}$
\begin{align}\label{Deltas}
\Delta_{1}(\hat{\eta}) & =  \frac{1+\hat{\eta}}{2}\ca^{2}+\frac{1-\hat{\eta}}{2}\cb^{2}\,,  \\
\Delta_{2}(\hat{\eta}) & =  1-\frac{1+3\hat{\eta}}{2}\ca^{2}-\frac{1-3\hat{\eta}}{2}\cb^{2}\,,\nonumber \\
\Delta_{3}(\hat{\eta}) & =  1-2\Delta_{2}(\hat{\eta})+\ca^{2}\cb^{2}-\ca^{4}-\cb^{4}\, ,   \nonumber
\end{align}
where  $(\hat{\eta}, \hat{\phi}^{i})$ are coordinates on $S$ with $-1\leq\hat{\eta}\leq1$ and ${ \hat{\phi}^{i}}\sim{ \hat{\phi}^{i}}+2\pi$ are adapted to the Killing fields $m_{i}=\partial_{ \hat{\phi}^{i}}$ and  $\Delta_1$ and $\Delta_2$ are strictly positive functions\footnote{We use hats to stress that these are coordinates valid at the horizon and they are different than the ones we will use in section \ref{sec:Calabi} to describe  the full solution (however $\partial_{ \phi^{i}}=\partial_{ \hat{\phi}^{i}}$).}. The 1-form that determines the K\"ahler form (\ref{eq:X1GNC}) is given by
\begin{equation}\label{Z0}
Z^{(0)}=\frac{\ell}{4\Delta_{2}(\hat{\eta})}\big(\kappa\,\hat{\sigma}-3(\ca^{2}-\cb^{2})\,\td\hat{\eta}\big)\,.
\end{equation}
Note that solutions with $\ca^{2}\neq\cb^{2}$ are doubly counted with the two copies related by \eqref{exchange}.  The solutions with $\ca^2=\cb^2$ have enhanced $SU(2)\times U(1)$ symmetry. 

It is now straightforward to determine the explicit near-horizon behaviour of $ \omega_i,  G_{ij}$ and $G^{ij}$. Evaluating \eqref{omega-G-generic} at leading order in $\lambda$ we find 
\begin{equation}\label{om-NH}
\omega_{i}=-\frac{\ell^{2}\Delta_{2}(\hat{\eta})}{12\kappa}\Big(\Delta_{3}(\hat{\eta})\hat{\sigma}-3(\ca^{2}-\cb^{2})\hat{\tau}\Big)_{i}\,\frac{1}{\lambda}+O(1)\,,
\end{equation}
and
\begin{equation}\label{Gdd-NH}
G_{ij}=\frac{\ell\kappa}{4\Delta_{2}(\hat{\eta})}\Big(\hat{\sigma}^{2}+\frac{\hat{\tau}^{2}}{(1-\hat{\eta}^{2})\Delta_{1}(\hat{\eta})}\Big)_{ij} \,\lambda+O(\lambda^{2})\,,
\end{equation}
with $G^{ij}$ being the inverse of \eqref{Gdd-NH}. By expressing $G^{ij}$ in the $x_{i}$ coordinates, we can extract the near-horizon behaviour of the symplectic potential. To this end, we insert \eqref{Z0} into \eqref{xi-NH-gen} to find
\begin{equation}\label{xi-NH-expl}
x_{1}=\frac{\ell\kappa}{4\Delta_{2}(\hat{\eta})}\frac{1-\hat{\eta}}{\ca^{2}}\lambda+O(\lambda^{2})\,,\qquad x_{2}=\frac{\ell\kappa}{4\Delta_{2}(\hat{\eta})}\frac{1+\hat{\eta}}{\cb^{2}}\lambda+O(\lambda^{2})\,.
\end{equation}
The inverse coordinate change to leading order is then
\begin{align}
\lambda & =\frac{2}{\ell\kappa} (\ca^{2}(1+\ca^{2}-2\cb^{2})x_{1}+\cb^{2}(1+\cb^{2}-2\ca^{2})x_{2}) +O(x^{2})\,,\nonumber \\
\hat{\eta} & =-\frac{\ca^{2}x_{1}-\cb^{2}x_{2}}{\ca^{2}x_{1}+\cb^{2}x_{2}}+O(x)\,,
\end{align}
where $O(x)$ denotes terms of order $x_i$, $O(x^2)$ terms of order $x_ix_j$ etc.  The components of $G^{ij}$ are then given by
\begin{align}\label{GUU-NH}
G^{11} & =\frac{1}{2}\Big(\frac{\ca^{4}}{\ca^{2}x_{1}+\cb^{2}x_{2}}+\frac{x_{2}}{x_{1}(x_{1}+x_{2})}\Big)+O(1)\,,\nonumber \\
G^{12} & =\frac{1}{2}\Big(\frac{\ca^{2}\cb^{2}}{\ca^{2}x_{1}+\cb^{2}x_{2}}-\frac{1}{x_{1}+x_{2}}\Big)+O(1)\,,\nonumber \\
G^{22} & =\frac{1}{2}\Big(\frac{\cb^{4}}{\ca^{2}x_{1}+\cb^{2}x_{2}}+\frac{x_{1}}{x_{2}(x_{1}+x_{2})}\Big)+O(1)\,,
\end{align}
where the $O(1)$ term is a smooth function of $(x_1, x_2)$ at the horizon $x_1=x_2=0$. These expressions can be readily  integrated to give us the symplectic potential using \eqref{hessiang}. We thus obtain
\be
g = \frac{1}{2} x_1 \log x_1+ \frac{1}{2} x_2 \log x_2 - \frac{1}{2}(x_1+x_2) \log (x_1+x_2)
+\frac{1}{2}(\ca^{2}x_{1}+\cb^{2}x_{2})\log(\ca^{2}x_{1}+\cb^{2}x_{2})
 + \tilde{g} \; , \label{NHpot}
\ee
where $\tilde{g}$ is a smooth function at the origin which arises from integrating $O(1)$ in \eqref{GUU-NH}. Recall $g$ is only defined up to an additive linear function of $x_i$ and therefore we may assume that $\tilde{g}$ vanishes quadratically in $x_i$ at the horizon.   Thus we have fixed the leading singular term in $g$ purely from the near-horizon geometry and the subleading terms are in fact $O(\lambda^2)$. 

From the above we can immediately deduce the following result which characterises the singular behaviour of the symplectic potential near any horizon.
\begin{lemma}
\label{lem:NHsymp}
Consider a supersymmetric toric solution that is timelike outside a supersymmetric horizon with compact cross-sections. Let $v_\pm= v_\pm^i m_i$ be the $2\pi$-periodic Killing fields that have fixed points on a connected component of the horizon.
 The symplectic potential takes the form
\bea
g &=& \frac{1}{2} \ell_+(x) \log \ell_+(x)+ \frac{1}{2} \ell_-(x) \log \ell_-(x) - \frac{1}{2}(\ell_+(x)+\ell_-(x)) \log (\ell_+(x)+\ell_-(x))  \nonumber
\\ &+& \frac{1}{2}(\ca^{2}\ell_{+}(x)+\cb^{2}\ell_{-}(x))\log(\ca^{2}\ell_{+}(x)+\cb^{2}\ell_{-}(x))  +\tilde{g}   \label{NHpot_2}
\eea
where $\ell_\pm (x)= v_\pm^i x_i $,  we have taken the horizon component to be at the origin, and $\tilde{g}$ is smooth at this horizon component.
\end{lemma}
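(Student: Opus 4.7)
The plan is to recognise that Lemma \ref{lem:NHsymp} is a basis-independent restatement of the formula \eqref{NHpot} already derived in Section \ref{sec:NH}, and to obtain it by reducing to the adapted basis used there through a $GL(2,\mathbb{Z})$ change of basis of the torus Killing fields.

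First, since the two axis components meeting at the horizon are defined by the vanishing of $v_+$ and $v_-$ respectively, with integer components $v_\pm^i \in \mathbb{Z}^2$ coprime, and since smoothness of the horizon cross-section enforces the orbifold-type condition $\det(\mathbf{v}_+, \mathbf{v}_-) = \pm 1$ (as in \eqref{orbifold}), I would perform a $GL(2,\mathbb{Z})$ change of basis $m_i \to \tilde{m}_i$ such that $\tilde{m}_1 = v_+$ and $\tilde{m}_2 = v_-$. This induces a corresponding linear change of the moment maps: $\tilde{x}_1 = v_+^i x_i + c_+$ and $\tilde{x}_2 = v_-^i x_i + c_-$. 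By Lemma \ref{lem:horizon-point} the horizon corresponds to a single point in symplectic coordinates, which we place at the origin, and the constants $c_\pm$ can be chosen so that $\tilde{x}_\pm$ vanish at the horizon, yielding $\tilde{x}_1 = \ell_+(x)$ and $\tilde{x}_2 = \ell_-(x)$.

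Next, in the new basis the entire derivation of Section \ref{sec:NH} applies verbatim: $\tilde{m}_i$ are precisely the $2\pi$-periodic Killing fields that have fixed points on the two axis components meeting at the horizon, so the general near-horizon form \eqref{NH-expl1}--\eqref{NH-expl3} with some $(\mathcal{A}^2, \mathcal{B}^2)$ (intrinsic to the horizon in this basis) applies, and the computation of $G^{ij}$ via \eqref{Gij}, \eqref{GUU-NH} yields \eqref{NHpot} with $x_i$ replaced by $\tilde{x}_i$. Substituting $\tilde{x}_1 = \ell_+(x)$, $\tilde{x}_2 = \ell_-(x)$ then gives the desired expression \eqref{NHpot_2} in the original coordinates, with the remainder $\tilde{g}$ smooth at the origin since smoothness is preserved under linear changes of variable.

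The only points to handle carefully are: (i) the ambiguity in $g$ up to an additive linear function of the $x_i$ noted in Section \ref{sec:toric}, which harmlessly absorbs any linear contribution generated by changing basis; and (ii) the unified treatment of the generic $\mathcal{A}^2 \neq \mathcal{B}^2$ case and the enhanced-symmetry case $\mathcal{A}^2 = \mathcal{B}^2$, which is guaranteed by the coordinate system of Appendix \ref{NH-comparison}. I do not anticipate any genuine obstacle here, as the main nontrivial computation (integrating \eqref{GUU-NH}) has already been carried out in Section \ref{sec:NH}; the present lemma is essentially a $GL(2,\mathbb{Z})$-covariant repackaging of that result.
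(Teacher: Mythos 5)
Your overall strategy---reduce to the Killing fields adapted to the horizon, where the computation of Section \ref{sec:NH} gives \eqref{NHpot}, then rewrite the result covariantly in terms of $\ell_\pm(x)=v_\pm^i x_i$---is essentially the paper's own route (the paper states the lemma as an immediate consequence of \eqref{NHpot}). However, there is a genuine flaw in the step where you claim that smoothness of the horizon cross-section enforces $\det(\mathbf{v}_+,\mathbf{v}_-)=\pm 1$ and then perform a $GL(2,\mathbb{Z})$ change of basis with $\tilde m_1=v_+$, $\tilde m_2=v_-$. The condition \eqref{orbifold} is imposed at fixed points of the torus action, not at corners where the axis meets a horizon: the near-horizon classification allows quotients of $S^3$, and a perfectly smooth $L(q,1)$ horizon has $\det(\mathbf{v}_+,\mathbf{v}_-)=q$ with $|q|>1$ allowed---this is exactly the situation in the black lens and double-black-hole examples of Section \ref{sec:ex}, which are written down by applying this very lemma. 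When $|q|>1$ the pair $(v_+,v_-)$ is not a basis of the lattice of $2\pi$-periodic Killing fields, so no $GL(2,\mathbb{Z})$ transformation realises your adapted basis, and your argument as written establishes the lemma only for $S^3$ horizon components, excluding precisely the cases for which the general statement is needed in Theorem \ref{thm:gensymp}.

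The gap is fixable without unimodularity. The defining relation $G^{ij}=\partial^i\partial^j g$ is covariant under arbitrary invertible \emph{linear} changes of the moment maps, so one may simply pass to the variables $\tilde x_1=\ell_+(x)$, $\tilde x_2=\ell_-(x)$ (invertible since $\det(v_+,v_-)\neq 0$, the two axes being distinct) without demanding that they arise from a lattice basis; periodicity of the angles never enters this local Hessian relation. What is needed is that, in coordinates adapted to $v_\pm$, the near-horizon data takes the form \eqref{NH-expl1}--\eqref{NH-expl3} for some $\ca^2,\cb^2$; this holds because each $v_\pm$ is individually $2\pi$-periodic, the absence of a conical singularity at its fixed-point set is a local condition at the corresponding pole, and by~\cite{Kunduri:2006uh} the geometry is locally isometric to the CCLP near-horizon geometry, a lens quotient only modifying the global identifications. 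With this, the computation \eqref{xi-NH-expl}--\eqref{GUU-NH} and its integration go through verbatim, the smoothness of $\tilde g$ as a function of $x$ follows since the change of variables is linear and invertible, and the additive linear ambiguity in $g$ is handled exactly as you indicate.
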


We can now combine Lemma \ref{thm:OS} and Lemma \ref{lem:NHsymp} to deduce the general form for the symplectic potential.
\begin{theorem}
\label{thm:gensymp}
The symplectic potential for a supersymmetric toric solution as in Lemmas \ref{thm:OS} and \ref{lem:NHsymp} takes the form
\begin{align}
g &= \sum_{A=1}^N \frac{1}{2}\ell_A (x) \log \ell_A(x)   -  \sum_{A\in H}   \frac{1}{2}(\ell_A(x)+\ell_{A+1}(x)) \log (\ell_A(x)+\ell_{A+1}(x)) \nonumber  \\
 &+\sum_{A\in H}  \frac{1}{2}(\ca_A^{2}\ell_{A}(x)+\cb_A^{2}\ell_{A+1}(x))\log(\ca_A^{2}\ell_{A}(x)+\cb_A^{2}\ell_{A+1}(x))  +\tilde{g}  \; ,
\end{align}
where $A\in H$ iff  the point defined by $\ell_A(x)=\ell_{A+1}(x)=0$ corresponds to a component of the horizon and $\ca_A, \cb_A$ are parameters of the corresponding near-horizon geometry, and $\tilde{g}$ is smooth on (\ref{eq:orbit_sc}) including at the boundaries and corners.
\end{theorem}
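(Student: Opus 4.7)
The strategy is to piece together Lemma \ref{thm:OS} and Lemma \ref{lem:NHsymp}. The first gives the global structure of the symplectic potential on the orbit space (\ref{eq:orbit_sc}) up to a remainder that is smooth everywhere except possibly at the horizon corners, while the second pins down the additional singularity of $g$ at each such corner. The theorem is essentially the consistent combination of these two descriptions.

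First I would invoke part 2 of Lemma \ref{thm:OS} to write $g = \sum_{A=1}^N \tfrac{1}{2}\ell_A \log \ell_A + \tilde{g}_0$, where $\tilde{g}_0$ is smooth on (\ref{eq:orbit_sc}) except possibly at the intersection points corresponding to horizons. I would then define the candidate remainder
\begin{equation*}
\tilde{g} := \tilde{g}_0 - \sum_{A \in H}\Bigl[-\tfrac{1}{2}(\ell_A + \ell_{A+1})\log(\ell_A + \ell_{A+1}) + \tfrac{1}{2}(\ca_A^{2} \ell_A + \cb_A^{2} \ell_{A+1}) \log(\ca_A^{2} \ell_A + \cb_A^{2} \ell_{A+1})\Bigr].
\end{equation*}
Substituting this back into the expression from Lemma \ref{thm:OS} immediately reproduces the formula stated in the theorem, so the content reduces to showing that $\tilde{g}$ extends smoothly to all of (\ref{eq:orbit_sc}).

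The only non-trivial check is at the horizon corners. Fix $A^{*} \in H$, located at the intersection $\ell_{A^{*}} = \ell_{A^{*}+1} = 0$. On a neighbourhood of $A^{*}$, Lemma \ref{lem:NHsymp}, applied with $\ell_{\pm}$ identified with $\ell_{A^{*}}, \ell_{A^{*}+1}$, prescribes the singular part of $g$ at this corner. Comparing with the formula from Lemma \ref{thm:OS} identifies the singular part of $\tilde{g}_0$ near $A^{*}$ as exactly the $A = A^{*}$ summand appearing in the definition of $\tilde{g}$, up to a function smooth at $A^{*}$. For this matching I would observe that the remaining axis contributions $\tfrac{1}{2}\ell_B \log \ell_B$ with $B \neq A^{*}, A^{*}+1$ are smooth near $A^{*}$, since the lines $\ell_B = 0$ stay bounded away from it, and similarly that the horizon subtractions indexed by any $A' \in H$ with $A' \neq A^{*}$ are smooth at $A^{*}$, because the positive combinations $\ell_{A'} + \ell_{A'+1}$ and $\ca_{A'}^{2} \ell_{A'} + \cb_{A'}^{2} \ell_{A'+1}$ vanish on the orbit space only at the corner $A'$. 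Hence the $A^{*}$ subtraction in the definition of $\tilde{g}$ cancels precisely the singularity of $\tilde{g}_0$ at $A^{*}$, and smoothness at every horizon corner follows.

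The argument is essentially bookkeeping; the main conceptual point is that the horizon subtractions introduce no spurious singularities anywhere on the orbit space. This relies on the positivity on (\ref{eq:orbit_sc}) of $\ell_A + \ell_{A+1}$ and $\ca_A^{2}\ell_A + \cb_A^{2}\ell_{A+1}$ away from the corresponding corner, which is a direct consequence of part 1 of Lemma \ref{thm:OS}. Given this, no step presents a serious obstacle, and the proof amounts to combining the two lemmas with a careful term-by-term comparison.
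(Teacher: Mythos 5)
Your proposal is correct and follows essentially the same route as the paper, which deduces Theorem \ref{thm:gensymp} directly by combining Lemma \ref{thm:OS} with Lemma \ref{lem:NHsymp}; your extra bookkeeping (checking that each horizon subtraction is smooth away from its own corner because $\ell_A+\ell_{A+1}$ and $\ca_A^2\ell_A+\cb_A^2\ell_{A+1}$ are strictly positive there on the orbit space) just makes explicit what the paper leaves implicit.
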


Recall that in Lemma \ref{thm:OS}  we assumed that a single symplectic chart contained all components of the axis.  As discussed at the start of section \ref{ax-hor-orb}, we anticipate that under certain reasonable global assumptions (including the asymptotics) one should be able to prove that symplectic coordinates provide a global chart of the exterior region of a black hole spacetime.  
Theorem \ref{thm:gensymp} then gives the general form for a symplectic potential for a supersymmetric AdS black hole spacetime with toric symmetry of arbitrary topology.  It is an analogue of Abreu's result for compact toric K\"ahler manifolds~\cite{Abreu}.

\subsection{Examples}
\label{sec:ex}

In this section we will first list the known solutions and their symplectic potentials. Then using the above results we will write down the form of the symplectic potential for possible new black hole solutions with nontrivial topology.

\subsubsection{AdS$_5$ and known black holes}

The simplest relevant examples of  toric K\"ahler metrics are  given by the $SU(2)\times U(1)$ invariant K\"ahler metrics~\footnote{Compared to~\cite{Lucietti:2021bbh} we have rescaled $r_{\text{there}}=2\alpha\rcclp_{\text{here}}$.}
\begin{align}\label{GRmetr}
h &= \frac{\td \rcclp^2}{V(\rcclp)} + \alpha^{2}\rcclp^{2} ( \td \vartheta^2+ \sin^2\vartheta \td \varphi^2)+ 4\alpha^{4}\rcclp^{2} V(\rcclp) (\td \psi+ \cos \vartheta \td\varphi)^2 \,,\nonumber\\
X^{(1)}&= \td  \left(\alpha^{2} \rcclp^2 (\td \psi+ \cos \vartheta \td\varphi) \right)  \; ,
\end{align}
where $V(\rcclp)>0$ is an arbitrary function, $\alpha$ is a constant, and $(\vartheta, \psi, \varphi)$ are Euler coordinates on $S^3$.
This family includes the  important case of the Bergmann metric for  $V=1+\frac{\rcclp^2}{\ell^2}$ and $\alpha=1/2$ which is the base space of global AdS$_5$ (normalised so $f=1$). For $\alpha>1/2$ this class  includes the base of the GR black hole which also has $V= 1+ \frac{\rcclp^2}{\ell^2}$, or its near-horizon geometry which has $V=1$ (also a supersymmetric solution).

We now write these solutions in symplectic coordinates.
It is convenient to define $2\pi$-periodic angles $\phi^i$ by $\psi=\phi^1+\phi^2$ and $\varphi= -\phi^1+\phi^2$ so that $\partial_{\phi^1} = \partial_\psi-\partial_\varphi$ and $\partial_{\phi^2} = \partial_\psi+\partial_\varphi$.  Then, from (\ref{moment}) we deduce that the symplectic coordinates adapted to $m_i=\partial_{\phi^i}$ are given by
\be
x_1 =2\alpha^{2} \rcclp^2 \sin^2(\vartheta/2), \qquad x_2 = 2\alpha^{2} \rcclp^2  \cos^2(\vartheta /2) \; .
\ee
Thus the component of the axis $\vartheta=0$ ($r>0$) corresponds to $x_1=0$ on which $\partial_{\phi^1}=0$, and the axis $\vartheta=\pi$ ($r>0$) corresponds to $x_2=0$ on which $\partial_{\phi^2}=0$. The interior of the orbit space is $x_1>0, x_2>0$ and the asymptotic region is  $x_1+x_2 \to \infty$. Thus the image of the moment maps is the upper right quarter $x_1x_2$-plane. The corner $x_1=x_2=0$ ($r=0$) is where the two axis components meet and the torus symmetry has a fixed point.  Now, comparing to (\ref{h_toric}) and (\ref{hessiang}), we find that the symplectic potential for the Bergmann metric is
\be
 g_{\text{B}} =\frac{1}{2} x_1 \log x_1+ \frac{1}{2} x_2 \log x_2  - \frac{1}{2} (x_1+x_2+ \tfrac{1}{2} \ell^2) \log (x_1+x_2+ \tfrac{1}{2} \ell^2)  \; . \label{B_pot}
 \ee
 Notice this is related to the symplectic potential of the Fubini-Study metric on $\mathbb{CP}^2$ by an analytic continuation $\ell^2\to -\ell^2$.
 
 For the GR black hole we find the symplectic potentials takes a remarkably similar form
  \bea\label{GR_pot}
  g_{\text{GR}} &=& \frac{1}{2} x_1 \log x_1+ \frac{1}{2} x_2 \log x_2+ \left( \frac{1}{8\alpha^2} - \frac{1}{2} \right) (x_1+x_2) \log (x_1+x_2)   \nonumber \\ &&- \frac{1}{8\alpha^2} (x_1+x_2+ 2 \alpha^2 \ell^2) \log (x_1+x_2+ 2 \alpha^2 \ell^2)   \; .
 \eea
 Observe that for $\alpha=1/2$ this reduces to that for the Bergmann metric (as it should).  For the near-horizon geometry of the GR black hole we find
  \be\label{NHGR_pot}
  g_{\text{NHGR}} =   \frac{1}{2} x_1 \log x_1+ \frac{1}{2} x_2 \log x_2 + \left( \frac{1}{8\alpha^2} - \frac{1}{2} \right) (x_1+x_2) \log (x_1+x_2)\; ,
 \ee
 which is the same as the black hole potential (\ref{GR_pot})  with the last term omitted.
 
 We now consider more generic solutions with toric symmetry. The CCLP black hole (as well as its near-horizon geometry) is the most general known black hole solution in this symmetry class. We write it in a compact form in terms of parameters $A$ and $B$ in  Appendix \ref{sec:CCLP}.  The metric and the K\"ahler form for the base are given by \eqref{CCLPmetr} and the relation with the $2\pi$-periodic angles in \eqref{CCLPangles}. The GR solution is a special case of CCLP for $A=B=(2\alpha)^{-1}$, but note that we must rescale  $(\psi,\varphi)_{\text{GR}}=(A^{2}\psi,A^{2}\varphi)_{\text{CCLP}}$. We find that the symplectic coordinates are given by 
\be
x_1 =\frac{1}{2 A^{2}} \rcclp^2 \sin^2(\vartheta/2), \qquad x_2 = \frac{1}{2 B^{2}} \rcclp^2  \cos^2(\vartheta/2)  \; ,
\ee
with the axis- (and fixed points-) structure being identical to the $SU(2)\times U(1)$ case, while the asymptotic region corresponds to $A^{2}x_1+B^{2}x_2 \to \infty$. The symplectic potential of CCLP can again be evaluated from (\ref{h_toric}) and (\ref{hessiang}) and takes the remarkably simple and explicit form
\begin{align}\label{CCLP_pot}
g_{\text{CCLP}} & =\frac{1}{2}x_{1}\log x_{1}+\frac{1}{2}x_{2}\log x_{2}-\frac{1}{2}(x_{1}+x_{2})\log(x_{1}+x_{2})+\frac{1}{2}(A^{2}x_{1}+B^{2}x_{2})\log(A^{2}x_{1}+B^{2}x_{2})\nonumber \\
 & -\frac{1}{2}(A^{2}x_{1}+B^{2}x_{2}+\tfrac{1}{2} \ell^2)\log(A^{2}x_{1}+B^{2}x_{2}+\tfrac{1}{2} \ell^2)\, .
\end{align}
The potential for the near-horizon geometry of the CCLP black hole is
\be\label{NHCCLP_pot}
g_{\text{NHCCLP}} =\frac{1}{2}x_{1}\log x_{1}+\frac{1}{2}x_{2}\log x_{2}-\frac{1}{2}(x_{1}+x_{2})\log(x_{1}+x_{2})+\frac{1}{2}(A^{2}x_{1}+B^{2}x_{2})\log(A^{2}x_{1}+B^{2}x_{2})\, ,
\ee
which is  also the same as that of its parent black hole \eqref{CCLP_pot} with the last term omitted. Observe that these expressions reduce to those for the GR black hole for $A=B=(2\alpha)^{-1}$ as they must.

It is worth emphasising that the singular part of the symplectic potential for the above solutions is as predicted by our general Theorem \ref{thm:gensymp}.
It is interesting to note that for all these black hole solutions the symplectic potential takes a form similar to that for the canonical metric on compact toric K\"ahler manifolds~\cite{Abreu}.

\subsubsection{Symplectic potentials for black holes with nontrivial topology}

We are interested in spacetimes asymptotic to global AdS$_5$.  For such spacetimes we will choose the toric Killing fields $m_i$ to be  $2\pi$-periodic and orthogonal on the $S^3$ at infinity.  In particular, in terms of symplectic coordinates adapted to such $m_i$ the symplectic potential for global AdS$_5$ is given by (\ref{B_pot}). Therefore,  the axis will always include two semi-infinite axes, say $x_1=0$ and $x_2=0$, for large enough $x_2$ and $x_1$ respectively.  It is helpful to note that 
\be
g_{\mathbb{R}^4}=  \frac{1}{2} x_1 \log x_1+ \frac{1}{2} x_2 \log x_2
\ee
is the symplectic potential for euclidean space $\mathbb{R}^4$ (note that this is inline with Lemma \ref{lem:fixed-point}).  This has the same axis structure (and topology) as the Bergmann metric  and its symplectic potential (\ref{B_pot}) differs only in the third term which is smooth at the axes and  dictates the asymptotics.    We will now deduce the general form of the symplectic potential assuming there are at least two semi-infinite axis components.

As a warm up let us first consider the simplest case where we have two axis components $x_1=0$ and $x_2=0$ and a horizon at the origin. This includes the known CCLP black hole. From Theorem \ref{thm:gensymp} we deduce that we can write the potential as
\be
g =  \frac{1}{2} x_1 \log x_1+ \frac{1}{2} x_2 \log x_2 - \frac{1}{2}(x_1+x_2) \log (x_1+x_2) 
+ \frac{1}{2}( \ca^2 x_1+ \cb^2 x_2) \log ( \ca^2 x_1+ \cb^2 x_2)+ \tilde{g}
\ee
where $\tilde{g}$ is smooth at the axes $\{ x_1=0, x_2>0 \}$ and $\{ x_1>0, x_2=0 \}$, and at the horizon $x_1=x_2=0$.  Inspecting the symplectic potential for the CCLP black hole \eqref{CCLP_pot} we see that
\be
\tilde{g}_{\text{\tiny CCLP}}= - \frac{1}{2}( \ca^2 x_1+ \cb^2 x_2+ \tfrac{1}{2} \ell^2)\log (\ca^2 x_1+ \cb^2 x_2+ \tfrac{1}{2} \ell^2) \; ,
\ee
which is indeed smooth at the axes and horizon inline with Theorem \ref{thm:gensymp}.  On the other hand $\tilde{g}=0$  gives a solution corresponding to the near-horizon geometry of CCLP.  

 Of course,  Theorem \ref{thm:gensymp}  allows us to write down the potential for any horizon and axis structure.  
For simplicity we will only consider the next simplest example where we have three axis components: the two semi-infinite axes $\ell_1=0$, $\ell_3=0$, and a finite axis $\ell_2=0$ which joins the two: 
\be
\ell_1:=x_1, \qquad \ell_2:= p x_1+ q x_2- a, \qquad \ell_3:=x_2  \; ,   \label{eq:ell2}
\ee  
where $(p,q)$ are coprime integers.
In the $(m_1, m_2)$ basis these correspond to the vanishing of the vectors $v_1=(1,0)$, $v_2=(p, q)$ and $v_3=(0,1)$ respectively.  The intersection of the finite axis with the semi-infinite ones correspond to corners of the orbit space and these must be on the positive $x_1$ and $x_2$ axes which requires $a/p>0$ and $a/q>0$. Thus we may assume $p>0, q>0, a>0$.   There are three cases to consider depending on if one, two or none of the corners are horizons. 

\begin{enumerate}[label=(\alph*)]
\item{\bf  Black lens}. First suppose the corner $x_1=0, x_2=a/q$ corresponds to a horizon and the other corner to a fixed point as in Figure \ref{blacklens}. Then absence of an orbifold singularity (\ref{orbifold}) at the fixed point requires $\det (v_2, v_3)= -p=\pm 1$ so without loss of generality we may always set $p=1$.  The horizon topology is determined by $\det(v_1, v_2)=q$ and is a lens space $L(q,1)$. Then, from Theorem \ref{thm:gensymp}, we deduce that the general form of the symplectic potential is 
\begin{align}
g &=  \frac{1}{2} x_1 \log x_1+ \frac{1}{2} \ell_2 \log \ell_2 +  \frac{1}{2} x_2 \log x_2 \nonumber  \\ &- \frac{1}{2}(x_1+\ell_2) \log (x_1+\ell_2)   + \frac{1}{2}( \ca^2 x_1+ \cb^2 \ell_2) \log ( \ca^2 x_1+ \cb^2 \ell_2)
+ \tilde{g}  \; ,
\end{align}
where $\ell_2=x_1+q x_2-a$ and   $\tilde{g}$ is smooth at each of the axes $x_1=0, x_2=0$ and $\ell_2=0$ and at both corners.   By construction this gives the symplectic potential for a regular black lens spacetime.  An asymptotically flat supersymmetric $L(2,1)$ black lens with this axis and horizon structure is known~\cite{Kunduri:2014kja}.

\item {\bf Double-black hole}. Next suppose both corners are horizons so we have a double-black hole as in Figure \ref{doubleblackhole}.  This gives an $L(q,1)$ black lens at $x_1=0, x_2=a/q$ and an $L(p,1)$ black lens at $x_2=0, x_1=a/p$ with no restriction on $p,q$ (other than they are coprime).  The symplectic potential now takes the form
\bea
g &=&  \frac{1}{2} x_1 \log x_1+ \frac{1}{2} \ell_2 \log \ell_2 +  \frac{1}{2} x_2 \log x_2 \nonumber \\ \nonumber  &-& \frac{1}{2}(x_1+\ell_2) \log (x_1+\ell_2)  - \frac{1}{2}(x_2+\ell_2) \log (x_2+\ell_2)  \\ 
&+& \frac{1}{2}( \ca^2 x_1+ \cb^2 \ell_2) \log ( \ca^2 x_1+ \cb^2 \ell_2)+  \frac{1}{2}(\tilde{\ca}^2 \ell_2+ \tilde{\cb}^2 x_2) \log ( \tilde{\ca}^2 
\ell_2+ \tilde{\cb}^2 x_2)
+ \tilde{g}
\eea
where $\ell_2$ is given by (\ref{eq:ell2}) and there are singular terms from the horizons at   $x_1=0, x_2=a/q$ and  $x_2=0, x_1=a/p$ respectively (we distinguish the near-horizon parameters of the second horizon by tildes). Again $\tilde{g}$ is smooth at all the axis lines and corners.   Note the special case $p=q=1$ corresponds to a double $S^3$ black hole.

\item {\bf Soliton}. 
Now suppose both corners are fixed points of the toric symmetry as in Figure \ref{soliton}. Absence of orbifold singularities at the corners now requires $p=q=1$ (fixing signs).   The symplectic potential in this case is simply
\be
g= \frac{1}{2} x_1 \log x_1+ \frac{1}{2}(x_1+x_2-a) \log (x_1+x_2-a) +  \frac{1}{2} x_2 \log x_2  + \tilde{g}
\ee
where $\tilde{g}$ is smooth at the three axes.  This gives a soliton with a bolt at the finite axis.
\end{enumerate}

\begin{figure}[h]
\centering
\begin{subfigure}{0.3\textwidth}
\includegraphics[width=4cm, height=5cm]{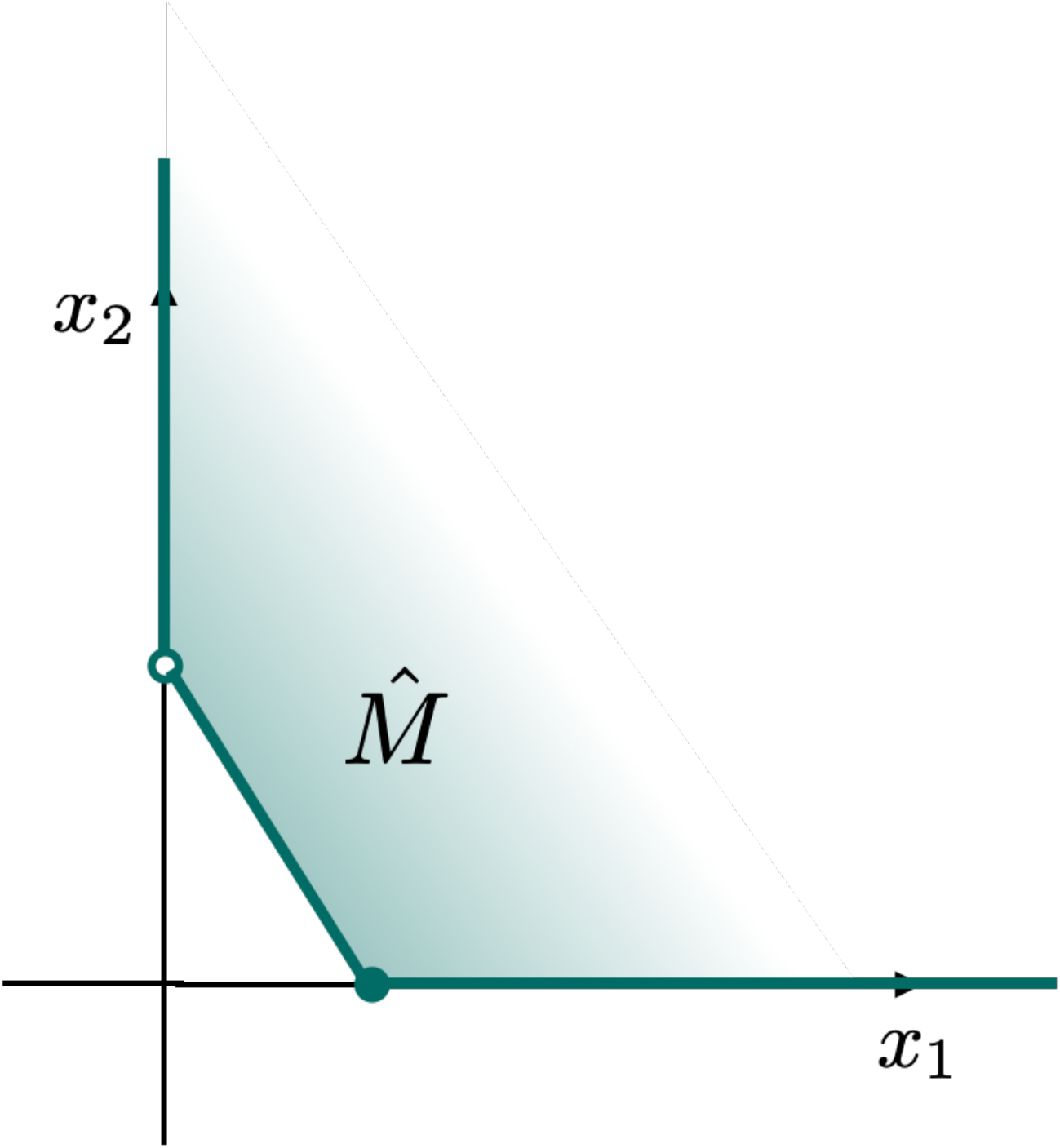} 
\caption{Black lens}
\label{blacklens}
\end{subfigure}
\begin{subfigure}{0.3\textwidth}
\includegraphics[width=4cm, height=5cm]{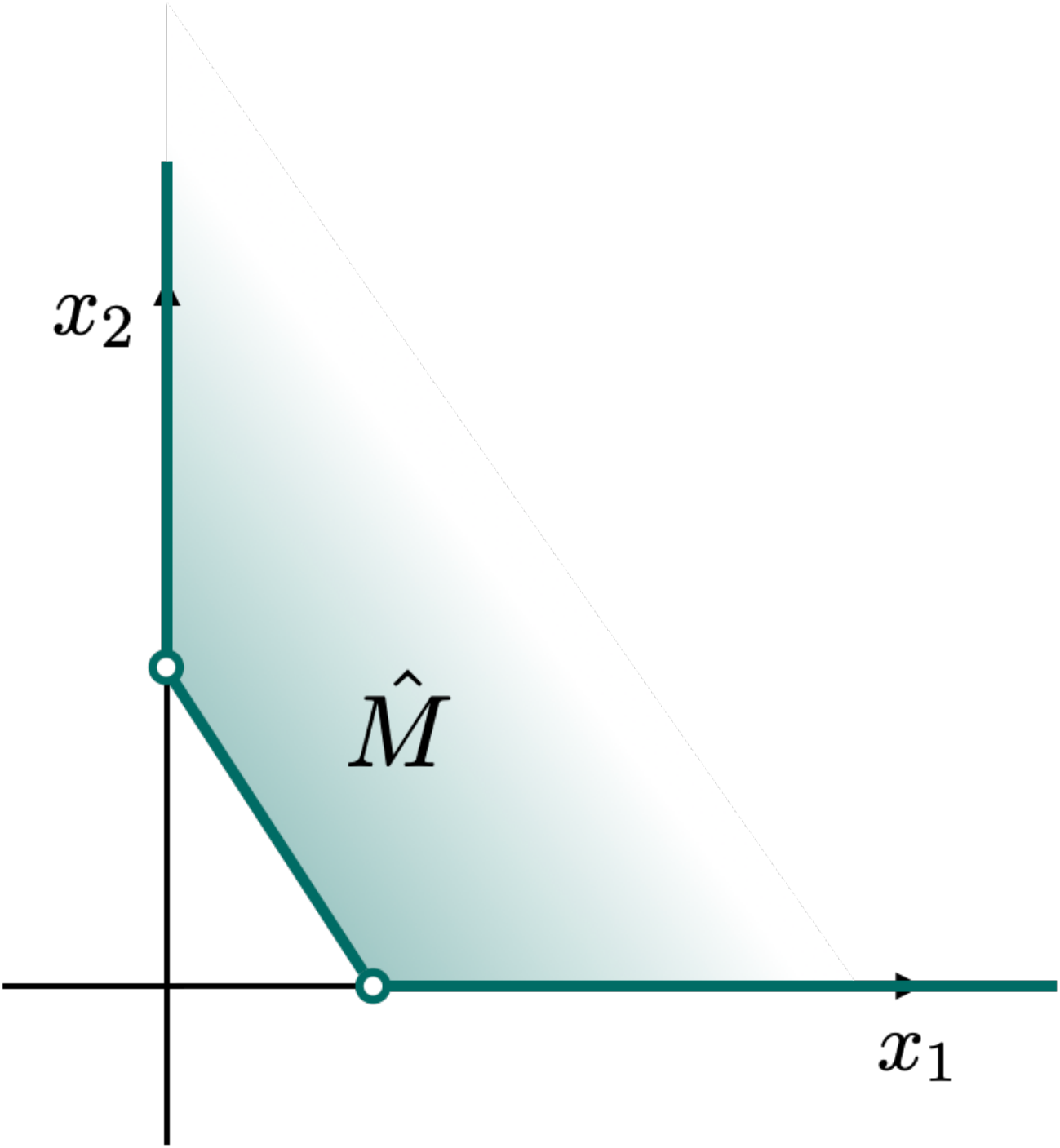}
\caption{Double black hole}
\label{doubleblackhole}
\end{subfigure}
\begin{subfigure}{0.3\textwidth}
\includegraphics[width=4cm, height=5cm]{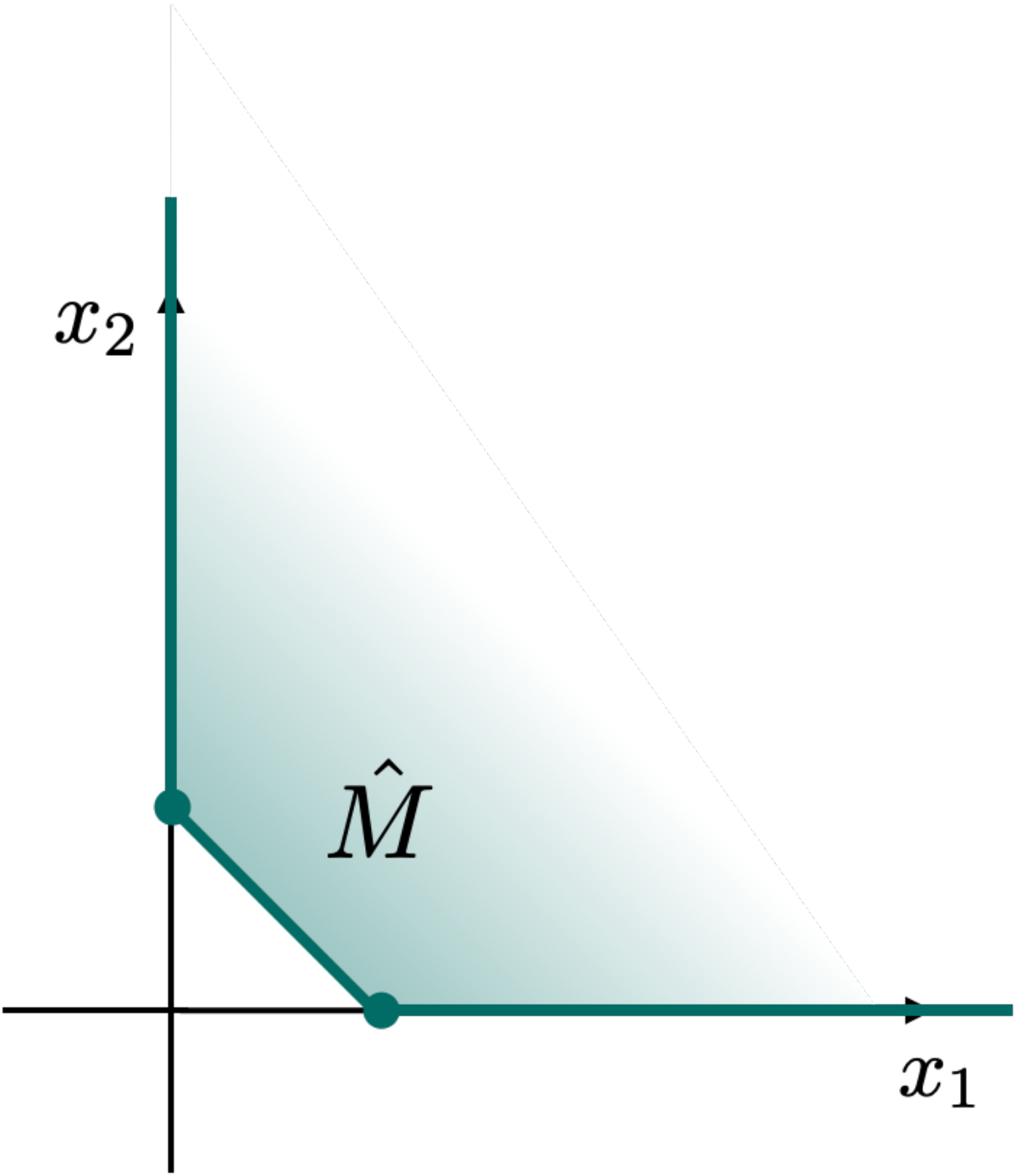}
\caption{Soliton}
\label{soliton}
\end{subfigure}
\caption{The three cases of orbit space with three axis components depending on whether we have 1, 2 or 0 horizon components (white dots).}
\label{fig:image2}
\end{figure}

We emphasise that there are no known solutions to (\ref{PDE}) that realise the above black lens and double-black hole axis and horizon structures (i.e. case (a) and (b) above). An asymptotically AdS$_5/\mathbb{Z}_p$ soliton spacetime with $SU(2)\times U(1)$ symmetry is known with an axis structure as in case (c) and therefore gives example for this topology type~\cite{Lucietti:2021bbh} (its K\"ahler base is of the form (\ref{GRmetr}), although it has a more complicated $V(r)$). However, the only known asymptotically AdS$_5$ soliton spacetime~\cite{Chong:2005hr, Cassani:2015upa, Durgut:2021rma}, possesses an ergosurface where $V$ is null on a timelike hypersurface so the K\"ahler base metric is not globally defined in this case (the region outside to the ergosurface possesses an extra boundary corresponding to the ergosurface and the symplectic potential is much more complicated).

The existence of a black lens or a multi-black hole in AdS$_5$ remains a central open problem in the classification of AdS$_5$ black hole spacetimes.  The above shows that this problem reduces to finding the smooth part of the symplectic potential that solves (\ref{PDE}) and gives an asymptotically AdS$_5$ spacetime. In the absence of extra structure or symmetries this appears to be a formidable problem given the complexity of (\ref{PDE}).  In the next section we identify a special subclass of toric solutions that contains the CCLP black hole for which we are able to completely solve the classification problem.

\section{Supersymmetric toric solutions of Calabi type}\label{sec:Calabi}

Section \ref{sec:toric} was devoted to an analysis of generic supersymmetric toric solutions in the timelike class. In particular, we showed that such solutions can be described in symplectic coordinates by a symplectic potential that must obey a complicated 8th order non-linear PDE (its explicit form is unenlightening and can be obtained by inserting \eqref{h_toric} and \eqref{hessiang} in \eqref{PDE}). Furthermore, we derived the behaviour of the symplectic potential near any component of a horizon or axis.  Unfortunately, it appears that a classification of K\"ahler geometries that satisfy this PDE (even under our boundary conditions) is at present out of reach.

In this section we will focus on a particular subclass of toric K\"ahler bases that are of Calabi type~\cite{Apostolov2001TheGO} (we will define this in subsection \ref{sec:Calabi-general}). Our motivation for considering this class of solutions is that the CCLP black hole, which is the most general known black hole in this theory, has a K\"ahler base of Calabi type~\cite{Cassani:2015upa}.  Furthermore, we will show that the most general near-horizon geometry (with compact cross-sections) discussed in section \ref{sec:NH} is also of  Calabi type (see subsection \ref{sec:Calabi-NH}), which implies that any supersymmetric black hole in this theory must be of Calabi type at least near the horizon.   We will find that for this class of K\"ahler bases we can derive the general black hole solution and that this agrees precisely with the CCLP black hole. Therefore, we obtain a constructive uniqueness theorem for this black hole solution.

\subsection{Toric K\"ahler metrics of Calabi type}\label{sec:Calabi-general}

In this subsection, we will introduce toric K\"ahler manifolds of Calabi type. A general definition of K\"ahler surfaces of Calabi type was given by Apostolov, Calderbank and Gauduchon in their study of K\"ahler surfaces admitting a hamiltonian 2-form, the latter a concept which they also introduce~\cite{Apostolov2001TheGO}\footnote{A 2-form is hamiltonian if it is closed, invariant under the complex structure, and its traceless part is a twistor form.}. They find that K\"ahler surfaces admitting a hamiltonian 2-form necessarily possess two commuting hamiltonian Killing vectors, and if these  vectors are independent the K\"ahler surface is orthotoric, whereas if they are dependent it is of Calabi type.  An orthotoric K\"ahler surface is one that admits two linearly independent hamiltonian Killing fields with moment maps $\xi+\eta$ and $\xi \eta$ such that $\td \xi$ and $\td \eta$ are orthogonal.  We will give our own analogous definition of Calabi type in the context of toric K\"ahler surfaces (which is a subclass of Calabi type according to~\cite{Apostolov2001TheGO}).

\begin{definition}
\label{def}
A  K\"ahler surface is toric and of Calabi type if it admits two linearly independent hamiltonian Killing fields with moment maps $\rho$ and $\rho \eta$ such that $\td \rho$ and $\td \eta$ are orthogonal.
\end{definition}

Let us work out some basic consequences of the above definition. It is convenient to work in a $2\pi$-periodic basis of toric Killing fields
 $m_{i}=\partial_{\phi^{i}}$, $\phi^{i}\sim\phi^{i}+2\pi$. Then according to this definition there exist linear combinations of the Killing fields $k_i= A_{i}^{~j}m_j$ where $A\in GL(2, \mathbb{R})$ with moment maps $\rho, \rho \eta$ such that 
 \begin{equation}\label{rho-eta-orthoginal}
 \td \rho\cdot \td\eta=0\,.
\end{equation}
Note that the Killing fields $k_1, k_2$ need not have closed orbits. It is useful to adapt coordinates so $k_1=\partial_\psi$ and $k_2=\partial_\varphi$ correspond to the moment maps $\rho$ and $\rho \eta$ respectively, and parameterise the $GL(2, \mathbb{R})$ transformation explicitly in terms of these coordinates as
\begin{equation}\label{eq:angles-Calabi}
\psi=a_{i}\phi^{i}\,,\qquad\varphi=b_{i}\phi^{i}\,,\qquad\text{with}\qquad  \langle a,b\rangle\neq 0\, , 
\end{equation}
where $a_i, b_i$ are constants and we have introduced the notation $ \langle a,b\rangle:=\epsilon^{ij}a_{i}b_{j}$ with $\epsilon^{ij}$ antisymmetric and $\epsilon^{12}=1$.
 Since the K\"ahler form \eqref{KahlerForm-gen} is invariant under such $GL(2, \mathbb{R})$ transformations, we have
\begin{equation}\label{X1-Calabi}
X^{(1)} = \td \big( \rho (\td \psi+ \eta \td \varphi)\big),
\end{equation}
while the moment maps (\ref{moment}) for the Killing vectors with $2\pi$-periodic orbits $m_{i}$ are 
\begin{equation}
x_{i}=c_{i}+\rho(a_{i}+b_{i}\eta)\,,\label{eq:Calabi-moment-maps}
\end{equation}
where $c_{i}$ are constants of integration which we will leave arbitrary for later convenience.

Now we will show that toric K\"ahler manifolds of Calabi type can be described in terms of two functions $F(\rho), G(\eta)$ as the following proposition shows.
\begin{prop}
A  toric K\"ahler surface is of Calabi type iff its metric can be written in the form 
\begin{equation} \label{h-Calabi}
h = \frac{\rho}{F(\rho)} \td \rho^2+ \frac{F(\rho)}{\rho} ( \td \psi+ \eta \td \varphi )^2+ \rho \left( \frac{\td \eta^2}{G(\eta)}+ G(\eta) \td \varphi^2 \right)\,,
\end{equation}
where $F(\rho), G(\eta)$  are arbitrary functions and $\psi,\varphi$ are given by \eqref{eq:angles-Calabi}.
\end{prop}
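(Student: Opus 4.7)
The plan is to prove both directions by working in the coordinate chart $(\rho,\eta,\psi,\varphi)$ determined by \eqref{eq:Calabi-moment-maps} and \eqref{eq:angles-Calabi}; the condition $\langle a,b\rangle\neq 0$ ensures this is a diffeomorphism from $(x_1,x_2,\phi^1,\phi^2)$ away from the locus $\rho=0$. The structural fact I would exploit first is that the block decomposition of the toric symplectic metric $h=G^{ij}dx_idx_j+G_{ij}d\phi^id\phi^j$ is preserved under this change of coordinates: since $\rho,\eta$ are functions of $(x_1,x_2)$ alone and $(\psi,\varphi)$ is a constant linear combination of $(\phi^1,\phi^2)$, the new coordinates inherit a block form in which there are no mixed components between $\{d\rho,d\eta\}$ and $\{d\psi,d\varphi\}$.

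The orthogonality hypothesis $d\rho\cdot d\eta=0$ then immediately kills the off-diagonal piece of the $(\rho,\eta)$ block: $h^{\rho\eta}=0$ and hence, by block-diagonality, $h_{\rho\eta}=0$. To determine the remaining entries I would use K\"ahler compatibility. Since $\iota_{\partial_\psi}X^{(1)}=-d\rho$ and $\iota_{\partial_\varphi}X^{(1)}=-d(\rho\eta)$ (read off from \eqref{X1-Calabi}), the relation $X^{(1)}(U,V)=h(JU,V)$ gives $J\partial_\psi=-(d\rho)^\sharp$ and $J\partial_\varphi=-(d(\rho\eta))^\sharp$. Applying $h(J\cdot,J\cdot)=h(\cdot,\cdot)$ together with $d\rho\cdot d\eta=0$ yields
\begin{equation*}
h_{\psi\psi}=|d\rho|^2=:P,\qquad h_{\psi\varphi}=\eta P,\qquad h_{\varphi\varphi}=\eta^2 P+\rho^2 Q,
\end{equation*}
with $Q:=|d\eta|^2$, while block-diagonality forces $h_{\rho\rho}=1/P$ and $h_{\eta\eta}=1/Q$. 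This produces the ansatz
\begin{equation*}
h=\frac{d\rho^2}{P}+\frac{d\eta^2}{Q}+P(d\psi+\eta\,d\varphi)^2+\rho^2 Q\,d\varphi^2,
\end{equation*}
with $P,Q$ a priori functions of $(\rho,\eta)$.

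The final and most substantial step is to separate variables in $P$ and $Q$. Inverting $J$ on this ansatz gives $J\partial_\rho=\partial_\psi/P$ and $J\partial_\eta=(\partial_\varphi-\eta\partial_\psi)/(\rho Q)$. A direct Nijenhuis computation of $N_J(\partial_\rho,\partial_\eta)$ produces two independent components: the $\partial_\rho$-component vanishes iff $\partial_\eta P=0$, and the $\partial_\eta$-component vanishes iff $\partial_\rho(\rho Q)=0$; the remaining Nijenhuis entries reduce to the same two conditions. Hence $P=P(\rho)$ and $\rho Q=G(\eta)$ for some function $G$, and setting $F(\rho):=\rho P(\rho)$ rearranges the ansatz into exactly \eqref{h-Calabi}. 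Equivalently, one may verify that the natural $(1,0)$-frame $\varepsilon^1=P^{-1/2}d\rho+iP^{1/2}(d\psi+\eta\,d\varphi)$, $\varepsilon^2=Q^{-1/2}d\eta+i\rho Q^{1/2}d\varphi$ has $d\varepsilon^A$ of pure type $(2,0)+(1,1)$, yielding the same two ODEs.

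For the converse, one computes directly from \eqref{h-Calabi}: with $X^{(1)}$ as in \eqref{X1-Calabi}, closedness of $X^{(1)}$ is automatic, $\partial_\psi$ and $\partial_\varphi$ are manifestly Killing with moment maps $\rho$ and $\rho\eta$, and $d\rho\cdot d\eta=0$ is immediate from the diagonal $(d\rho^2,d\eta^2)$ block; integrability of the almost complex structure determined by $h$ and $X^{(1)}$ holds because $P=F(\rho)/\rho$ and $Q=G(\eta)/\rho$ satisfy the integrability ODEs above. The main obstacle is the Nijenhuis computation in the forward direction, which is unavoidable but routine once the ansatz for $h$ and the explicit action of $J$ are in hand.
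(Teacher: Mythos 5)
Your proof is correct, and it takes a genuinely different route from the paper. The paper works at the level of the symplectic potential: writing the moment maps as in \eqref{eq:Calabi-moment-maps}, the orthogonality condition becomes the linear PDE $h_{\rho\eta}=\rho\,\partial_\rho(\rho^{-1}\partial_\eta g)=0$, whose general solution is the separated potential $g=g_0(\rho)+\rho g_1(\eta)$ of \eqref{eq:Calabi-sympl-potential}; the metric \eqref{h-Calabi} is then read off from the Hessian $G^{ij}=\partial^i\partial^j g$ and its inverse, with $F=\rho/g_0''$, $G=1/g_1''$. In that approach the K\"ahler condition is already packaged in the Hessian structure of \eqref{h_toric}--\eqref{hessiang}, so no integrability check is needed. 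You instead stay at the level of $(h,X^{(1)},J)$: compatibility plus orthogonality fixes every metric component in terms of two a priori two-variable functions $P=|\td\rho|^2$, $Q=|\td\eta|^2$, and integrability of $J$ is then imposed by hand; your Nijenhuis claim is correct, since a direct computation gives $N(\partial_\rho,\partial_\eta)=-(\partial_\eta P/P)\,\partial_\rho+\big(\partial_\rho(\rho Q)/(\rho Q)\big)\,\partial_\eta$, all other components being determined by the identities $N(JX,Y)=-JN(X,Y)$ and $N(X,JX)=0$, so integrability is exactly $P=P(\rho)$ and $\rho Q=G(\eta)$. What the paper's route buys is brevity (given the symplectic-coordinate machinery of Appendix \ref{app:symplcoords}) and the separated potential \eqref{eq:Calabi-sympl-potential}, which is reused later when computing the CCLP symplectic potential; what your route buys is an explicit identification of where holomorphic integrability enters (it is precisely the separation of variables), without invoking the Hessian property beyond the block form \eqref{h_toric}. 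Two small points to make explicit in your converse: you should note that $J:=h^{-1}X^{(1)}$ built from \eqref{h-Calabi} satisfies $J^2=-1$ (the same algebra as in your forward step, valid for any $P,Q$), and that compatibility $h(J\cdot,J\cdot)=h$ then follows automatically; with those remarks your argument is complete.
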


\begin{proof} The proof makes use of the symplectic potential and of the generic form \eqref{h_toric} for a toric K\"ahler metric.  First let us assume that we have a toric K\"ahler metric of Calabi type. Using \eqref{eq:Calabi-moment-maps} the orthogonality condition \eqref{rho-eta-orthoginal} becomes the following condition for the symplectic potential
\begin{equation}
0=h_{\rho\eta}=\rho(a_{i}+b_{i}\eta)b_{j}G^{ij}=\rho(a_{i}+b_{i}\eta)b_{j}\partial^{i}\partial^{j}g=\rho\partial_{\rho}(\rho^{-1}\partial_{\eta}g)\,,\label{eq:GrhoetaAux}
\end{equation}
which can be easily integrated to give 
\begin{equation}\label{eq:Calabi-sympl-potential}
g=g_{0}(\rho)+\rho g_{1}(\eta)\,,
\end{equation}
where $g_0(\rho), g_1(\eta)$ are arbitrary functions. A  computation then reveals that $h_{\rho\rho}=g_{0}''(\rho)$ , $h_{\eta\eta}=\rho g_{1}''(\eta)$ and hence defining 
\begin{equation}
F(\rho) :=\frac{\rho}{g_{0}''(\rho)}\,,\qquad G(\eta):=\frac{1}{g_{1}''(\eta)}\,,\label{eq:FGofAB}
\end{equation}
the $(\rho, \eta)$ components of the metric are as claimed in  \eqref{h-Calabi}. To find the $(\psi, \varphi)$ components we can use that the $(\phi^i ,\phi^j)$ components are given by $G_{ij}$ and that this is the inverse matrix of the $(x_i, x_j)$ components  $G^{ij}=\partial^i \partial^j g$ which we now know.  A short computation then confirms that the metric indeed takes the form (\ref{h-Calabi}). Finally, it is easy to show the converse statement, namely that \eqref{h-Calabi} implies the Calabi property\footnote{In fact \eqref{h-Calabi} is enough to show the K\"ahler property. The K\"ahler form is given by \eqref{X1-Calabi}.}. To this end, notice that  \eqref{h-Calabi}, \eqref{X1-Calabi} have the form \eqref{h_toric},\eqref{KahlerForm-gen}  for moment maps $\rho, \rho\eta$ and the condition \eqref{rho-eta-orthoginal} is obviously satisfied.
\end{proof}

It is useful to  note that from the above proof we can extract the Gram matrix of Killing fields  in the original $2\pi$-periodic coordinates $\phi^{i}$, which we find is,
\begin{equation}\label{eq:GDD-Calabi}
G_{ij}=\frac{F(\rho)}{\rho}(a_{i}+b_{i}\eta)(a_{j}+b_{j}\eta)+\rho G(\eta)b_{i}b_{j}\,  .
\end{equation}
From this we can also deduce the following useful projections
\begin{equation}\label{Gproj}
\frac{G_{ij}\epsilon^{ik}\epsilon^{jl}b_{k}b_{l}}{\langle a,b\rangle^{2}}=\frac{F(\rho)}{\rho}\,,\qquad\frac{G_{ij}\epsilon^{ik}\epsilon^{jl}a_{k}b_{l}}{\langle a,b\rangle^{2}}=-\frac{\eta F(\rho)}{\rho}\,,\qquad\frac{G_{ij}\epsilon^{ik}\epsilon^{jl}a_{k}a_{l}}{\langle a,b\rangle^{2}}=\frac{\eta^2 F(\rho)}{\rho}+\rho G(\eta)\, ,
\end{equation}
which  in particular allow us to express $F(\rho), G(\eta)$ in terms of invariants of the K\"ahler base.   In fact, it is worth noting the following useful characterisation of toric K\"ahler metrics of Calabi type.
\begin{lemma}
\label{lem:calabi}
A toric K\"ahler metric with symplectic coordinates \eqref{eq:Calabi-moment-maps} and a Gram matrix of Killing fields $G_{ij}$ given by \eqref{eq:GDD-Calabi} for some $a_i, b_i, c_i, \rho, \eta, F(\rho), G(\eta)$,  must be of Calabi type \eqref{h-Calabi}.
\end{lemma}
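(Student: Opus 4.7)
The plan is to show directly that the toric K\"ahler line element $h = G^{ij}\,dx_i dx_j + G_{ij}\,d\phi^i d\phi^j$, when expressed in the coordinates $(\rho,\eta,\psi,\varphi)$ introduced by \eqref{eq:Calabi-moment-maps} and \eqref{eq:angles-Calabi}, collapses to the Calabi form \eqref{h-Calabi}. Since the preceding proposition established that \eqref{h-Calabi} characterises toric K\"ahler surfaces of Calabi type, this will conclude the argument.

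First I would invert $G_{ij}$. Writing $u_i := a_i + b_i\eta$ and $v_i := b_i$, the hypothesis \eqref{eq:GDD-Calabi} is the rank-one decomposition $G_{ij} = (F/\rho)\,u_i u_j + \rho G\,v_i v_j$, whose determinant equals $FG\langle a,b\rangle^2 \neq 0$. Introducing a dual basis $\{u^{*i},v^{*i}\}$ of $\mathbb{R}^2$ defined by $u^{*i}u_i = v^{*i}v_i = 1$ and $u^{*i}v_i = v^{*i}u_i = 0$ (concretely $u^{*i} = \epsilon^{ij}b_j/\langle a,b\rangle$ and $v^{*i} = -\epsilon^{ij}(a_j + b_j\eta)/\langle a,b\rangle$), the inverse takes the simple form $G^{ij} = (\rho/F)\,u^{*i}u^{*j} + (1/(\rho G))\,v^{*i}v^{*j}$.

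Second I would effect the coordinate change. From \eqref{eq:Calabi-moment-maps} one obtains $dx_i = u_i\,d\rho + \rho v_i\,d\eta$, whose Jacobian has determinant $\rho\langle a,b\rangle \neq 0$, so $(\rho,\eta)$ is a valid chart wherever $\rho \neq 0$. Contracting with the inverse above, the orthogonality relations kill all cross terms and leave $G^{ij}\,dx_i dx_j = (\rho/F(\rho))\,d\rho^2 + (\rho/G(\eta))\,d\eta^2$. Dually, \eqref{eq:angles-Calabi} gives $d\psi = a_i\,d\phi^i$ and $d\varphi = b_i\,d\phi^i$, so that $u_i\,d\phi^i = d\psi + \eta\,d\varphi$ and $v_i\,d\phi^i = d\varphi$; direct substitution into \eqref{eq:GDD-Calabi} then yields $G_{ij}\,d\phi^i d\phi^j = (F(\rho)/\rho)(d\psi + \eta\,d\varphi)^2 + \rho G(\eta)\,d\varphi^2$. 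Adding the two sectors reproduces \eqref{h-Calabi}.

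The argument is essentially linear algebra once the rank-one decomposition of $G_{ij}$ is recognised, so there is no genuine obstacle. The only points requiring a little care are the non-degeneracy of the coordinate change, which reduces to the assumptions $\langle a,b\rangle \neq 0$ and $\rho \neq 0$, and the construction of the dual basis tailored to $a_i$, $b_i$, $\eta$, after which the vanishing of every cross term between the $(\rho,\eta)$ and $(\psi,\varphi)$ blocks is automatic from $u^{*i}v_i = v^{*i}u_i = 0$. The orthogonality $d\rho\cdot d\eta = 0$ demanded by Definition \ref{def} is then manifest from the diagonal $(\rho,\eta)$-block of \eqref{h-Calabi}.
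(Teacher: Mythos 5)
Your proposal is correct and follows essentially the same route as the paper: both arguments amount to transforming the given data to the $(\rho,\eta,\psi,\varphi)$ chart, inverting the rank-one decomposition of $G_{ij}$, and verifying that the line element collapses to \eqref{h-Calabi}, after which the preceding Proposition gives the Calabi property. The only cosmetic difference is that you invert $G_{ij}$ in the original $\phi^i$ basis via a dual covector basis, whereas the paper first changes the Killing basis to $(\psi,\varphi)$ and inverts there before using $G^{ij}\,\td x_i \td x_j = \tilde{G}^{ij}\,\td\tilde{x}_i \td\tilde{x}_j$ with $\tilde{x}_1=\rho$, $\tilde{x}_2=\rho\eta$.
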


\begin{proof}
Change basis for the Killing fields by introducing coordinates  $\tilde{\phi}^1:= \psi, \tilde{\phi}^2:= \varphi$ defined by \eqref{eq:angles-Calabi}. Then the Gram matrix  \eqref{eq:GDD-Calabi}  in this new basis $\tilde{G}_{ij}$  coincides with the  $(\psi, \varphi)$  components of the metric \eqref{h-Calabi}.  The inverse matrix $\tilde{G}^{ij}$ is thus determined in terms of $\rho, \eta, F(\rho), G(\eta)$.  But $G^{ij}\td x_i \td x_j = \tilde{G}^{ij} \td \tilde{x}_i \td \tilde{x}_j$ where the symplectic coordinates $\tilde{x}_1= \rho$, $\tilde{x}_2 = \rho \eta$ and a short computation using the already determined $\tilde{G}^{ij}$ gives the $(\rho, \eta)$ components of  \eqref{h-Calabi} as required.
\end{proof}

For later use it is convenient to record the expressions for the Ricci potential and the Ricci scalar of the K\"ahler metrics (\ref{h-Calabi}) which are given by
\begin{equation}\label{Cal-useful}
P=-\frac{1}{2}\frac{F'(\rho)}{\rho}(\td\psi+\eta\td\varphi)-\frac{1}{2}G'(\eta)\td\varphi\,\qquad \qquad
R=-\frac{F''(\rho)+G''(\eta)}{\rho}\, ,
\end{equation}
respectively.

The parametrisation of the Calabi metric (\ref{h-Calabi}) (and hence the functions $F(\rho), G(\eta)$)  and the  K\"ahler form \eqref{X1-Calabi}  in the coordinates $(\rho,\eta,\psi,\varphi)$ is far from unique. In fact, the form of \eqref{h-Calabi} and \eqref{X1-Calabi} remains invariant under the following constant rescalings of $\rho$ and affine transformations of $\eta$:
\begin{equation}\label{Calabi-symmetries}
\rho\to K_{\rho}\rho\,,\qquad\psi\to K_{\rho}^{-1}(\psi-C_{\eta}K_{\eta}^{-1}\varphi)\,,\qquad\eta\to K_{\eta}\eta+C_{\eta}\,,\qquad\varphi\to K_{\rho}^{-1}K_{\eta}^{-1}\varphi\,,
\end{equation}
where $K_{\rho},K_{\eta}$ are non-zero real constants and $C_{\eta}\in\mathbb{R}$ \footnote{Of course, the functions should also transform according to  $F(\rho)\to K_{\rho}^{3}F(\rho)$ and  $G(\eta)\to K_{\rho}K_{\eta}^{2}G(\eta)$.
}.
In turn, \eqref{eq:angles-Calabi} imply that these transformations also act on $a_{i}$ and $b_{i}$ as 
\begin{equation}\label{ab-transform}
a_{i}\to K_{\rho}^{-1}(a_{i}-C_{\eta}K_{\eta}^{-1}b_{i})\,,\qquad b_{i}\to K_{\rho}^{-1}K_{\eta}^{-1}b_{i}\,.
\end{equation}
Note that these transformations include discrete symmetries which flip independently the signs of $\rho$ and $\eta$.

From the five-dimensional point of view, we are really only interested in the K\"ahler base metric up to the rescalings \eqref{time-resc}. For (\ref{h-Calabi}) these rescalings can be realised by
\begin{equation}\label{time-resc2}
\rho\to K^{-1}\rho\,,\qquad F(\rho)\to K^{-2}F(\rho)\,,
\end{equation}
with $\eta,\psi,\varphi$ and $G(\eta)$ unchanged. It is easy to see that the metric and the K\"ahler form are rescaled while the Calabi property is preserved. 

\subsection{Near-horizon analysis}\label{sec:Calabi-NH}

The analysis in subsections \ref{ax-hor-orb} and \ref{sec:NH} showed that each connected component of the horizon corresponds to an isolated point in the $x_{1}x_{2}$-plane (Lemma \ref{lem:horizon-point}) and furthermore determined the behaviour of the K\"ahler metric $h$ near each such point (eq.~\eqref{Gdd-NH} and its inverse).   We will now further restrict to toric K\"ahler metrics of Calabi type and determine the near-horizon behaviour of the coordinates $(\rho, \eta)$ and the associated functions $F(\rho), G(\eta)$.

\begin{lemma}
\label{lem:NHcalabi}
Consider a supersymmetric toric solution that is timelike outside a smooth horizon with compact cross-sections. If the K\"ahler base is of Calabi type \eqref{h-Calabi}  then, near the horizon, Calabi type coordinates are related to Gaussian null coordinates by 
\begin{equation}\label{eq:rhoeta-GNC-2}
\rho=\frac{\ell\kappa}{4\Delta_{2}(\hat{\eta})}\lambda+O(\lambda^{2})\,,\qquad\eta=\hat{\eta}+O(\lambda)\, ,
\end{equation}
where $\Delta_2(\hat{\eta})$ is given by (\ref{Deltas}), so in particular the horizon must be at $\rho=0$. Furthermore, we can always choose Calabi coordinates such that
\begin{equation}\label{eq:FG-Aux}
F(\rho)=\rho^{2}+O(\rho^3)\,,\qquad G(\eta)=(1-\eta^{2})\Delta_{1}(\eta)\, ,
\end{equation}
where $\Delta_1(\eta)$ is  given by (\ref{Deltas}).
\end{lemma}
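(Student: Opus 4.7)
The plan is to substitute the Calabi parametrisation \eqref{eq:Calabi-moment-maps}, \eqref{eq:GDD-Calabi} into the near-horizon expansions \eqref{xi-NH-expl}, \eqref{Gdd-NH} derived in Section \ref{sec:NH}, and use the Calabi gauge freedoms \eqref{Calabi-symmetries}--\eqref{ab-transform} to fix the representation canonically.

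First, because Lemma \ref{lem:horizon-point} places the horizon at $x_1=x_2=0$ while $\eta$ must vary nontrivially across its cross-section (parameterised by $\hat{\eta}$), the relation $x_i=c_i+\rho(a_i+b_i\eta)$ forces $c_i=0$ and $\rho\to 0$ at the horizon. Taking the ratio $x_1/x_2$ and comparing with \eqref{xi-NH-expl} then expresses $\eta$ as a M\"obius function of $\hat{\eta}$; the affine freedom in \eqref{Calabi-symmetries}--\eqref{ab-transform} allows one to set $\eta=\hat{\eta}+O(\lambda)$, which in turn aligns $(a_i+b_i\eta)$ with $\hat{\sigma}_i$ up to an overall multiplicative constant. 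The residual rescaling $\rho\to K_\rho\rho$ absorbs this constant and yields $\rho=\ell\kappa\lambda/(4\Delta_2(\hat{\eta}))+O(\lambda^2)$, as claimed.

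The functions $F$ and $G$ are then read off from the invariant projections \eqref{Gproj} of $G_{ij}$. The key simplification is that, with the above choice of $a_i,b_i$, the contraction $\hat{\tau}_i\,\epsilon^{ik}b_k$ vanishes identically in $\hat{\eta}$, so only the $\hat{\sigma}\hat{\sigma}$ piece of \eqref{Gdd-NH} contributes to $F(\rho)/\rho$, giving $F(\rho)=\rho^2+O(\rho^3)$. For the $(a,a)$-projection both $\hat{\sigma}\hat{\sigma}$ and $\hat{\tau}\hat{\tau}$ contribute; after subtracting the $\eta^2 F(\rho)/\rho$ piece, what remains matches $(1-\eta^2)\Delta_1(\eta)$ exactly. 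Because $G$ is a function of $\eta$ alone, this leading-order match in $\lambda$ determines $G(\eta)$ globally rather than only to leading order.

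The step I expect to require the most care is the simultaneous management of the Calabi gauge freedoms: the affine $(K_\eta,C_\eta)$ transformation of $\eta$, the multiplicative $K_\rho$ rescaling of $\rho$, and the induced \eqref{ab-transform} on $(a_i,b_i)$ must be combined so that $\eta=\hat{\eta}+O(\lambda)$, the leading coefficient of $\rho/\lambda$, and the leading coefficient of $F(\rho)/\rho^2$ are all fixed at once, while keeping the underlying $2\pi$-periodic generators $m_i=\partial_{\phi^i}$ unchanged.
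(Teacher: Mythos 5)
Your overall skeleton (match the Calabi parametrisation \eqref{eq:Calabi-moment-maps}, \eqref{eq:GDD-Calabi} against the near-horizon data \eqref{xi-NH-expl}, \eqref{Gdd-NH}, then read $F$ and $G$ off the projections \eqref{Gproj}) is the right one, and your final step computing $F(\rho)/\rho$ and $G$ from the projections is a perfectly good variant of the paper's direct matching. However, there are two genuine gaps at the start, and they are precisely where the real work lies. First, the claim that Lemma \ref{lem:horizon-point} ``forces $c_i=0$ and $\rho\to 0$'' because ``$\eta$ must vary nontrivially across the cross-section'' begs the question: $\rho$ and $\eta$ are functions of the $x_i$ alone, and if $c_i\neq 0$ with $\langle b,c\rangle\neq 0$ the horizon point $x=0$ is simply mapped to a point $(\rho_0,\eta_0)$ with $\rho_0\neq 0$ and $\eta\to\eta_0$ \emph{constant} on the cross-section; nothing a priori requires $\eta$ to vary with $\hat\eta$ there. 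That scenario is only excluded by a quantitative argument: degeneration of the Gram matrix, $G_{ij}=O(\lambda)$, forces $F(\rho_0)=G(\eta_0)=0$, and expanding \eqref{eq:GDD-Calabi} to first order in $\lambda$ then yields an $O(\lambda)$ coefficient spanned by $b_ib_j$ and $c_ic_j$ whose $\hat\eta$-dependence is linear, which can never reproduce the quadratic/cubic $\hat\eta$-dependence of \eqref{Gdd-NH}. This contradiction is what establishes $c_i=0$ in the paper, and your proposal supplies no substitute for it.

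Second, the assertion that the affine freedom \eqref{Calabi-symmetries}--\eqref{ab-transform} ``allows one to set $\eta=\hat\eta+O(\lambda)$'' is not available as a pure gauge choice. With $c_i=0$ the leading relation gives $\eta$ as a M\"obius function of $\hat\eta$, and it is affine in $\hat\eta$ (equivalently, the coefficient of $\lambda$ in $\rho$ is $\hat\eta$-independent up to the factor $1/\Delta_2(\hat\eta)$) if and only if $\ca^{2}b_{1}+\cb^{2}b_{2}=0$. But under \eqref{ab-transform} both components of $b_i$ rescale by the same factor, so the ratio $b_{2}/b_{1}$ is gauge invariant: the condition $b_{2}/b_{1}=-\ca^{2}/\cb^{2}$ is a \emph{constraint} that must be derived, and in the paper it is the first of the matching conditions obtained by comparing \eqref{eq:GDD-Calabi} with \eqref{Gdd-NH} at $O(\lambda)$ (only afterwards is the residual freedom used to fix $a_i,b_i$ as in \eqref{ab-fix}, which is what makes $\hat\tau_i\epsilon^{ik}b_k=0$ and lets your projection argument go through). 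So both \eqref{eq:rhoeta-GNC-2} and the normalisations in \eqref{eq:FG-Aux} hinge on Gram-matrix matching that your write-up treats as gauge fixing; as it stands the proposal assumes the two key facts it needs to prove.
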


\begin{proof}

The first step is to identify the relation between  GNC  $(\lambda, \hat{\eta})$ and the Calabi type coordinates $(\rho,\eta)$.   This is given by inverting \eqref{eq:Calabi-moment-maps}  to obtain
\begin{equation}
\rho = \frac{ \langle x, b \rangle + \langle b, c \rangle}{\langle a, b \rangle}, \qquad \eta = - \frac{\langle x,a\rangle+\langle a,c\rangle}{\langle x,b\rangle+\langle b,c\rangle}  \; , \label{rhoetax}
\end{equation}
 together with the near-horizon behaviour  \eqref{xi-NH-expl}  of the symplectic coordinates $x_i$. On the other hand, from our near-horizon analysis the Gram matrix (with respect to the K\" ahler base metric) of the toric Killing fields $G_{ij} = O(\lambda)$ as $\lambda \to 0$ (recall it is given by \eqref{Gdd-NH}), so  \eqref{Gproj} implies 
\begin{equation}\label{orders1}
\eta=O(1)\,,\qquad\frac{F(\rho)}{\rho}=O(\lambda)\,,\qquad\rho G(\eta)=O(\lambda)\,.
\end{equation}
We will now show that these relations imply that $c_i=0$.

Thus suppose, for contradiction, that $c_i$ is not identically zero.  If $\langle b, c \rangle=0$ then $b_i$ is a multiple of $c_i$ and since $\langle a, b \rangle \neq 0$ it follows that $\langle a, c \rangle \neq 0$. Then \eqref{xi-NH-expl} and \eqref{rhoetax} imply that $\eta$ is necessarily singular at the horizon contradicting (\ref{orders1}).  Therefore we must have $\langle b, c \rangle \neq 0$ in which case \eqref{xi-NH-expl}  and \eqref{rhoetax} give
\begin{align}\label{rhoeta-fake}
\rho & =\frac{\langle b,c\rangle}{\langle a,b\rangle}+\frac{1}{\langle a,b\rangle}\Big(\frac{1-\hat{\eta}}{\ca^{2}}b_{2}-\frac{1+\hat{\eta}}{\cb^{2}}b_{1}\Big)\frac{\ell\kappa\lambda}{4\Delta_{2}(\hat{\eta})}+O(\lambda^{2})\,,\nonumber \\
\eta & =-\frac{\langle a,c\rangle}{\langle b,c\rangle}+\frac{\langle a,b\rangle}{\langle b,c\rangle^{2}}\Big(\frac{1-\hat{\eta}}{\ca^{2}}c_{2}-\frac{1+\hat{\eta}}{\cb^{2}}c_{1}\Big)\frac{\ell\kappa\lambda}{4\Delta_{2}(\hat{\eta})}+O(\lambda^{2})\, .
\end{align}
Thus the horizon $\lambda= 0$ is mapped to a single point in $(\rho, \eta)$ coordinates given by  $\rho_{0}=\langle b,c\rangle / \langle a,b\rangle \neq 0$ , $\eta_{0}=-\langle a,c\rangle / \langle b,c\rangle \neq 0$. Then \eqref{orders1} implies that $F(\rho)=O(\lambda)$ and $G(\eta)=O(\lambda)$ and therefore $F(\rho_{0})=G(\eta_{0})=0$. Now expanding  \eqref{eq:GDD-Calabi} to first order in $\lambda$ we find
\begin{align}
G_{ij}  &=\Bigg[\frac{G'(\eta_{0})b_{i}b_{j}}{\langle b,c\rangle}\Big(\frac{1-\hat{\eta}}{\ca^{2}}c_{2}-\frac{1+\hat{\eta}}{\cb^{2}}c_{1}\Big) \Bigg.  \nonumber \\  &\qquad \Bigg.  +\frac{\langle a,b\rangle^{2}F'(\rho_{0})c_{i}c_{j}}{\langle b,c\rangle^{3}}\Big(\frac{1-\hat{\eta}}{\ca^{2}}b_{2}-\frac{1+\hat{\eta}}{\cb^{2}}b_{1}\Big)\Bigg]\frac{\ell\kappa\lambda}{4\Delta_{2}(\hat{\eta})}+O(\lambda^{2})\,.
\end{align}
Comparing to the corresponding expression for the near-horizon geometry \eqref{Gdd-NH} it is easy to see that the two can never match (the factor in the square brackets above is linear in $\hat\eta$ whereas the corresponding factor  in \eqref{Gdd-NH} is a quadratic or cubic polynomial in $\hat\eta$).
In order to avoid this contradiction we therefore conclude that
\begin{equation}\label{c=0}
c_i=0\, ,
\end{equation}
as claimed.

Now, combining (\ref{rhoetax}) with the near-horizon expansion  \eqref{xi-NH-expl} gives
\begin{align}\label{eq:rhoeta-GNC}
\rho & =\frac{1}{\langle a,b\rangle}\Big(\frac{1-\hat{\eta}}{\ca^{2}}b_{2}-\frac{1+\hat{\eta}}{\cb^{2}}b_{1}\Big)\frac{\ell\kappa}{4\Delta_{2}(\hat{\eta})}\lambda+O(\lambda^{2})\,,\nonumber \\
\eta & =-\frac{\ca^{2}a_{1}(1+\hat{\eta})-\cb^{2}a_{2}(1-\hat{\eta})}{\ca^{2}b_{1}(1+\hat{\eta})-\cb^{2}b_{2}(1-\hat{\eta})}+O(\lambda)\,.
\end{align}
In particular, we deduce that $\rho=0$ corresponds to the horizon as claimed in the Lemma.    Inverting, we deduce that $\lambda$ is a smooth function of $\rho$ at the horizon.

To complete the proof of our Lemma, we need to show that there exist functions $F(\rho)$
and $G(\eta)$ such that \eqref{eq:GDD-Calabi} reproduces \eqref{Gdd-NH} at $O(\lambda)$. From \eqref{orders1} and \eqref{eq:rhoeta-GNC} we see that  $F(\rho) = O(\lambda^2)$ is a smooth function of $\lambda$ and hence of $\rho$ at the horizon. We  therefore must have
\begin{equation}
F(\rho)=F_{2}\rho^{2}+O(\lambda^{3})\,,\qquad G(\eta)=G_{0}(\hat{\eta})+O(\lambda)\,,
\end{equation}
where
\begin{equation}\label{eq:F2-G0-Aux}
F_{2}=\frac{1}{2}F''(0)\,,\qquad 
G_{0}(\hat{\eta})=G\Big(-\frac{\ca^{2}a_{1}(1+\hat{\eta})-\cb^{2}a_{2}(1-\hat{\eta})}{\ca^{2}b_{1}(1+\hat{\eta})-\cb^{2}b_{2}(1-\hat{\eta})}\Big)\,.
\end{equation}
It is now not hard to see that \eqref{Gdd-NH} and  \eqref{eq:GDD-Calabi} match at $O(\lambda)$ if and only if 
\begin{equation}\label{matching-conditions}
\frac{b_{2}}{b_{1}}=-\frac{\ca^{2}}{\cb^{2}}\,,\qquad F_{2}=\frac{2}{a_{1}\ca^{2}+a_{2}\cb^{2}}\,,\qquad G_{0}(\hat{\eta})=-\frac{(1-\hat{\eta}^{2})\Delta_{1}(\hat{\eta})}{F_{2}\ca^{2}\cb^{2}b_{1}b_{2}}\,.
\end{equation}
We can now exploit the freedom in the choice of Calabi type coordinates \eqref{ab-transform} to fix 
\begin{equation}\label{ab-fix}
a_{1}=-b_{1}=\ca^{-2}\,,\qquad a_{2}=b_{2}=\cb^{-2}\,,
\end{equation}
which also fixes\footnote{Observe that $F_{2}$ is invariant under \eqref{time-resc2} and hence this freedom still remains unfixed.}
\begin{equation}\label{F2Gcub}
F_{2}=1\,,\qquad G_{0}(\hat{\eta})=(1-\hat{\eta}^{2})\Delta_{1}(\hat{\eta})\,,
\end{equation}
and  \eqref{eq:rhoeta-GNC} simplifies to \eqref{eq:rhoeta-GNC-2}. The second equation in \eqref{eq:F2-G0-Aux} now reduces to $G_{0}(\hat{\eta})=G(\hat\eta)$ which therefore determines the function $G(\eta)$ as claimed. 
\end{proof}

The above result applies to any component of the horizon and hence shows that the assumption that the base is of toric Calabi type implies the horizon must be connected, that is, it does not allow for multi-black holes.
It also shows that the near-horizon geometry has a base of Calabi type.  This follows from Lemma \ref{lem:calabi} together with the fact that the above proof shows that there are constants and functions $a_i, b_i, c_i, \rho, \eta, F(\rho), G(\eta)$ such that the symplectic coordinates $x_i$ and Gram matrix $G_{ij}$ are given by \eqref{eq:Calabi-moment-maps} and \eqref{eq:GDD-Calabi} to leading order in $\lambda$.   It is also interesting to emphasise that the near-horizon geometry completely fixes $G(\eta)$, leaving only one function $F(\rho)$ to be determined. We will see next that the latter is fixed by the supersymmetry constraint (\ref{PDE}).

\subsection{Uniqueness of supersymmetric CCLP black hole}\label{sec:Calabi-uniqueness}

In this section we will complete the proof of  Theorem \ref{main-theorem}. This first result completely determines the K\"ahler base.

\begin{lemma}
\label{lem:base}
Consider a supersymmetric toric solution that is timelike outside a smooth (analytic if $\ca^2=\cb^2$) horizon with compact cross-sections. If the K\"ahler base is of Calabi type, then it can be written as
\begin{equation}\label{Cal-CCLP}
h=\frac{\td\rho^{2}}{\rho+s\rho^{2}}+(\rho+s\rho^{2})\sigma^{2}+\frac{\rho}{(1-\eta^{2})\Delta_{1}(\eta)}(\td\eta^{2}+\tau^{2})\,,\qquad X^{(1)}=\td(\rho\sigma)\,,
\end{equation}
where $\Delta_{1}(\eta)$ is given by \eqref{Deltas} and we have defined
\begin{align}
\sigma & :=\td\psi+\eta\td\varphi=\frac{1-\eta}{\ca^{2}}\td\phi^{1}+\frac{1+\eta}{\cb^{2}}\td\phi^{2}\,,\nonumber \\
\tau & :=-(1-\eta^{2})\Delta_{1}(\eta)\td\varphi=(1-\eta^{2})\Delta_{1}(\eta)\Big(\frac{\td\phi^{1}}{\ca^{2}}-\frac{\td\phi^{2}}{\cb^{2}}\Big)\,,
\end{align}
and $s=
4/\ell^{2}$ or $s=0$, 
where $\ca^{2}$ and $\cb^{2}$ are constants that parameterise the near-horizon geometry that satisfy \eqref{kappa-constraint}.
\end{lemma}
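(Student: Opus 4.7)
By Lemma~\ref{lem:NHcalabi} we have already fixed $G(\eta)=(1-\eta^{2})\Delta_{1}(\eta)$ and we also know the boundary condition $F(\rho)=\rho^{2}+O(\rho^{3})$ near the horizon at $\rho=0$ (noting that in the coordinates of Lemma \ref{lem:base} the $F$ appearing in \eqref{h-Calabi} reads $F(\rho)=\rho^{2}+s\rho^{3}$). It therefore remains to determine $F(\rho)$ globally from the supersymmetry constraint \eqref{PDE}.

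The first step is to insert the Calabi ansatz \eqref{h-Calabi}, with $G$ replaced by its fixed form, into \eqref{PDE}. Starting from \eqref{Cal-useful} one can readily compute $R_{ab}$, $R_{ab}R^{ab}$, $\nabla^{2}R$ and $\nabla^{a}(R_{ab}\nabla^{b}R)$ in the adapted Calabi frame. Substituting these into \eqref{PDE} gives a PDE in the two variables $(\rho,\eta)$ for the single unknown $F(\rho)$. Because $G(\eta)=(1-\eta^{2})\Delta_{1}(\eta)$ is a cubic polynomial in $\eta$, after clearing denominators the whole expression becomes a polynomial in $\eta$ whose coefficients are differential polynomials in $F$, with the near-horizon parameters $\ca^{2},\cb^{2}$ entering only through $G$. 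Collecting by powers of $\eta$ therefore yields a finite \emph{overdetermined} system of ODEs for $F(\rho)$.

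The second step is to solve this system. In the generic case $\ca^{2}\neq\cb^{2}$ the expectation is that compatibility between the different $\eta$-coefficients is rigid enough that, together with the boundary datum $F(\rho)=\rho^{2}+O(\rho^{3})$, it forces $F$ to be polynomial of low degree. A natural polynomial ansatz $F(\rho)=\rho^{2}+s\rho^{3}+\cdots$ should close the system and pin down $s$ to lie in the discrete set $\{0,\,4/\ell^{2}\}$, the two values corresponding respectively to the near-horizon geometry and to the full CCLP black hole. Re-inserting the resulting $F$ and $G$ into \eqref{h-Calabi} and using $\sigma,\tau$ as defined in the statement then reproduces \eqref{Cal-CCLP}.

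The main obstacle is the degenerate case $\ca^{2}=\cb^{2}$, for which the base possesses an enhanced $SU(2)\times U(1)$ symmetry. Here several coefficients in the $\eta$-expansion of \eqref{PDE} become linearly dependent and the overdetermined ODE system collapses to essentially a single higher-order ODE for $F(\rho)$, precisely the 5th order ODE already studied in \cite{Lucietti:2021bbh}. Under the additional assumption that the horizon is analytic, the uniqueness result of \cite{Lucietti:2021bbh} applies and forces $F(\rho)=\rho^{2}+s\rho^{3}$ with $s\in\{0,\,4/\ell^{2}\}$, completing the proof. The rest of the Lemma (the explicit form of $\sigma,\tau$ in the $2\pi$-periodic basis) then follows from \eqref{ab-fix} together with \eqref{eq:angles-Calabi}.
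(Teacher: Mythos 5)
Your strategy is the same as the paper's (fix $G(\eta)$ from Lemma \ref{lem:NHcalabi}, substitute the Calabi ansatz into \eqref{PDE}, expand in powers of $\eta$, and treat $\ca^2\neq\cb^2$ and $\ca^2=\cb^2$ separately, invoking the 5th-order ODE and analyticity of \cite{Lucietti:2021bbh} in the degenerate case), but in the generic case your argument has a genuine gap: you never actually show that the overdetermined system forces $F$ to be the stated cubic. Saying that ``the expectation is that compatibility \ldots forces $F$ to be polynomial of low degree'' and then checking a polynomial \emph{ansatz} only establishes that $F=\rho^2+s\rho^3$ \emph{is} a solution, not that it is the \emph{only} one, which is what the Lemma (and the uniqueness theorem it feeds) requires. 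The paper closes this by an explicit computation: for $\ca^2\neq\cb^2$ the left-hand side of \eqref{PDE} is a quadratic polynomial in $\eta$, and the vanishing of its $\eta^2$ coefficient alone is the Euler equation $\rho^{2}F''(\rho)-4\rho F'(\rho)+6F(\rho)=0$, whose general solution is $F=F_{2}\rho^{2}+F_{3}\rho^{3}$; one then verifies that with this $F$ the remaining coefficients vanish identically. That single linear second-order ODE is what delivers uniqueness of the functional form, and it is absent from your sketch.

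A second, smaller inaccuracy: the constraint does \emph{not} ``pin down $s$ to lie in the discrete set $\{0,4/\ell^{2}\}$.'' The cubic coefficient $F_{3}$ is a free integration constant; the near-horizon matching \eqref{F2Gcub} fixes only $F_{2}=1$, and $F_{3}$ is then either zero (near-horizon geometry) or nonzero, in which case the rescaling freedom \eqref{time-resc2} of the supersymmetric Killing field ($F_{3}\to K F_{3}$) is used to normalise it to the convenient value $s=4/\ell^{2}$. So the dichotomy $s\in\{0,4/\ell^{2}\}$ is a gauge choice, not an output of the ODE system; as written, your argument would leave a one-parameter family unaccounted for. The $\ca^2=\cb^2$ branch of your proposal is fine in outline, matching the paper's reduction to the 5th-order ODE solved by Taylor expansion under the analyticity hypothesis.
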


\begin{proof} 
Recall that the near-horizon analysis in Lemma \ref{lem:NHcalabi} completely determines the function $G(\eta)$ to be a cubic that is determined by the near-horizon parameters $\ca^{2}$ and $\cb^{2}$.   We will show that the integrability condition \eqref{PDE} for Calabi type metrics (\ref{h-Calabi}) with $G(\eta)$ given by \eqref{eq:FG-Aux} can be solved for the other function $F(\rho)$.  It turns out that the cases $\ca^{2}\neq\cb^{2}$ and $\ca^{2}=\cb^{2}$ should  be treated separately since in the latter case $G(\eta)$ is quadratic rather than cubic.

For $\ca^{2}\neq\cb^{2}$ the l.h.s. of the constraint \eqref{PDE} reduces to a quadratic polynomial in $\eta$ with coefficients that depend on $F(\rho)$ and its derivatives.  Requiring that the coefficient of $\eta^{2}$ vanishes is equivalent to
\begin{equation}
\rho^{2}F''(\rho)-4\rho F'(\rho)+6F(\rho)=0\,,
\end{equation}
which has a general solution given by  $F(\rho)=F_{2}\rho^{2}+F_{3}\rho^{3}$ where $F_{2}$ and $F_{3}$ are integration constants. With this $F(\rho)$  the constraint \eqref{PDE} then vanishes identically \footnote{In fact this is the unique solution to \eqref{PDE} for cubic $G(\eta)$.}.  Next \eqref{F2Gcub} fixes $F_{2}=1$.  The constant  $F_{3}$ under the scalings \eqref{time-resc2} transforms as  $F_{3}\to K F_{3}$ and so we can use this freedom to fix $F_3$ to a convenient value which we denote by $s$. The cases $s\neq 0$ and $s=0$ are qualitatively different and  it is convenient to fix the former case so $s= 4/\ell^{2}$ \footnote{This is a convenient choice since then $f\to1$ as $\rho\to\infty$, see \eqref{f-Calabi} below.}.  Therefore, we obtain
\begin{equation}\label{eq:F-cubic}
F(\rho)=\rho^{2} +s \rho^{3}
\end{equation}
which completes the proof for the case $\ca^{2}\neq\cb^{2}$.

We now turn to the case $\ca^{2}=\cb^{2}$ which has been previously analysed in a general study of supersymmetric solutions with $SU(2)$ symmetry~\cite{Lucietti:2021bbh}. For completeness we will  reproduce the relevant arguments in the current framework. Substituting $G(\eta)$ with $\ca^{2}=\cb^{2}$ into \eqref{PDE} results in a complicated non-linear ODE for $F(\rho)$ (all $\eta$ dependence cancels in this case). This ODE is more conveniently written in terms of the function
\begin{equation}\label{FtoFcal}
\cf(\rho) :=\ca^{-2}\rho^{-2}F(\rho)\,,
\end{equation}
in terms of which the near-horizon behaviour \eqref{eq:FG-Aux} becomes
\begin{equation}\label{bdy-cond-calF}
\cf(\rho)=\ca^{-2}+O(\rho)\,.
\end{equation}
This  ODE can  be written as
\begin{equation}\label{ODE-calF1}
\Bigg(\cf^{2}\Big(12+4\cf+9\rho^{2}\big(\rho(\rho\cf)''\big)''\Big)-4\rho^{2}\cf'^{2}(\rho\cf'-3)+12\rho\cf(\rho\cf'-\cf-2)(\rho\cf')'\Bigg)''=0\,.
\end{equation}
We can integrate twice and fix the integration constants using \eqref{bdy-cond-calF} (the quantity inside the big brackets has no terms of linear order in $\rho$) to get
\begin{equation}\label{ODE-calF2}
\cf^{2}\Big(12+4\cf+9\rho^{2}\big(\rho(\rho\cf)''\big)''\Big)-4\rho^{2}\cf'^{2}(\rho\cf'-3)+12\rho\cf(\rho\cf'-\cf-2)(\rho\cf')'=4\ca^{-4}(3+\ca^{-2})\,.
\end{equation}
We will show that the only \emph{analytic} solution to \eqref{ODE-calF2} satisfying the black hole boundary condition \eqref{bdy-cond-calF} with $0<\ca^{2}<1$, is the linear function
\begin{equation}\label{uniqueSU2}
\cf(\rho)=\ca^{-2}+\cf_{1}\rho\,.
\end{equation}
To this end, we Taylor expand
\begin{equation}
\cf(\rho)=\ca^{-2}+\sum_{n=1}^{\infty}\cf_{n}\rho^{n}\,,
\end{equation}
and we will show by induction that \eqref{ODE-calF2} implies $\cf_{n}=0$ for all $n\geq 2$.
First we note that \eqref{ODE-calF2}  at order $O(\rho^{2})$ gives
\begin{equation}
(1-\cA^2)\cf_{2}=0\,,
\end{equation}
and since $0<\ca^{2}<1$ we find $\cf_{2}=0$.  Thus now assume that  $\cf_{2},\ldots,\cf_{n-1}$ all vanish for some $n \geq 3$. Then \eqref{ODE-calF2} gives 
\begin{equation}
3 (n^2-1) \Big(  3 (n^2-4)+ 8(1-\ca^2) \Big) \cf_n \rho^n+ O(\rho^{n+1})=0\,,
\end{equation}
and since $0<\ca^{2}<1$ we must have $\cf_{n}=0$. It follows by induction that $\cf_{n}=0$ for all $n\geq 3$, so we conclude that \eqref{uniqueSU2} is the unique analytic solution to \eqref{ODE-calF1} with boundary conditions \eqref{bdy-cond-calF}. In terms of $F(\rho)$, which is given by \eqref{FtoFcal}, we deduce that $F(\rho)= \rho^2 +F_3 \rho^3$ and $F_3$ can be fixed exactly as in the $\ca^{2}\neq\cb^{2}$ case to get (\ref{eq:F-cubic}).
\end{proof}

\begin{cor}
The  norm of the supersymmetric Killing field $V$ is given by
\begin{equation}\label{f-Calabi} 
f=\frac{12}{\ell^{2}}\,\frac{\rho}{3s\rho+\Delta_{2}(\eta)}\, .
\end{equation}
In particular, $V$ is strictly timelike outside the horizon.
\end{cor}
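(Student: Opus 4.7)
The plan is to compute $f$ directly from the scalar curvature of the K\"ahler base, using the general formula $f^{-1} = -\tfrac{\ell^{2}}{24} R$ stated in (\ref{fGp}) together with the Calabi form of the metric fixed by Lemma \ref{lem:base}. For a Calabi type K\"ahler surface the scalar curvature has already been recorded in (\ref{Cal-useful}) as $R = -(F''(\rho) + G''(\eta))/\rho$, so the corollary reduces to inserting the explicit $F(\rho) = \rho^{2} + s\rho^{3}$ and $G(\eta) = (1-\eta^{2})\Delta_{1}(\eta)$ and simplifying.

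Concretely, I would first differentiate $F$ twice to obtain $F''(\rho) = 2 + 6s\rho$. Then, expanding $G(\eta)$ as a cubic polynomial in $\eta$ via the definition of $\Delta_{1}$ in (\ref{Deltas}) and differentiating twice, a short calculation gives
\begin{equation}
G''(\eta) = -(\ca^{2} + \cb^{2}) - 3(\ca^{2}-\cb^{2})\eta.
\end{equation}
The key algebraic identity to verify is $G''(\eta) = 2\Delta_{2}(\eta) - 2$, which follows by direct inspection of the definition of $\Delta_{2}$ in (\ref{Deltas}). Combining, $F''(\rho) + G''(\eta) = 6s\rho + 2\Delta_{2}(\eta)$, and substituting into $f^{-1} = \tfrac{\ell^{2}}{24\rho}\bigl(F''(\rho) + G''(\eta)\bigr)$ yields the claimed expression (\ref{f-Calabi}) for $f$.

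For the second assertion, I would argue that the denominator in (\ref{f-Calabi}) is strictly positive outside the horizon. By Lemma \ref{lem:NHcalabi} the horizon corresponds to $\rho = 0$ and $\rho > 0$ in the exterior; by Lemma \ref{lem:base} one has $s \geq 0$ in both cases; and $\Delta_{2}(\eta) > 0$ on the relevant coordinate range of $\eta$ as noted following (\ref{Deltas}) (the range $-1 \leq \eta \leq 1$ is inherited from $\hat\eta$ at the horizon through the matching \eqref{eq:rhoeta-GNC-2}). Hence $f > 0$ outside the horizon, and from $\mathbf{g}(V,V) = -f^{2}$ in (\ref{Killinginv}) we conclude that $V$ is strictly timelike there.

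This is a direct calculation rather than a subtle argument, so no significant obstacle is expected; the main point is simply the neat matching between the cubic $G''$ determined by the near-horizon data and the linear function $\Delta_{2}$, which is precisely what makes the formula for $f$ collapse into such a compact form.
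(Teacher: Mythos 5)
Your proposal is correct and follows essentially the same route as the paper: $f$ is obtained by inserting the Calabi-type scalar curvature from \eqref{Cal-useful} with $F(\rho)=\rho^2+s\rho^3$ and $G(\eta)=(1-\eta^2)\Delta_1(\eta)$ into $f^{-1}=-\tfrac{\ell^2}{24}R$, and positivity follows from $\rho>0$ outside the horizon together with $\Delta_2(\eta)>0$. Your explicit verification of the identity $G''(\eta)=2\Delta_2(\eta)-2$ just spells out the algebra the paper leaves implicit.
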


Equation \eqref{f-Calabi}  immediately follows from the first equation in \eqref{fGp} and the second in \eqref{Cal-useful} for the K\"ahler base metric  \eqref{Cal-CCLP}. From  \eqref{eq:rhoeta-GNC-2} we see that the region exterior to the horizon $\lambda>0$ corresponds to $\rho>0$ and therefore from the above explicit expression we see that $f>0$ (recall $\Delta_2$ is defined in \eqref{Deltas} and is strictly positive).  We deduce that the K\"ahler base metric is globally defined on the region exterior to the horizon.

\begin{cor}
\label{cor:axis}
The axis set in Calabi type coordinates is given by $\eta=\pm 1$ and corresponds to the fixed points of $\partial_{\phi^1}$ and $\partial_{\phi^2}$ respectively.
\end{cor}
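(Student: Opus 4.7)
The plan is to compute the Gram matrix $G_{ij}$ of the $2\pi$-periodic Killing fields $\partial_{\phi^i}$ explicitly on the solution \eqref{Cal-CCLP}, locate the axis set as its degeneration locus using \eqref{gramK}, and then read off which Killing field vanishes on each component by inspecting the kernel of $G_{ij}$.

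First, I would substitute the coefficients \eqref{ab-fix} together with the functions $F(\rho)=\rho^{2}(1+s\rho)$ and $G(\eta)=(1-\eta^{2})\Delta_{1}(\eta)$ established in Lemma \ref{lem:base} into formula \eqref{eq:GDD-Calabi}. This writes $G_{ij}$ as a sum of two rank-one matrices $\tfrac{F(\rho)}{\rho}(a_{i}+b_{i}\eta)(a_{j}+b_{j}\eta)+\rho G(\eta)\,b_{i}b_{j}$, whose determinant is therefore $\det G_{ij}=F(\rho)\,G(\eta)\,\langle a,b\rangle^{2}$. By \eqref{gramK} the axis set coincides with the locus where this determinant vanishes, so the axis is the zero set of $F(\rho)G(\eta)$.

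Next I would restrict to the exterior region $\rho>0$, as identified in the corollary following Lemma \ref{lem:base}. For both allowed values $s=0$ and $s=4/\ell^{2}$, the factor $F(\rho)=\rho^{2}(1+s\rho)$ is strictly positive there, so the axis reduces to $G(\eta)=(1-\eta^{2})\Delta_{1}(\eta)=0$. Since $\Delta_{1}(\eta)$ is linear in $\eta$ with endpoint values $\Delta_{1}(1)=\mathcal{A}^{2}>0$ and $\Delta_{1}(-1)=\mathcal{B}^{2}>0$, it is strictly positive throughout $[-1,1]$; therefore the axis set is exactly $\eta=\pm 1$.

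Finally I would identify the vanishing Killing field at each endpoint. At $\eta=1$, the choice \eqref{ab-fix} gives $a_{1}+b_{1}=0$ while $G(1)=0$, so both contributions to $G_{i1}$ vanish, leaving $G_{ij}|_{\eta=1}=\mathrm{diag}(0,\,4F(\rho)/(\rho\mathcal{B}^{4}))$ with a strictly positive $(2,2)$-entry. The kernel is spanned by $(1,0)$, so it is $\partial_{\phi^{1}}$ that degenerates on $\eta=1$. An entirely symmetric computation at $\eta=-1$ using $a_{2}-b_{2}=0$ and $G(-1)=0$ gives kernel spanned by $(0,1)$, whence $\partial_{\phi^{2}}$ vanishes there. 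The argument is a direct calculation, so I do not expect any genuine obstacle, beyond carefully tracking the positivity conditions dictated by $0<\mathcal{A}^{2},\mathcal{B}^{2}<1$ and $\rho>0$.
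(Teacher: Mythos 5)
Your proposal is correct and follows essentially the same route as the paper's proof: reduce the axis condition to the vanishing of $\det G_{ij}$, which for \eqref{Cal-CCLP} factorises as a positive constant times $F(\rho)G(\eta)$, use $F(\rho)>0$ on the exterior $\rho>0$ and $\Delta_1(\eta)>0$ to conclude the axis is $\eta=\pm 1$, and then read off the degenerating Killing field from the explicit rank-one form of $G_{ij}$ there (your constant $\langle a,b\rangle^2=4/(\ca^4\cb^4)$ is right, and its precise value is immaterial since only positivity is used). The one step to make explicit is the bridge from the spacetime notion of axis to the base Gram matrix: the paper argues that $f>0$ (preceding corollary) and smoothness of $\omega$ give $\omega_i v^i=0$ on an axis component where $v=v^i\partial_{\phi^i}$ vanishes, hence $G_{ij}v^j=0$ by \eqref{Killinginv}; your appeal to \eqref{gramK} achieves the same thing provided you note that a vanishing combination $v$ places $(0,v^1,v^2)$ in the kernel of the full Gram matrix $K_{\alpha\beta}$, so $\det K_{\alpha\beta}=0$ and hence $\det G_{ij}=0$ — as stated, "the axis set coincides with this vanishing locus by \eqref{gramK}" is a little too quick, since the axis is defined through the spacetime metric $\mathbf{g}(m_i,m_j)$ rather than through $G_{ij}$ directly.
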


\begin{proof}
Components of the axis are characterised by the vanishing of integer linear combinations $v= v^i \partial_{\phi^i}$ of the Killing fields.  By the previous corollary $f>0$ on the exterior region and therefore we can write the invariant $\mathbf{g}(m_i, m_j)$ on the exterior region as in (\ref{Killinginv}).   Furthermore, $\omega$ is a smooth 1-form on this region and hence $\iota_v \omega= \omega_i v^i=0$ on any axis component defined by $v=0$.
Thus on such a  component of the axis we deduce that $G_{ij} v^j=0$ and hence the K\"ahler base Gram matrix $G_{ij}$ is not full rank (to see this contract $\mathbf{g}(m_i, m_j)$ with $v^j$ and use (\ref{Killinginv})). Now for the K\"ahler metric \eqref{Cal-CCLP} one finds that the determinant of the Gram matrix in the basis  $\partial_{\phi^i}$ is $\det G_{ij}=4\ca^{4}\cb^{4}F(\rho)G(\eta)$. Therefore any component of the axis in the exterior region corresponds to $G(\eta)=0$, that is, $\eta= \pm 1$ (recall $F(\rho)>0$ away from the horizon).  From the explicit form of the K\"ahler metric \eqref{Cal-CCLP} it also follows that $\eta=1$ corresponds to the vanishing of $\partial_{\phi^1}$ and $\eta=-1$ corresponds to the vanishing of $\partial_{\phi^2}$.
\end{proof}

The next step in the uniqueness proof is to integrate  \eqref{dom} for $\omega$  and compute the gauge field from \eqref{maxwell}.  This is given by the following result.

\begin{lemma}  \label{lem:omA}
Consider a supersymmetric solution as in Lemma \ref{lem:base}.
The remaining metric data $\omega$ is given by
\begin{equation}\label{om-Calabi}
\omega=\Big(\frac{\ell^{3}s^{2}}{8}\rho+\frac{\ell^{3}s}{8}(1-\Delta_{1}(\eta))-\frac{\ell^{3}\Delta_{3}(\eta)}{48\rho}\Big)\sigma-\frac{\ell^{3}(\ca^{2}-\cb^{2})}{16}\Big(\frac{s}{\Delta_{1}(\eta)}-\frac{1}{\rho}\Big)\tau\,,
\end{equation}
and the gauge field $A$ is given (up to gauge transformations) by
\begin{equation}\label{A-Calabi}
A=\frac{\sqrt{3}}{2}f(\td t+\omega)-\frac{\ell}{2\sqrt{3}}\Big(1+\frac{3}{2}s\rho-\frac{1}{2}(\ca^{2}+\cb^{2})\Big)\sigma+\frac{\ell\sqrt{3}}{8}\frac{\ca^{2}-\cb^{2}}{\Delta_{1}(\eta)}\tau\, .
\end{equation}
\end{lemma}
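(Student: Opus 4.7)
The plan is to solve the master equation \eqref{dom} for $\omega$ directly on the now-explicit K\"ahler base \eqref{Cal-CCLP}, and then read off $A$ from \eqref{gf}.

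First I would assemble the known pieces on the Calabi base. The Ricci form potential $P$ and scalar curvature $R$ follow from \eqref{Cal-useful} with $F(\rho)=\rho^{2}+s\rho^{3}$ and $G(\eta)=(1-\eta^{2})\Delta_{1}(\eta)$, giving $f=-24/(\ell^{2}R)$ (consistent with \eqref{f-Calabi}) and the self-dual 2-form $G^{+}=-\tfrac{\ell}{2}(\mathcal{R}-\tfrac{1}{4}X^{(1)}R)$ in closed form via \eqref{fGp}. For $G^{-}$, the toric analysis of Section 2.2 already fixes $\lambda_{2}=0$, and $\lambda_{1}=\tfrac{1}{2}\nabla^{2}R+\tfrac{2}{3}R_{ab}R^{ab}-\tfrac{1}{3}R^{2}$ is computable from the base curvature, leaving only the invariant function $\lambda_{3}(\rho,\eta)$ undetermined.

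Next I would solve \eqref{dom} for $\omega$. The toric gauge fixing of Section 2.2 permits $\omega=\omega_{\psi}(\rho,\eta)\,\td\psi+\omega_{\varphi}(\rho,\eta)\,\td\varphi$, so that $\td\omega$ has no $\td\rho\wedge\td\eta$ or $\td\psi\wedge\td\varphi$ component, automatically consistent with $\lambda_{2}=0$ and with $G^{\pm}$ being $\pm$self-dual on the base. Matching the four remaining components of $\td\omega=f^{-1}(G^{+}+G^{-})$ in the basis $\{\td\rho\wedge\td\psi,\td\rho\wedge\td\varphi,\td\eta\wedge\td\psi,\td\eta\wedge\td\varphi\}$ produces first-order relations that express $\partial_{\rho}\omega_{\psi,\varphi}$ and $\partial_{\eta}\omega_{\psi,\varphi}$ in terms of $\lambda_{3}$ and known quantities, together with a closure condition which fixes $\lambda_{3}$. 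The Calabi separation of variables and the polynomial form of $F$ and $G$ turn everything into an elementary integration, yielding $\omega_{\psi},\omega_{\varphi}$ in closed form. Reverting to the $2\pi$-periodic basis through \eqref{eq:angles-Calabi} with the values \eqref{ab-fix} reproduces \eqref{om-Calabi}, the residual gauge freedom $\omega\to\omega+\td\lambda$ absorbing any additive closed 1-form.

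Finally, for the gauge field I would substitute into \eqref{gf} to get $A=\tfrac{\sqrt 3}{2}f(\td t+\omega)+\tfrac{\ell}{2\sqrt 3}P$. Expanding $F'(\rho)/\rho=2+3s\rho$ and $G'(\eta)$ in \eqref{Cal-useful} and rewriting $\td\varphi=-\tau/[(1-\eta^{2})\Delta_{1}]$, one finds that $\tfrac{\ell}{2\sqrt 3}P$ coincides with the $\sigma,\tau$ part of \eqref{A-Calabi} up to the exact 1-form
\begin{equation}
\td\chi=-\frac{\ell(\ca^{2}+\cb^{2})}{4\sqrt 3}\td\psi+\frac{\ell(\ca^{2}-\cb^{2})}{4\sqrt 3}\td\varphi\, ,
\end{equation}
which is eliminated by an electromagnetic gauge transformation $A\to A+\td\chi$, producing precisely the constant $-\tfrac{1}{2}(\ca^{2}+\cb^{2})$ inside the square bracket of \eqref{A-Calabi}. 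The main obstacle is the algebraic bookkeeping in the second step: one must verify that the first-order system for $\omega_{\psi},\omega_{\varphi},\lambda_{3}$ is consistent and integrable, but the polynomial structure of $F,G$ together with the Calabi decomposition guarantees that the integrability condition on $\lambda_{3}$ is satisfied and all quadratures close.
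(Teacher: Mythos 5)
Your overall route is the same as the paper's (integrate \eqref{dom} component by component on the explicit Calabi base with $\lambda_{2}=0$ and $\lambda_{3}$ as the unknown, then read off $A$ from \eqref{gf} up to a gauge shift), but there is a genuine gap at the point where you claim the quadrature "reproduces \eqref{om-Calabi}" with "the residual gauge freedom $\omega\to\omega+\td\lambda$ absorbing any additive closed 1-form." The integrability conditions of the first-order system do not fix $\lambda_{3}$ uniquely: they give $\sqrt{F(\rho)G(\eta)}\,\lambda_{3}=-3(\ca^{2}-\cb^{2})F(\rho)G(\eta)/\rho^{3}+\lambda_{3,0}$ with a free constant $\lambda_{3,0}$, and the subsequent integration of $\omega_{\psi},\omega_{\varphi}$ introduces two further constants $\omega_{1,0},\omega_{2,0}$. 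None of these ambiguities is pure gauge: the $\lambda_{3,0}$ contribution to $\omega$ involves $\log(s+\rho^{-1})$ and $\log\big(2\Delta_{1}(\eta)/(1\pm\eta)\big)$ terms, which are not closed 1-forms and are singular at the horizon and at the axes, while the constant pieces $\omega_{i,0}\,\td\phi^{i}$ could only be gauged away by a multivalued shift of $t$ (not an admissible $\lambda$ on $B$) and in any case would violate the requirement that $\omega_{i}=\iota_{m_{i}}\omega$ vanish where $m_{i}$ vanishes.

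The missing ingredient is therefore the boundary-condition argument that the paper supplies: $\lambda_{3,0}=0$ is forced by comparing with the near-horizon expansion \eqref{om-NH}, which shows $\omega_{i}$ must diverge exactly as $\rho^{-1}$ with the remainder smooth in $\rho$, excluding the logarithmic singularity; and $\omega_{1,0}=\omega_{2,0}=0$ is forced by axis regularity (Corollary \ref{cor:axis}), since $\omega_{1}$ must vanish at $\eta=1$ and $\omega_{2}$ at $\eta=-1$. Without these two steps your construction only produces a three-parameter family of local solutions containing \eqref{om-Calabi}, not \eqref{om-Calabi} itself. Your treatment of $A$ is fine: the constant shift you write is closed, and the statement only claims $A$ up to gauge, which matches the paper's fixing of the gauge by smoothness at the axes.
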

    \begin{proof} Recall the 1-form $\omega$ must take the form (\ref{om}). Therefore we need to integrate \eqref{dom} for $\omega=\omega_{\psi}\td\psi+\omega_{\varphi}\td\varphi$ where the r.h.s. is given by (\ref{fGp}) and (\ref{Gm}) for the K\"ahler base in Lemma \ref{lem:base}. In particular, this requires the ASD 2-forms (\ref{eq:X2}) and (\ref{eq:X3}), which in the Calabi type coordinates of Lemma \ref{lem:base} are 
    \begin{align}
    X^{(2)} &= - \frac{\rho \,\td \rho \wedge \td \eta}{\sqrt{F(\rho)G(\eta)}}+ \sqrt{F(\rho)G(\eta)}\, \td \psi \wedge \td \varphi  \\
    X^{(3)} &= \sqrt{\frac{F(\rho)}{G(\eta)} } (\td \psi+\eta \td \varphi)\wedge \td \eta+ \sqrt{\frac{G(\eta)}{F(\rho)}} \rho \,\td \rho \wedge \td \varphi  \; .
    \end{align}
 As mentioned below equation (\ref{om}), the form of $\omega$ forces $\lambda_2=0$, thus leaving just $\lambda_3$ as an undetermined function which must be solved for alongside $\omega_\psi, \omega_\phi$, and recall these functions are all invariant under the toric symmetry. 
 
 Explicitly, the PDE \eqref{dom} for $\omega$ reads  
\begin{eqnarray}\label{om-Cal}
\partial_{\rho}\omega_{\psi}=\frac{\ell^{3}}{48}\Big(\frac{\Delta_{3}(\eta)}{\rho^{2}}+6s^{2}\Big)\,, &  & \partial_{\rho}\omega_{\varphi}=\frac{\ell^{3}}{48}\Big((6\eta s^{2}+\eta\Delta_{3}(\eta)/\rho^{2})-\rho\sqrt{\frac{G(\eta)}{F(\rho)}}\lambda_{3}\Big)\,, \\
\partial_{\eta}\omega_{\psi}=\frac{\ell^{3}}{48}\sqrt{\frac{F(\rho)}{G(\eta)}}\lambda_{3}\,,\qquad\, &  & \partial_{\eta}\omega_{\varphi}=\frac{\ell^{3}}{48}\Big(\eta\sqrt{\frac{F(\rho)}{G(\eta)}}\lambda_{3}+6s\big(\Delta_{2}(\eta)+s\rho\big)+\frac{2\Delta_{2}(\eta)^{2}+\Delta_{3}(\eta)}{\rho}\Big)\,,  \nonumber
\end{eqnarray}
where recall $F(\rho)= \rho^2+ s \rho^3$ and $G(\eta)=(1-\eta^2) \Delta_1(\eta)$. The integrability conditions for each system of equations in \eqref{om-Cal} read respectively
\begin{align}\label{lambda3-eqs}
\partial_{\rho}\Big(\sqrt{F(\rho)G(\eta)}\lambda_{3}\Big) &=\frac{3(\ca^{2}-\cb^{2})G(\eta)}{\rho^{2}}\,, \\
\partial_{\eta}\Big(\sqrt{F(\rho)G(\eta)}\lambda_{3}\Big) &=2\Big(\frac{1}{\rho}+s\Big)\big(\Delta_{2}(\eta)^{2}+\Delta_{3}(\eta)\big)\,,  \nonumber
\end{align}
which, in turn, must have an integrability condition equivalent to \eqref{PDE}. The latter is guaranteed\footnote{One can easily check this by using the identity $3(\ca^{2}-\cb^{2})G'(\eta)=-2\big(\Delta_{2}(\eta)^{2}+\Delta_{3}(\eta)\big)$.}
to be satisfied since \eqref{Cal-CCLP} is a solution of it.  We can thus  integrate \eqref{lambda3-eqs} to find
\begin{equation}
\sqrt{F(\rho)G(\eta)}\lambda_{3}=-3(\ca^{2}-\cb^{2})\frac{F(\rho)G(\eta)}{\rho^{3}}+\lambda_{3,0}\,,
\end{equation}
where $\lambda_{3,0}$ is an integration constant. Now we can integrate \eqref{om-Cal} for $\omega$ and find
\begin{align}
\omega & =\Big(\frac{\ell^{3}s^{2}}{8}\rho+\frac{\ell^{3}s}{8}(1-\Delta_{1}(\eta))-\frac{\ell^{3}\Delta_{3}(\eta)}{48\rho}\Big)\sigma-\frac{\ell^{3}(\ca^{2}-\cb^{2})}{16}\Big(\frac{s}{\Delta_{1}(\eta)}-\frac{1}{\rho}\Big)\tau+\omega_{0}\nonumber \\
 & -\frac{\ell^{3}\lambda_{3,0}}{48}\Big\{\frac{\log(s+\rho^{-1})}{(1-\eta^{2})\Delta_{1}(\eta)}\tau+\frac{1}{\ca^{2}\cb^{2}}\Big[\log\Big(\frac{2\Delta_{1}(\eta)}{1+\eta}\Big)\td\phi^{1}-\log\Big(\frac{2\Delta_{1}(\eta)}{1-\eta}\Big)\td\phi^{2}\Big]\Big\}\,,
\end{align}
where $\omega_0=\omega_{1,0}\td\phi^{1}+\omega_{2,0}\td\phi^{2}$  and $\omega_{i,0}$ are integration constants. The above expression satisfies all the \emph{local} constraints required by supersymmetry. 

Now,  the near-horizon analysis shows that $\omega_i$ must take the form \eqref{om-NH} where the $O(1)$ terms are smooth in $\lambda$ and since $\rho$ must be a smooth function of $\lambda$ near the horizon satisfying \eqref{eq:rhoeta-GNC-2},  we deduce that $\omega_i$ must diverge as $\rho^{-1}$ with the $O(1)$ terms smooth in $\rho$. Therefore, comparing to the explicit expression above we see that in order to avoid logarithmic singularities at the horizon we must have $\lambda_{3,0}=0$. Furthermore, by Corollary \ref{cor:axis} and its proof we must have that $\omega_{1}$ (resp. $\omega_{2}$) vanishes at $\eta=1$ (resp. $\eta=-1$), which requires  $\omega_{1,0}=\omega_{2,0}=0$ thus establishing  \eqref{om-Calabi}.

The gauge field can be computed from (\ref{gf}) and the Ricci form potential \eqref{Cal-useful} which yields \eqref{A-Calabi}, where we have added a gauge transformation to ensure $A$ is smooth at the axes, i.e. we have fixed $A_1=0$ at $\eta=1$ and $A_2=0$ at $\eta=-1$.
\end{proof} 

We have now completely determined the general solution under our assumptions. In particular, the solution is given by Lemma \ref{lem:base},  \eqref{f-Calabi} and Lemma \ref{lem:omA}.   It remains to show that this solution is the supersymmetric CCLP black hole (or its near-horizon geometry): we provide a convenient form for this solution in Appendix \ref{sec:CCLP} which is parameterised by two constants $A^2, B^2$.
A simple computation shows that the solution with $s=4/\ell^{2}$ is indeed the CCLP black hole as given in Appendix \ref{sec:CCLP}, upon the coordinate change
\begin{equation}
\rho=\rcclp^{2}/4\,,\qquad\eta=\cos\vartheta\,,
\end{equation}
and the identification of parameters
\begin{equation}\label{AB-identification}
\ca^{2}=A^{2}\,,\qquad\cb^{2}=B^{2}\,. 
\end{equation}
Similarly, one can show that the $s=0$ solution corresponds to the near-horizon geometry of the CCLP black hole under the same identifications. This completes the proof of Theorem \ref{main-theorem}. 

\subsection{Geometry of the CCLP K\"ahler base}\label{sec:CCLPgeometry}

In order to highlight the potency of the symplectic potential, we will elaborate a little more on the geometry of the K\"ahler base of the CCLP black hole solution and its near-horizon geometry. We will use the form of the general black hole solution derived above in Calabi coordinates with the parameters identified with those of the CCLP solution \eqref{AB-identification}.

The symplectic coordinates \eqref{eq:Calabi-moment-maps}, with \eqref{ab-fix}, are related to the Calabi type coordinates by the transformation
\begin{equation}\label{xi-rhoeta}
x_{1}=\rho\frac{1-\eta}{A^{2}}\,,\qquad x_{2}=\rho\frac{1+\eta}{B^{2}}\,,
\end{equation}
and inverting this gives
\begin{equation}\label{inv-xi-rhoeta}
\rho=\tfrac{1}{2}(A^{2}x_{1}+B^{2}x_{2})\,,\qquad\eta=-\frac{A^{2}x_{1}-B^{2}x_{2}}{A^{2}x_{1}+B^{2}x_{2}}\,.
\end{equation}
We can now exploit \eqref{eq:Calabi-sympl-potential} and  \eqref{eq:FGofAB}  to deduce the symplectic potential of any toric K\"ahler base of Calabi type.  In our case $F(\rho)$ and $G(\eta)$ are given by \eqref{eq:F-cubic} and \eqref{eq:FG-Aux} respectively and integrating  \eqref{eq:FGofAB}  we find
\begin{align}
g_0(\rho) &=\rho \log \rho - (\rho+ s^{-1}) \log (1+ s \rho)   \\
g_1(\eta) &= \frac{1}{2 A^2} (1-\eta) \log (1-\eta) +\frac{1}{2B^2} (1+\eta) \log (1+\eta) - \frac{1}{A^2 B^2} \Delta_1(\eta) \log \Delta_1(\eta)  \; ,
\end{align}
up to linear terms in $\rho$ and $\eta$ respectively which have been chosen such that the limit $s\to 0$ agrees with the $s=0$ case (such linear terms are irrelevant since by  \eqref{eq:Calabi-sympl-potential} and \eqref{inv-xi-rhoeta} they correspond to linear terms in $x_i$). Substituting these into \eqref{eq:Calabi-sympl-potential}  and using \eqref{inv-xi-rhoeta} we find the symplectic potential in symplectic coordinates is given by
\begin{align}\label{CCLP-pot-gen}
g & =\frac{1}{2}x_{1}\log x_{1}+\frac{1}{2}x_{2}\log x_{2}-\frac{1}{2}(x_{1}+x_{2})\log(x_{1}+x_{2})+\frac{1}{2}(A^{2}x_{1}+B^{2}x_{2})\log(A^{2}x_{1}+B^{2}x_{2})\nonumber \\
 & -\frac{1}{2}(A^{2}x_{1}+B^{2}x_{2}+2/s)\Big(\log(A^{2}x_{1}+B^{2}x_{2}+2/s)+\log(s/2)\Big)\,,
\end{align}
up to linear terms in $x_i$  which we have chosen such that again the limit $s\to 0$ agrees with the $s=0$ case.  
These expressions agree (up to linear terms) with the symplectic potentials for the CCLP black hole \eqref{CCLP_pot} for $s=4/\ell^{2}$ and its near-horizon geometry \eqref{NHCCLP_pot}  for $s=0$, as they must. 

As previously mentioned, the interior of the orbit space for the CCLP solutions in symplectic coordinates is the quarter plane $x_1>0, x_2>0$,  with the boundaries $x_1=0$ and $x_2=0$ corresponding to the fixed points of $\partial_{\phi_1}$ and $\partial_{\phi_2}$ respectively, and the horizon at the origin $x_1=x_2=0$.  The form of the symplectic potential takes the canonical form $g= \tfrac{1}{2}\sum_{A} \ell_A \log \ell_A$ where $\ell_A=0$ are lines in symplectic coordinates, however, only the lines defined by the axes $x_1=0$ and $x_2=0$ correspond to the boundary of the orbit space and the remaining ones sit outside as depicted in Figure \ref{fig:CCLP}.
\begin{figure}[h!]
\centering
   \includegraphics[width=0.44\textwidth]{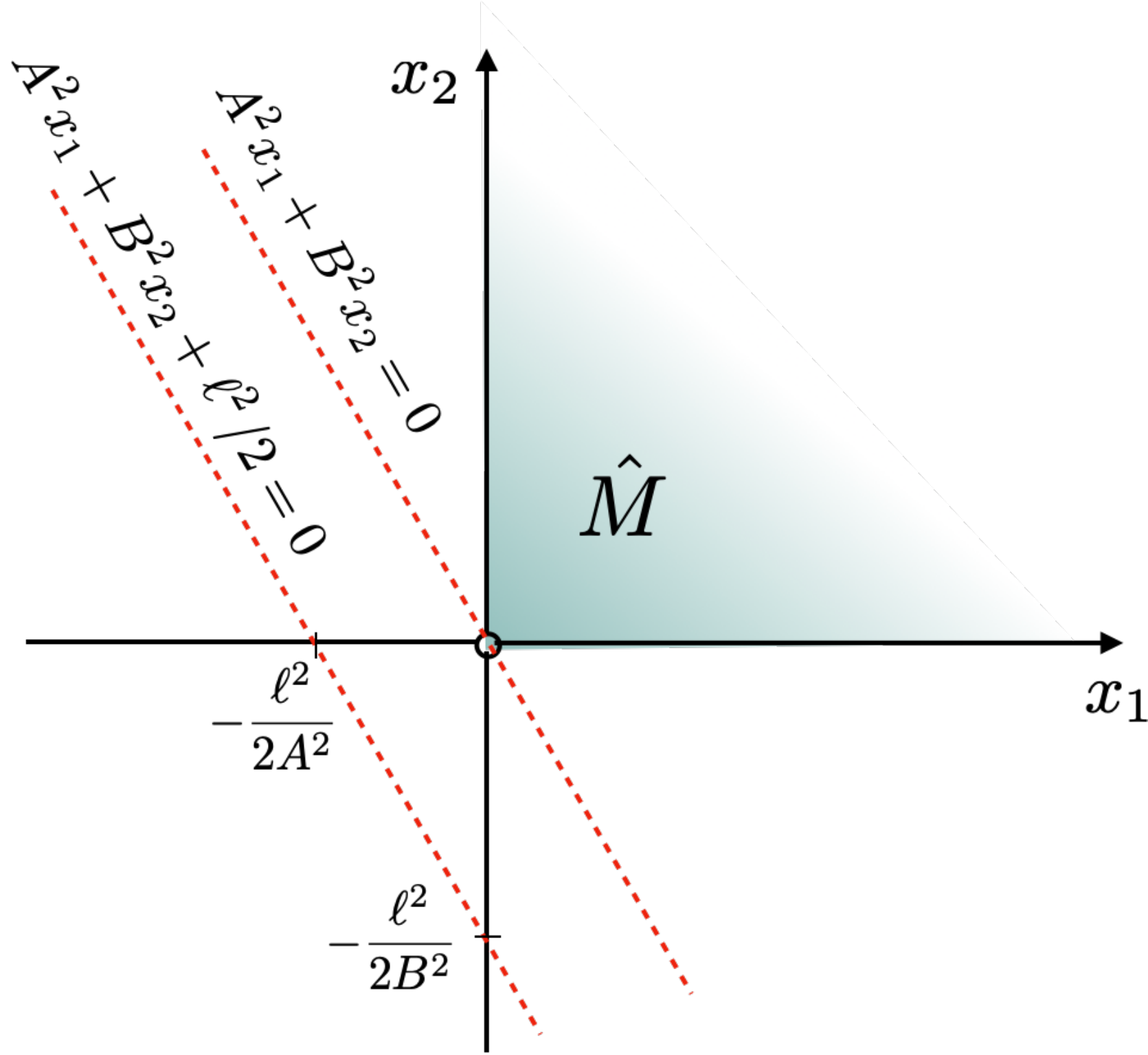}
\caption{The orbit space of  the CCLP black hole, together with the lines that define the symplectic potential.}
\label{fig:CCLP}
\end{figure}

It is instructive to see how the CCLP K\"ahler base is described in complex coordinates (see also~\cite{Figueras:2006xx}) and compare it with the symplectic formalism given above. The dictionary between the complex and symplectic formalism is detailed in Appendix \ref{app:symplcoords}. The real parts of the holomorphic coordinates $y^{i}+i \phi^{i}$ are related to the moment maps $x_{i}$ through $y^{i}=\partial^{i}g$ and from \eqref{CCLP-pot-gen} we find 
\begin{equation}\label{yi-xi}
e^{2y^{1}}=\frac{x_{1}}{x_{1}+x_{2}}\Big(\frac{2\rho}{1+s\rho}\Big)^{A^{2}}\,,\qquad e^{2y^{2}}=\frac{x_{2}}{x_{1}+x_{2}}\Big(\frac{2\rho}{1+s\rho}\Big)^{B^{2}}\,,
\end{equation}
where recall $\rho$ is given by  \eqref{inv-xi-rhoeta}. The axes $x_{1,2}=0$ correspond to $y^{1,2}\to-\infty$  with the horizon $x_{1}=x_{2}=0$ to $y^{1}=y^{2}\to-\infty$.

Let us now determine the K\"ahler potential. To this end, observe that we can rewrite  (\ref{eq:xK}) as $\partial^{i}\kpot=G^{ij}x_{j}$ using the fact that $G^{ij}= \partial y^i/\partial x_j$ (the latter follows from  (\ref{eq:yx})). Hence, given the symplectic potential in symplectic coordinates and $G^{ij}=\partial^{i}\partial^{j}g$ we can use this to determine the K\"ahler potential as a function of the symplectic coordinates.  In the case at hand we find
\begin{equation}
\partial^{1}\kpot=\frac{A^{2}}{2(1+s\rho)}\,,\qquad\partial^{2}\kpot=\frac{B^{2}}{2(1+s\rho)}\,,
\end{equation}
which readily integrates to $\kpot=s^{-1}\log(1+s\rho)$, and note that the limit $s\to 0$ agrees with the $s=0$ case which is simply  $\kpot=\rho$.  The final step is to invert \eqref{yi-xi} to express $\kpot$ as a function of the (real part of) holomorphic coordinates $y^{1}$ and $y^{2}$.  Generically, it is not possible to invert \eqref{yi-xi} analytically\footnote{For the case $A^{2}=B^{2}$, which corresponds to the GR black hole, we can do this explicitly. We have
\begin{equation}
x_{1}=\frac{e^{2y^{1}}}{A^{2}}\frac{(e^{2y_{1}}+e^{2y_{2}})^{-1}}{(e^{2y_{1}}+e^{2y_{2}})^{-1/A^{2}}-s/2}\,,\qquad x_{2}=\frac{e^{2y^{2}}}{A^{2}}\frac{(e^{2y_{1}}+e^{2y_{2}})^{-1}}{(e^{2y_{1}}+e^{2y_{2}})^{-1/A^{2}}-s/2}\,,
\end{equation}
and the K\"ahler potential reads
\begin{equation}
\kpot=-\frac{1}{s}\log\Big(1-\frac{s}{2}(e^{2y_{1}}+e^{2y_{2}})^{A^{2}}\Big)\,.
\end{equation}},
however it determines $\rho$ implicitly as a function of $y^i$. The K\"ahler potential is therefore given  by 
\begin{equation}
\kpot=\frac{1}{s}\log\big(1+s\,\rho(y^{1},y^{2})\big)\,\qquad\text{with}\qquad\Big(\frac{1+s\rho}{2\rho}\Big)^{A^{2}}e^{2y_{1}}+\Big(\frac{1+s\rho}{2\rho}\Big)^{B^{2}}e^{2y_{2}}=1\,.
\end{equation}
Equivalent expressions for the K\"ahler potential were obtained in~\cite{Figueras:2006xx}.
It is worth contrasting this {\it  implicit} expression for the K\"ahler potential to the {\it explicit} expression  for the symplectic potential \eqref{CCLP-pot-gen}.

\section{Discussion}

In this paper we have proven a uniqueness theorem for the CCLP black hole~\cite{Chong:2005hr}, which is the most general known supersymmetric AdS$_5$ black hole solution in minimal gauged supergravity. Our main assumptions are the existence of a toric symmetry that commutes with supersymmetry and that the supersymmetric solution is timelike outside a horizon, which in particular implies the geometry is determined by a toric K\"ahler metric on the base space. Furthermore, we assumed that this K\"ahler metric is of Calabi type.  The latter assumption is particularly restrictive and allows us to completely determine the general solution in terms of the near-horizon geometry data.   This result is complementary to a previous uniqueness theorem for supersymmetric black holes with $SU(2)$ symmetry~\cite{Lucietti:2021bbh}; the two classifications overlap for the special case of $SU(2)\times U(1)$ invariant solutions where one obtains a uniqueness theorem for the  Gutowski-Reall black hole (or its near-horizon geometry).

It is interesting to note that our uniqueness theorem does not make any global assumptions on the spacetime and in particular does not make any assumptions on the asymptotics of the solution. Therefore, it rules out the possibility of black holes in this symmetry class with a smooth horizon in asymptotically locally AdS$_5$ spacetimes (other than trivial global quotients of the CCLP solution).  A class of asymptotically locally AdS$_5$ black hole solutions with a squashed $S^3$ boundary and $SU(2)\times U(1)$ symmetry have been constructed numerically \cite{Blazquez-Salcedo:2017kig, Blazquez-Salcedo:2017ghg, Cassani:2018mlh}, however, these have non-smooth horizons~\cite{Lucietti:2021bbh}. In fact, since  $SU(2)\times U(1)$ invariant solutions are automatically  toric  and of Calabi type,  one can also deduce from our theorem that these numerical solutions must have non-smooth horizons.   It is interesting that for supersymmetric black holes with a smooth horizon the near-horizon geometry determines the full solution uniquely and does not allow for more general conformal boundary metrics. While it is known that supersymmetry constrains the boundary geometry~\cite{Dumitrescu:2012ha, Klare:2012gn, BenettiGenolini:2016tsn,Papadimitriou:2017kzw,Ntokos:2021duk}, our uniqueness theorem shows that for supersymmetric solutions with smooth horizons the boundary geometry is even more constrained (at least for toric Calabi type K\"aher bases).  It would be interesting to understand this phenomenon from a holographic perspective. 

The proof of our uniqueness theorem uses the classification of near-horizon geometries in minimal gauged supergravity as essential input~\cite{Kunduri:2006uh}.  In that work, a candidate near-horizon geometry for a black ring with a conical singularity on the $S^2$ factor of the horizon was found, which uplifts to a warped AdS$_3$ solution of type IIB supergravity that can be made regular in 10-dimensions. Recently, it has been shown that the horizon of this solution in five-dimensions is an orbifold known as a spindle~\cite{Ferrero:2020laf}.  It is an interesting open problem to construct an asymptotically AdS$_5$ black spindle with such a near-horizon geometry.  Since our uniqueness proof is constructive, it would be interesting to investigate whether our methods can be used to determine the existence of such black spindles and even classify them.

It is worth emphasising that our analysis concerns only supersymmetric Lorentzian signature solutions with a smooth horizon. As noted in the introduction, the supersymmetric CCLP black hole solution can be obtained from a family of non-supersymmetric non-extremal black hole solutions~\cite{Chong:2005hr} by imposing the BPS condition as well as a non-linear relation among the charges $J_{1},J_{2}$ and $Q$. Relaxing the latter results in a 3-parameter family of solutions which are \emph{locally} supersymmetric~\cite{Chong:2005hr},\cite{Ntokos:2021duk} but contain naked closed timelike curves and hence are excluded from our analysis. In fact, the K\"ahler base of such solutions belongs to the orthotoric class~\cite{Cassani:2015upa}. In Euclidean signature, we can further relax the reality of the fields and obtain a class of complex BPS saddles~\cite{Cabo-Bizet:2018ehj}, which however cannot be analytically continued to Lorentzian signature. Nonetheless, these complexified solutions are important holographically since they are dual to the dominant saddles of the field theory localisation computation (see also~\cite{Aharony:2021zkr}). An interesting extension of our analysis would be the classification of solutions of such type and to determine whether asymptotically locally AdS solutions exist in this class.

The original motivation for this work was to investigate the existence of topologically non-trivial supersymmetric black holes, such as black lenses, black holes in bubbling spacetimes and multi-black holes in AdS$_5$ (which are all known in flat space).   To accommodate such topologies one must work in the general class of solutions with toric symmetry which we investigated in the first part of this paper. Unfortunately, we found that this problem reduces to a highly nontrivial problem in toric K\"ahler geometry.   We found that symplectic coordinates appear to be best adapted to describing such solutions, in terms of which the axes of symmetry and the horizons take a simple form (line segments and points respectively). Furthermore, it is also straightforward to determine the symplectic potential (which encodes the K\"ahler metric) near any axis or horizon component. Therefore we are able to write down the singular part of this potential for possible new solutions with non-trivial topology. Unfortunately, the problem  reduces to a very complicated nonlinear 8th order PDE for this symplectic potential which prevents us from addressing the existence of such configurations.  It would be interesting to study this PDE in more detail to determine if it possesses any hidden structure.  In any case, perhaps numerical methods could be employed to construct new solutions with such non-trivial topology.

It would also be interesting to investigate the classification of supersymmetric black holes in AdS$_5$ in theories beyond minimal supergravity. Numerical evidence for  hairy supersymmetric black holes in  truncations of supergravity that retain complex scalars has been recently obtained~\cite{Markeviciute:2018yal, Markeviciute:2018cqs}. Supersymmetric timelike solutions in such truncations are found to also possess  a K\"ahler base~\cite{Liu:2007rv}. Therefore, one may hope that similar methods to those used in~\cite{Lucietti:2021bbh} and the present paper can be used to investigate the classification of supersymmetric near-horizon geometries and black holes in such theories. We leave this interesting problem for the future. \\

\noindent {\bf Acknowledgements.}   JL and PN are supported by the Leverhulme Research Project Grant RPG-2019-355.  SO is supported by a Principal's Career Development Scholarship at the University of Edinburgh.

\appendix

\section{Toric K\"ahler manifolds in symplectic coordinates}
\label{app:symplcoords}

Here we review the construction of complex and symplectic coordinates for K\"ahler toric manifolds. This is described in~\cite{Abreu}, although we fill in several details.

A K\"ahler manifold $(M, g, J)$ is toric if there is a $T^2$-action which is an isometry and is Hamiltonian. Let $K_1, K_2$ denote the two commuting Killing fields that generate the $T^2$-action (in this section $g$ is the K\"ahler metric and $J$ is the K\"ahler form or complex structure as appropriate). The Hamiltonian condition means there are globally defined moment maps $x_i$, $i=1,2$, defined by
\be
\iota_{K_i} J=- \td x_i  \; .
\ee
This implies the K\"ahler form is preserved $\mathcal{L}_{K_i} J=0$ (locally the converse is also true since $J$ is closed).   Furthermore, closedness and invariance of $J$ implies
\be
c_{ij}:= \iota_{K_i} \iota_{K_j}J  \; ,
\ee
is a constant antisymmetric matrix. Now, from the definition of the moment maps we get $\mathcal{L}_{K_i} x_j = - c_{ij}$ and therefore since we may assume the $K_i$ to have periodic flows, the $x_i$ will not be periodic functions unless $c_{ij}=0$ (if $c_{ij} \neq 0$ the function $x_i$ is monotonic  along the orbit curves of $K_{j\neq i}$). Thus we will assume 
\be
c_{ij}=0 \; ,
\ee
henceforth.

Define vector fields $X_i = J K_i$ and note that
\be
g(K_i, X_j)= g(K_i, J K_j)= g_{ab}K_i^a J^b_{~c} K^c_j= J_{ac}K^a_i K_j^c=0  \; ,
\ee
where the last equality follows from $c_{ij}=0$.  Now, since $K_i$ span 2-spaces (away from fixed points), this shows that the $X_i$ span their 2d orthogonal complements.  The invariance of $J$ together with $[K_i, K_j]=0$ implies $[K_i, X_j]= \mathcal{L}_{K_i} J K_j=0$.  Integrability of $J$ is equivalent to the vanishing of the Nijenhuis tensor
\be
0=N(X,Y)= [X,Y]+ J ( [JX,Y]+ [X,JY]) - [JX, JY]  \; ,
\ee
for all vector fields $X,Y$. In particular,  $N(K_i,K_j)=0$ then reduces to $[X_i, X_j]=0$. This shows that the 2d orthogonal complements to $\text{span}(K_1, K_2)$ are integrable distributions. 

Thus we have a commuting frame $K_i, X_j$ (recall $K_i$ and $X_i$ span 2d spaces and their orthogonal complements so must form a frame for $M$).  The dual vectors $\omega^i, J \omega^i$,\footnote{It is easy to check that the dual vectors to $X_i= JK_i$ are $J\omega^i$.} defined by $\omega^i(K_j)= \delta^i_j$ and $\omega^i(X_j)=0$, are therefore closed so we can write $\omega^i= \td t^i$ and $J\omega^i= \td y^i$ for functions $t^i, y^i$.  Thus $y^i+i t^i$ are holomorphic coordinates on $M$, i.e. $J (\td y^i+ i \td t^i)= i (\td y^i+ i \td t^i)$.  The metric in the coordinates $(y^i, t^i)$ is
\be
g = F_{ij} \td y^i \td y^j+ F_{ij} \td t^i \td t^j  \; ,
\ee
where we have used (note $K_i = \partial_{t^i}, X_j = \partial_{y^j}$)
\be
F_{ij}:= g(K_i,K_j)= g(JK_i, JK_j)= g(X_i, X_j)  \; ,
\ee
and $g_{y^i t^j} = g(X_i, K_j)=0$ from above.  

Next, using $J\td y^i = - \td t^i$ etc it then easily follows that the K\"ahler form is
\be
J= F_{ij} \td y^i \wedge \td t^j  \; ,
\ee
and closedness of $J$ is equivalent to $F_{ij} \td y^i$ being closed, which is equivalent to $F_{ij}$ being the Hessian\footnote{To prove this, note that locally there are functions $f_i$ such that $\td f_i = F_{ij} \td y^j$ which is equivalent to $f_{i,j}= F_{ij}$. Hence $f_{[i,j]}=0$ by symmetry of $F_{ij}$. Thus  locally $f_i = \partial_i f$ for some function $f$, i.e. $F_{ij}= \partial_i \partial_j f$ as required.)}
\be
F_{ij}=\partial_{y^i} \partial_{y^j} \kpot  \; ,
\ee
of some function $\kpot(y)$. It turns out that $\kpot$ is the K\"ahler potential, i.e. $J= 2i \partial \bar{\partial} \kpot$ (in complex coordinates $z^i= y^i+ i t^i$ we have $\partial_{z^i}= (1/2)( \partial_{y^i}-i \partial_{t^i})$ and note that $\kpot$ depends only on $y^i$).  Therefore, the above gives a chart $(y^i, t^i)$ for toric K\"ahler manifolds, where $y^i+i t^i$ are holomorphic coordinates, which is adapted to the  Killing fields $K_i= \partial_{t^i}$ that generate the toric action. 

There is another natural chart for toric K\"ahler manifolds which is adapted to the symplectic structure rather than the complex structure.  This is easy to deduce from the complex coordinates above. First note that computing the moment maps we find
\be
\td x_i =  F_{ij} \td y^j = \td (\partial_{y^i}  \kpot) \implies x_i = \partial_{y^i} \kpot   \; ,  \label{eq:xK}
\ee
where we have fixed an additive constant of integration for $x_i$. The inverse is given by
\be
\td y^i = G^{ij} \td x_j  \; , \label{eq:yx}
\ee
where $G^{ij}(x)$ is the inverse matrix to $F_{ij}(y)$. Integrability of this requires $G_{ij}$ to also be the Hessian 
\be
G^{ij} = \partial_{x_i} \partial_{x_j} g  \; ,
\ee
of a function $g(x)$ called the symplectic potential and we get
\be
y^i= \partial_{x_i} g \; .
\ee
Therefore, we can change coordinates to $(y^i, t^i)\to (x_i, \phi^i)$ where $\phi^i=t^i$, in terms of which the metric and K\"ahler form are
\bea
&&g= G^{ij} \td x_i \td x_j+ G_{ij} \td \phi^i \td \phi^j  \; , \\
&&J= \td x_i \wedge \td \phi^i  \; ,
\eea
where $K_i = \partial/\partial \phi^i$ and $G_{ij}$ is the inverse of $G^{ij}$. Note that the K\"ahler form is simply in Darboux coordinates (thought of as a symplectic form).  The coordinates $(x_i, \phi^i)$ are called symplectic coordinates.

\section{CCLP black hole}\label{sec:CCLP}
The supersymmetric CCLP black hole is a two parameter family of solutions first found in~\cite{Chong:2005hr}.  The K\"ahler base for this solution was determined in~\cite{Kunduri:2006ek} and we present it here essentially in the form obtained there, with a few further simplifications.

The solution depends on two parameters $A^{2},B^{2}>0$ subject to $\kappa^{2}(A^{2},B^{2})>0$ where $\kappa^{2}$ is given by \eqref{kappa-constraint}. Note that this implies $A^{2},B^{2}<1$. The K\"ahler base metric and K\"ahler form of CCLP are simply given by 
 \begin{align}\label{CCLPmetr}
h & =\frac{\td \rcclp^{2}}{V(\rcclp)}+\frac{\rcclp^{2}}{4}\Big(\frac{\td\vartheta^{2}}{\Delta_{\vartheta}}+\Delta_{\vartheta}\sin^{2}\vartheta\td\varphi^{2}\Big)+\frac{\rcclp^{2}V(\rcclp)}{4}(\td\psi+\cos\vartheta\td\varphi)^{2}\,,\nonumber \\
X^{(1)} & =\td\big(\tfrac{1}{4}\rcclp^{2}(\td\psi+\cos\vartheta\td\varphi)\big)\,,
\end{align}
where $\Delta_{\vartheta}=A^{2}\cos^{2}(\vartheta/2)+B^{2}\sin^{2}(\vartheta/2)$ and $V=1+\frac{\rcclp^2}{\ell^2}$ for CCLP and $V=1$ for its near-horizon geometry (also supersymmetric solution). In terms of $2\pi$-periodic angles $\phi^{i}$, 
\begin{equation}\label{CCLPangles}
\psi=A^{-2}\phi^{1}+B^{-2}\phi^{2}\,,\qquad\varphi=-A^{-2}\phi^{1}+B^{-2}\phi^{2}\,.
\end{equation}
The other coordinate ranges are $r\geq 0$ and $0 \leq \vartheta \leq \pi$.

The five-dimensional metric is then given by \eqref{metricform} with 
\begin{equation}
f^{-1}=1-\frac{\ell^{2}}{\rcclp^{2}}\Big(\Delta_{\vartheta}-\frac{1+A^{2}+B^{2}}{3}\Big)\,,
\end{equation}
and
\begin{align}
\omega & =\Big(\frac{\rcclp^{2}}{2\ell}+\frac{\ell}{2}(1-\Delta_{\vartheta})-\frac{\ell^{3}}{12\rcclp^{2}}\big(6\Delta_{\vartheta}-A^{4}-B^{4}+A^{2}B^{2}-2(A^{2}+B^{2})-1\big)\Big)(\td\psi+\cos\vartheta\td\varphi)\nonumber \\
 & +\frac{\ell(A^{2}-B^{2})}{4}\Big(1-\frac{\ell^{2}\Delta_{\vartheta}}{\rcclp^{2}}\Big)\sin^{2}\vartheta\td\varphi\,,
\end{align}
and the gauge field is
\begin{equation}
A=\frac{\sqrt{3}}{2}f(\td t+\omega)-\frac{\ell}{2\sqrt{3}}\Big(1+\frac{3\rcclp^{2}}{2\ell^{2}}-\frac{1}{2}(A^{2}+B^{2})\Big)(\td\psi+\cos\vartheta\td\varphi)-\frac{\ell\sqrt{3}}{8}(A^{2}-B^{2})\sin^{2}\vartheta\td\varphi\,.
\end{equation}
It is interesting to note that the CCLP solution is much simpler in this coordinate system.

For $A^2=B^2$ we recover the GR solution with $\alpha=(2A)^{-1}=(2B)^{-1}$. Then \eqref{CCLPmetr} reduces to \eqref{GRmetr}  if we also rescale  $(\psi,\varphi)_{\text{GR}}=(A^{2}\psi,A^{2}\varphi)_{\text{CCLP}}$.

In order to compare the above expressions with the ones in the literature, we give the dictionary between the notation here and in~\cite{Kunduri:2006ek}. The latter describes the generalisation of the CCLP black hole in the STU model, which admits a consistent truncation to minimal gauged supergravity with
\begin{equation}
 \frac{2}{\sqrt{3}}A_{\text{here}}=A_{\text{there}}^{1}=A_{\text{there}}^{2}=A_{\text{there}}^{3}\,.
\end{equation}
Our parameters and coordinates are related to those in~\cite[section 2.3]{Kunduri:2006ek} as follows:
\begin{equation}
\ell_{\text{here}}=g_{\text{there}}^{-1},\quad(\rcclp/\ell)_{\text{here}}=\sinh(g\scclp)_{\text{there}}\,,\quad(\phi^{1},\phi^{2},\vartheta)_{\text{here}}=(-\phi,-\psi,2\theta)_{\text{there}}\,,
\end{equation}
\begin{equation}
(A^{2},B^{2})_{\text{here}}=(A^{2},B^{2})_{\text{there}}\,,\quad\Delta_{\vartheta}|_{\text{here}}=\frac{\Delta_{\theta}}{g^{2}\alpha^{2}}\Big|_{\text{there}\,}  \; .
\end{equation}

\section{Unified form of near-horizon geometry}\label{NH-comparison}

In subsection \ref{sec:NH} we presented the general near-horizon geometry in a form which treats the solution with generic toric symmetry and enhanced $SU(2)\times U(1)$ symmetry  on the same footing (these correspond to $\ca^{2}\neq \cb^{2}$ and $\ca^{2}= \cb^{2}$ respectively).   We will show how this unified form of the near-horizon geometry is related to the original form of these near-horizon geometries which treated the two cases separately~\cite{Kunduri:2006uh, Gutowski:2004yv}.

\subsection*{Generic toric symmetry}
This near-horizon geometry was derived in~\cite{Kunduri:2006uh} and we first recall the solution in their notation. The near-horizon data, i.e. the leading order of \eqref{NH-metric}, depend on three parameters $\Gamma_{0},\Gamma_{1},\Gamma_{2}$ and explicitly read ($\chi_{\text{here}}^{i}=x_{\text{there}}^{i}$)
\begin{align}\label{NH2006}
&\Delta^{(0)}= \frac{\Delta_0}{\Gamma^2} \,, \nonumber \\
&h^{(0)}= \Gamma^{-1} \gamma_{1i} \td \chi^i  -\Gamma^{-1} \td \Gamma \,, \nonumber \\
&\gamma^{(0)}= \frac{\ell^2 \Gamma \td \Gamma^2}{4P(\Gamma)}+ \gamma_{11} \left(\td \chi^1+ \frac{\gamma_{12}}{\gamma_{11}} \td \chi^2 \right)^2+ \frac{4 P(\Gamma)}{\ell^2 \Gamma \gamma_{11}}(\td \chi^2)^2 \,,
\end{align}
where
\begin{equation}
\gamma_{11}= C^2 \Gamma-\frac{\Delta_0^2}{\Gamma^2}, 
\qquad  \gamma_{12}= \frac{\Delta_0(\alpha_0-\Gamma)}{\Gamma^2}\,,
\end{equation}
and
\begin{equation}
 P(\Gamma)= \Gamma^3- \frac{C^2\ell^2}{4}(\Gamma-\alpha_0)^2- \frac{\Delta_0^2}{C^2} 
=(\Gamma-\Gamma_{0})(\Gamma-\Gamma_{1})(\Gamma-\Gamma_{2})\,.
\end{equation}
From the latter equation we can obtain the relation between $\Delta_0, C, \alpha_0$ and $\Gamma_{0},\Gamma_{1},\Gamma_{2}$
\bea
C^2= \frac{4}{\ell^2}(\Gamma_0+\Gamma_1+\Gamma_2),  \qquad \alpha_0= \frac{\Gamma_0\Gamma_1+ \Gamma_0\Gamma_2+\Gamma_1\Gamma_2}{2(\Gamma_0+\Gamma_1+\Gamma_2)}, \nonumber  \\
\Delta_0^2=\frac{4 \Gamma_0 \Gamma_1 \Gamma_2 (\Gamma_0+\Gamma_1+\Gamma_2)}{\ell^2}-\frac{(\Gamma_0 \Gamma_1+\Gamma_0\Gamma_2+\Gamma_1 \Gamma_2)^2}{\ell^2} \; .
\eea
Observe that \eqref{NH2006} is invariant under the rescalings 
\begin{equation}\label{gamma-resc}
\Gamma\to\tilde{K}\Gamma\,,\qquad\chi^{1}\to\tilde{K}^{-1}\chi^{1}\,,\qquad\chi^{2}\to\chi^{2}\,,\qquad\Gamma_{0,1,2}\to\tilde{K}\Gamma_{0,1,2}\,,
\end{equation}
where $\tilde{K}$ is a positive constant.  

The parameters $\Gamma_{0},\Gamma_{1},\Gamma_{2}$ are constrained by
\begin{equation}\label{gamma-constraints}
0<\Gamma_{0}<\Gamma_{1}<\Gamma_{2}\,,\qquad\Delta_{0}^{2}(\Gamma_{0},\Gamma_{1},\Gamma_{2})>0\,,
\end{equation}
and the coordinate range is $\Gamma_0 \leq  \Gamma \leq \Gamma_1$  with $P(\Gamma)>0$ in the interior. At the endpoints $\Gamma= \Gamma_0, \Gamma_1$  two different linear combinations of the biaxial Killing fields vanish and the metric generically has conical singularities at these points. The biaxial  Killing fields $\partial_{\chi^{i}}$ do not necessarily have closed orbits and are related to the Killing fields with fixed points  $m_{i}=\partial_{\hat{\phi}^{i}}$  by
\be\label{m-chi}
m_1 = -d_1 \left( \frac{\gamma_{12}(\Gamma_0)}{\gamma_{11}(\Gamma_0)} \partial_{\chi^1}- \partial_{\chi^2} \right), \qquad m_2 = -d_2 \left(  \frac{\gamma_{12}(\Gamma_1)}{\gamma_{11}(\Gamma_1)}  \partial_{\chi^1}- \partial_{\chi^2} \right)\,, 
\ee
where $m_1=0$ at $\Gamma=\Gamma_0$ and $m_2=0$ at $\Gamma=\Gamma_1$. In order to avoid conical singularities $m_i$ must have closed orbits.
The constants $d_{i}$ can be determined (up to signs) by requiring that $\hat{\phi}^{i}\sim\hat{\phi}^{i}+2\pi$ and that the metric has no conical singularities at these endpoints. We find
\bea\label{di}
d_1= -\frac{\ell \left(2 \Gamma_0^2+\Gamma_0 (\Gamma_1+\Gamma_2)-\Gamma_1 \Gamma_2\right)}{2 (\Gamma_0-\Gamma_1) (\Gamma_0-\Gamma_2)} \,, \nonumber\\
d_2= \frac{\ell (\Gamma_0 (\Gamma_1-\Gamma_2)+\Gamma_1 (2 \Gamma_1+\Gamma_2))}{2 (\Gamma_0-\Gamma_1) (\Gamma_1-\Gamma_2)} \,,
\eea
where we have chosen signs to facilitate our analysis in the main text.  The transformation to the $2\pi$-periodic angles  is  given by $\chi^{i}=\hat{\phi}^{j}A_{j}^{\; i}$ where $m_{i} =A_{i}^{\; j} \partial_{\chi^j}$ and the matrix $A$ can be read off the equations \eqref{m-chi} and \eqref{di}.   

It is also worth recording that from~\cite{Kunduri:2006uh} it can be deduced that 
\be\label{Z0-2006}
Z^{(0)}= -\frac{C^2 \ell (\Gamma- \alpha_0)}{2\Gamma} \td \chi^1+\frac{2\Delta_0}{\ell C^2 \Gamma} \td \chi^2  +\frac{\ell\td \Gamma}{2 \Gamma}  \; ,
\ee
where $Z^{(0)}$ is the leading order of the 1-form $Z$ which encodes the K\"ahler form in GNC  (\ref{eq:X1GNC}).

It is useful to express the near-horizon geometry in terms of quantities invariant under \eqref{gamma-resc}. To this end, we define a new coordinate 
\begin{equation}
\hat{\eta}:=-\frac{\Gamma-\Gamma_{0}+\Gamma-\Gamma_{1}}{\Gamma_{1}-\Gamma_{0}}\,,
\end{equation}
as well as new parameters
\begin{equation}\label{AB-defs}
\ca^{2}:=\frac{\Gamma_{2}-\Gamma_{0}}{\Gamma_{0}+\Gamma_{1}+\Gamma_{2}}\,,\qquad\cb^{2}:=\frac{\Gamma_{2}-\Gamma_{1}}{\Gamma_{0}+\Gamma_{1}+\Gamma_{2}}\,.
\end{equation}
Now  the coordinate range is $-1\leq \hat{\eta}\leq 1$ and the first equation in \eqref{gamma-constraints} gives $0<\cb^{2}<\ca^{2}<1$.  It is now straightforward to verify that the expressions \eqref{NH2006} and \eqref{Z0-2006} map to \eqref{NH-expl1}, \eqref{NH-expl2},  \eqref{NH-expl3} and \eqref{Z0}, while the second constraint in \eqref{gamma-constraints} maps to \eqref{kappa2}. Some useful relations in comparing are
\begin{equation}
\Delta_{2}(\hat{\eta})=\frac{3\Gamma}{\Gamma_{0}+\Gamma_{1}+\Gamma_{2}}\,,\qquad\kappa=\frac{3\ell\Delta_{0}}{(\Gamma_{0}+\Gamma_{1}+\Gamma_{2})^{2}}\,.
\end{equation}
Finally, we can extend the range of parameters in the region $\cb^{2}>\ca^{2}$ by observing that if we exchange
\begin{equation}\label{exchange}
\hat{\phi}^{1}\leftrightarrow\hat{\phi}^{2}\,,\qquad\hat{\eta}\leftrightarrow-\hat{\eta}\,,\qquad\mathcal{A}^{2}\leftrightarrow\mathcal{B}^{2}\,,
\end{equation}
we get identical near-horizon geometries as can be easily seen from the expressions in subsection \ref{sec:NH}. 

\subsection*{Enhanced symmetry}
The near-horizon geometry with $SU(2)\times U(1)$ rotational symmetry was first derived in~\cite{Gutowski:2004yv}. In the notation of~\cite{Lucietti:2021bbh} 
the near-horizon data is parametrised by a constant $\Delta^{(0)}>\sqrt{3}/\ell$ and  given by 
\begin{align}
h^{(0)} & =-\frac{3\Delta^{(0)}}{\ell(\Delta^{(0)2}-3/\ell^{2})}\hat{\sigma}_{3}\,,\nonumber \\
\gamma^{(0)} & =\frac{1}{\Delta^{(0)2}-3/\ell^{2}}(\hat{\sigma}_{1}^{2}+\hat{\sigma}_{2}^{2})+\frac{\Delta^{(0)2}}{(\Delta^{(0)2}-3/\ell^{2})^{2}}\hat{\sigma}_{3}^{2}\,,
\end{align}
where the right-invariant 1-forms read
\begin{equation}
\hat{\sigma}_{1}=\sin\hat{\psi}\td\hat{\vartheta}-\cos\hat{\psi}\sin\hat{\vartheta}\td\hat{\varphi}\,,\qquad\hat{\sigma}_{2}=\cos\hat{\psi}\td\hat{\vartheta}+\sin\hat{\psi}\sin\hat{\vartheta}\td\hat{\varphi}\,,\qquad\hat{\sigma}_{3}=\td\hat{\psi}+\cos\hat{\vartheta}\td\hat{\varphi}\,,
\end{equation}
in terms of the Euler angles $0\leq\hat{\psi}\leq4\pi$, $0\leq\hat{\varphi}\leq2\pi$ and $0\leq\hat{\vartheta}\leq\pi$. 

The above near-horizon data corresponds precisely to the $\cb^{2}=\ca^{2}$ special case of \eqref{NH-expl1}, \eqref{NH-expl2} and \eqref{NH-expl3}. Explicitly, the two are related by the change of coordinates
\begin{equation}
\hat{\psi}=\hat{\phi}^{1}+\hat{\phi}^{2}\,,\qquad\hat{\varphi}=-\hat{\phi}^{1}+\hat{\phi}^{2}\,,\qquad
\cos\hat{\vartheta}=\hat{\eta}\,, 
\end{equation}
and change of parameters
\begin{equation}
\Delta^{(0)2}=\frac{3}{\ell^{2}}\frac{1+3\ca^{2}}{1-\ca^{2}}\,.
\end{equation}
The parameter range $\Delta^{(0)}>\sqrt{3}/\ell$ maps to $0<\ca^{2}<1$. Furthermore, $Z^{(0)}=-(\ell/3)h^{(0)}$ (see~\cite{Gutowski:2004yv}) which agrees with \eqref{Z0} with $\cb^{2}=\ca^{2}$.

\end{document}